\newcommand{\Occ}{\operatorname{Occ}}
\newcommand{\per}{\operatorname{per}}
\newcommand{\HAM}{\mathsf{HD}}
\newcommand{\ED}{\mathsf{ED}}
\newcommand{\IDD}{\mathsf{IDD}}
\newcommand{\dd}{\mathinner{.\,.}}
\newcommand{\ceil}[1]{\lceil #1 \rceil}
\newcommand{\floor}[1]{\lfloor #1 \rfloor}
\newcommand{\sub}{\subseteq}
\newcommand{\sm}{\setminus}
\newcommand{\LCE}{\mathrm{LCE}}
\newcommand{\apLCE}[1]{{\LCE}_{\le #1}}
\newcommand{\bpLCE}{\overline{\LCE}{}}
\newcommand{\Oh}{\mathcal{O}}
\newcommand{\Ohtilde}{\tilde{\Oh}}
\newcommand{\Thetatilde}{\tilde{\Theta}}
\newcommand{\eps}{\varepsilon}
\renewcommand{\Pr}{\mathbb{P}}
\newcommand{\Exp}{\mathbb{E}}
\newtheorem{theorem}{Theorem}[section]
\newtheorem*{theorem*}{Theorem}
\newtheorem{lemma}[theorem]{Lemma}
\newtheorem*{lemma*}{Lemma}
\newtheorem{fact}[theorem]{Fact}
\newtheorem{claim}[theorem]{Claim}
\theoremstyle{definition}
\newtheorem{definition}[theorem]{Definition}
\begin{document}

\SetFuncSty{textsf}

\title{Sublinear-Time Algorithms for Computing \& Embedding\\ Gap Edit Distance}

\renewcommand\Affilfont{\normalsize}

\author[1,2]{Tomasz Kociumaka\thanks{Supported by ISF
  grants no. 1278/16 and 1926/19, by a BSF grant no.
  2018364, and by an ERC grant MPM under the EU's Horizon
  2020 Research and Innovation Programme (agreement no.
  683064).}}
\author[2]{Barna Saha\thanks{Supported by 
NSF grants no. 1652303 (NSF CAREER), 1934846 (NSF HDR TRIPODS), and 1909046, and by an Alfred P. Sloan fellowship award.}}

\affil[1]{Bar-Ilan University, Ramat Gan, Israel}
\affil[ ]{\vspace{-1ex}}
\affil[2]{University of California, Berkeley, USA}
\affil[ ]{\href{mailto:kociumaka@berkeley.edu}
  {\nolinkurl{kociumaka@berkeley.edu}}, \href{mailto:barnas@berkeley.edu}
  {\nolinkurl{barnas@berkeley.edu}}}
\date{\vspace{-1.5cm}}
\maketitle

\begin{abstract}
  In this paper, we design new sublinear-time algorithms for solving the gap edit distance problem and for embedding edit distance to Hamming distance.
  For the gap edit distance problem, we give an $\Ohtilde(\frac{n}{k}+k^2)$-time greedy algorithm that distinguishes between length-$n$ input strings with edit distance at most $k$ and those with edit distance exceeding ${(3k+5)}k$.
  This is an improvement and a simplification upon the result of Goldenberg, Krauthgamer, and Saha [FOCS 2019], where the $k$ vs $\Theta(k^2)$ gap edit distance problem is solved in $\Ohtilde(\frac{n}{k}+k^3)$ time.
  We further generalize our result to solve the $k$ vs $k'$ gap edit distance problem in time $\Ohtilde(\frac{nk}{k'}+k^2+ \frac{k^2}{k'}\sqrt{nk})$, strictly improving upon the previously known bound $\Ohtilde(\frac{nk}{k'}+k^3)$. 
  Finally, we show that if the input strings do not have long highly periodic substrings, then already the $k$ vs $(1+\eps)k$ gap edit distance problem can be solved in sublinear time.
  Specifically, if the strings contain no substring of length $\ell$ with period at most $2k$, then the running time we achieve is $\Ohtilde(\frac{n}{\eps^2 k}+k^2\ell)$.

  We further give the first sublinear-time probabilistic embedding of edit distance to Hamming distance.
  For any parameter $p$, our $\Ohtilde(\frac{n}{p})$-time procedure yields an embedding with distortion $\Oh(kp)$, where $k$ is the edit distance of the original strings.
  Specifically, the Hamming distance of the resultant strings is between $\frac{k-p+1}{p+1}$ and $\Oh(k^2)$ with good probability.
  This generalizes the linear-time embedding of Chakraborty, Goldenberg, and Koucký [STOC 2016], where the resultant Hamming distance is between $\frac k2$ and $\Oh(k^2)$. 
  Our algorithm is based on a random walk over samples, which we believe will find other applications in sublinear-time algorithms. 
 \end{abstract}

 \section{Introduction}
 The edit distance, also known as the Levenshtein distance~\cite{Lev65}, is a basic measure of sequence similarity.
 For two strings $X$ and $Y$, the edit distance $\ED(X,Y)$ is defined as the minimum number of character insertions, deletions and substitutions required to transform $X$ into $Y$.
 A natural dynamic programming computes the edit distance of two strings of total length $n$ in $\Oh(n^2)$ time.
 While running a quadratic-time algorithm is prohibitive for many applications, the Strong Exponential Time Hypothesis (SETH)~\cite{IP01} implies that there is no truly subquadratic-time algorithm that computes edit distance exactly~\cite{BI18}. 
 
 The last two decades have seen a surge of interest in designing fast approximation algorithms for edit distance computation~\cite{AKO10,AN20,AO12,BJKK04,BEKMRRS03,BES06,BEGHS18,BR20,CDGKS18,GKS19,KS20}.
 A breakthrough result of Chakraborty, Das, Goldenberg, Koucký, and Saks provided the first constant-factor approximation of edit distance in truly subquadratic time~\cite{CDGKS18}.
 Nearly a decade earlier, Andoni, Krauthgamer, and Onak showed a polyloga\-rithmic-factor approximation for edit distance in near-linear time~\cite{AKO10}.
 Recently, the result of~\cite{CDGKS18} was improved to a constant-factor approximation in near-linear time:
 initially by Brakensiek and Rubinstein~\cite{BR20} as well as Koucký and Saks~\cite{KS20} for the regime of near-linear edit distance, and then by Andoni and Nosatzki~\cite{AN20} for the general case.
 Efficient algorithms for edit distance have also been developed in other models, such as in the quantum and massively parallel framework~\cite{BEGHS18}, and when independent preprocessing of each string is allowed~\cite{GRS20}.

In this paper, we focus on sublinear-time algorithms for edit distance, the study of which was initiated by Batu, Erg\"un, Kilian, Magen, Rashkodnikova, Rubinfeld, and Sami~\cite{BEKMRRS03}, and then continued in~\cite{AN10,AO12,GKS19,NSS17}. Here, the goal is to distinguish, in time sublinear in $n$, whether the edit distance is at most $k$ or strictly above $k'$ for some $k'\ge k$. This is known as the ($k$~vs~$k'$) \emph{gap edit distance problem}. In computational biology, before an in-depth comparison of new sequences is performed, a quick check to eliminate sequences that are not highly similar can save a significant amount of resources~\cite{DKFPOS}. In text corpora, a super-fast detection of plagiarism upon arrival of a new document can save both time and space. In these applications, $k$ is relatively small and a sublinear-time algorithm with $k'$ relatively close to $k$ could be very useful.

\subsubsection*{Results on Gap Edit Distance}
The algorithm of Batu et al.~\cite{BEKMRRS03} distinguishes between $k=n^{1-\Omega(1)}$ and $k'=\Omega(n)$ in time $\Oh(\max(\frac{k^2}{n},\sqrt{k}))$.
However, their algorithm crucially depends on $k'=\Omega(n)$ and cannot distinguish between, say, $k=n^{0.01}$ and $k'=n^{0.99}$.
A more recent algorithm by Andoni and Onak~\cite{AO12} resolves this issue and can distinguish between $k$ and $k'\ge k\cdot n^{\Omega(1)}$ in time $\Oh(\frac{n^{2+o(1)}k}{(k')^2})$. However, if we want to distinguish between $k$ and $k'=\Theta(k^2)$, then the algorithm of~\cite{AO12} achieves sublinear time only when $k=\omega(n^{1/3})$.
(Setting $k'=\Theta(k^2)$ yields a natural test case for the gap edit distance problem since the best that one can currently distinguish in linear time is $k$ vs $\Theta(k^2)$~\cite{LV88}.)
In a recent work, Goldenberg, Krauthgamer, and Saha~\cite{GKS19} gave an algorithm solving \emph{quadratic gap edit distance problem} in $\Ohtilde(\frac{n}{k}+k^3)$ time\footnote{The $\Ohtilde$ notation hides factors polylogarithmic in the input size and, in case of Monte Carlo randomized algorithms, in the inverse error probability.}, thereby providing a truly sublinear-time algorithm as long as $n^{\Omega(1)} \le k\le n^{1/3-\Omega(1)}$.

Bar-Yossef, Jayram, Krauthgamer, and Kumar~\cite{BJKK04} introduced the gap edit distance problem and solved the quadratic gap edit distance problem for non-repetitive strings.
Their algorithm computes a constant-size sketch
but still requires a linear-time pass over the data.
This result was later generalized to arbitrary sequences~\cite{CGK16} via embedding edit distance into Hamming distance, but again in linear time.
Nevertheless, already the algorithm of Landau and Vishkin~\cite{LV88} computes the edit distance exactly in $\Oh(n+k^2)$ time, and thus also solves the quadratic gap edit distance problem in linear time.
Given the prior works, Goldenberg et al.~\cite{GKS19} raised a question whether it is possible to solve the quadratic gap edit problem in sublinear time for all $k \ge n^{\Omega(1)}$. 
In particular, the running times of the algorithms of Goldenberg et al.~\cite{GKS19} and Andoni and Onak~\cite{AO12} algorithms meet at $k \approx n^{1/3}$, when they become nearly-linear.
In light of the $\Oh(n+k^2)$-time exact algorithm~\cite{LV88}, the presence of a $k^3$ term in the time complexity of~\cite{GKS19} is undesirable, and it is natural to ask if the dependency can be reduced.
In particular, if the polynomial dependency on $k$ can be reduced to $k^2$, then, for $k=O(n^{1/3})$, the contribution of that term is negligible compared to $\frac{n}{k}$.

\paragraph*{Quadratic Gap Edit Distance} We give a simple greedy algorithm solving the quadratic gap edit distance problem in $\Ohtilde(\frac{n}{k}+k^2)$ time. This resolves an open question posed in~\cite{GKS19} as to whether a sublinear-time algorithm for the quadratic gap edit distance is possible for $k=n^{1/3}$. 
Our algorithm improves upon the main result of~\cite{GKS19}, also providing a conceptual simplification.
\paragraph*{$k$ vs $k'$ Gap Edit Distance}
Combining the greedy approach with the structure of computations in~\cite{LV88}, we can solve the $k$ vs $k'$ gap edit distance problem in $\Ohtilde(\frac{nk}{k'}+k^2+\frac{k^2}{k'}\sqrt{kn})$ time. 
For all values of $k'$ and $k$, this is at least as fast as the $\Ohtilde(\frac{nk}{k'}+k^3)$ time bound of~\cite{GKS19}.
\paragraph*{$k$ vs $(1+\eps)k$ Gap Edit Distance}
We can distinguish edit distance at most $k$ and at least $(1+\eps)k$ in $\Ohtilde(\frac{n}{\eps^2 k}+\ell k^2)$ time as long as there is no length-$\ell$ substring with period at most~$2k$. 
Previously, sublinear-time algorithms for distinguishing $k$ vs $(1+\eps)k$ were only known for the very special case of Ulam distance, where each character appears at most once in each string~\cite{AN10,NSS17}.
Note that not only we can allow character repetition, but we get an $({1+\eps})$-approximation as long as the same repetitive structure does not continue for more than $\ell$ consecutive positions or has shortest period larger than $2k$.
This is the case with most text corpora and for biological sequences with interspersed repeats.

\subsubsection*{Embedding Edit Distance to Hamming Distance} 
Along with designing fast approximation algorithms for edit distance, a parallel line of works investigated how edit distance can be embedded into other metric spaces, especially to the Hamming space~\cite{ADGIR03,BES06,CGK16,CGKK18,OR07}. 
Indeed, such embedding results have led to new approximation algorithms for edit distance (e.g., the embedding of~\cite{BES06,OR07} applied in~\cite{AO12,BES06}), as well as new streaming algorithms and document exchange protocols (e.g., the embedding of~\cite{CGK16} applied in~\cite{BZ16,CGK16}). 
In particular, Chakraborty, Goldenberg, and Koucký~\cite{CGK16} provided a probabilistic embedding of edit distance to Hamming distance with linear distortion.
Their algorithm runs in linear time, and if the edit distance between two input sequences is $k$, then the Hamming distance between the resultant sequences is between $\frac k2$ and $\Oh(k^2)$ with good probability.
The embedding is based on performing an interesting one-dimensional random walk which had also been used previously to design fast approximation algorithms for a more general \emph{language edit distance} problem~\cite{S14}.
So far, we are not aware of any sublinear-time metric embedding algorithm from edit distance to Hamming distance. 
In this paper, we design one such algorithm.

\paragraph*{Random Walk over Samples} We show that it is possible to perform a random walk similar to~\cite{S14,CGK16} over a suitably crafted sequence of samples. This leads to the first sublinear-time algorithm for embedding edit distance to Hamming distance: Given any parameter $p=\Omega(\log n)$, our embedding algorithm processes any length-$n$ string in $\Ohtilde(\frac{n}{p})$ time and guarantees that (with good probability) the Hamming distance of the resultant strings is between $\frac{k-p+1}{p+1}$ and $\Oh(k^2)$, where $k$ is the edit distance of the input strings. 
That is, we maintain the same expansion rate as~\cite{CGK16} and allow additional contraction by a factor roughly $p$. 
As the algorithm of~\cite{CGK16} has been very influential (see its applications in~\cite{BZ16,BGZ17,H19,ZZ17}), we believe the technique of random walk over samples will also find other usages in designing sublinear-time and streaming algorithms.

\subsection*{Technical Overview}
The classic Landau--Vishkin algorithm~\cite{LV88} tests whether $\ED(X,Y)\le k$ in $\Oh(n+k^2)$ time, where $n=|X|+|Y|$.
The algorithm fills a dynamic-programming table with cells $d_{i,j}$ for $i\in[0\dd k]$ and $j\in [-k\dd k]$,\footnote{For $\ell,r\in \mathbb{Z}$, we denote $[\ell\dd r)=\{j\in \mathbb{Z}: \ell \le j < r\}$ and $[\ell\dd r]=\{j\in \mathbb{Z}: \ell \le j \le r\}$.} aiming at  $d_{i,j}=\max\{x : \ED(X[0\dd x), Y[0\dd x+j)) \leq i\}$.
In terms of the table of all distances $\ED(X[0\dd x), Y[0\dd y))$, each value $d_{i,j}$ can be interpreted as (the row of) the farthest cell on the $j$th \emph{diagonal} with value $i$ or less. 
After preprocessing $X$ and $Y$ in linear time, the cells $d_{i,j}$ can be filled in $\Oh(1)$ time each, which results in $\Oh(n+k^2)$ time in total.

Our algorithm for the quadratic gap edit distance problem follows the basic framework of~\cite{LV88}. However, instead of computing $\Theta(k^2)$ values $d_{i,j}$, it only computes $\Theta(k)$ values $d_i$ with $i\in [0\dd k]$.
Here, $d_i$ can be interpreted as a relaxed version of $\max_{j=-k}^{k} d_{i,j}$, allowing for a factor-$\Oh(k)$ underestimation of the number of edits $i$.
Now, it suffices to test whether $d_k=|X|$, because
$d_k = |X|$ holds if $\ED(X,Y)\le k$ and, conversely, $\ED(X,Y)=\Oh(k^2)$ holds if $d_k=|X|$.
In addition to uniform sampling at a rate of $\Ohtilde(\frac{1}{k+1})$, identifying each of these $\Oh(k)$ values $d_i$ requires reading $\Ohtilde(k)$ extra characters. This yields a total running time of $\Ohtilde(\frac{n}{k}+k^2)$.

Our algorithm not only improves upon the main result of Goldenberg et al.~\cite{GKS19}, but also provides a conceptual simplification.
Indeed, the algorithm of~\cite{GKS19} also utilizes the high-level structure of~\cite{LV88}, but it identifies all $\Theta(k^2)$ values $d_{i,j}$ (relaxed to allow for factor-$\Oh(k)$ underestimations), paying extra $\Thetatilde(k)$ time per each value.
In order to do so, the algorithm follows a more complex row-by-row approach of an online version~\cite{LMS98} of the Landau--Vishkin algorithm. 

For the general $k$ vs $\Theta(k')$ gap edit distance problem, greedily computing the values $d_i$ for $i\in [0\dd k]$ is sufficient only for $k' = \Omega(k^2)$, when the time complexity becomes $\Ohtilde(\frac{nk}{k'}+k^2)$ if we simply decrease the sampling rate to $\Ohtilde(\frac{k}{k'})$.
Otherwise, each shift between diagonals $j\in [-k\dd k]$, which happens for each $i\in [1\dd k]$ as we determine $d_{i}$ based on $d_{i-1}$, involves up to $2k$ insertions or deletions, whereas to distinguish edit distance $k$ and $\Theta(k')$, we would like to approximate the number of edits within a factor $\Oh(\frac{k'}{k})$.
In order to do so, we decompose the entire set of $2k+1$ diagonals into $\Theta(\frac{k^2}{k'})$ groups of  $\Theta(\frac{k'}{k})$ consecutive diagonals.
Within each of these \emph{wide diagonals}, we compute the (relaxed) maxima of $d_{i,j}$ following our greedy algorithm. 
This approximates the true maxima up to a factor-$\Oh(\frac{k'}{k})$ underestimation of the number of edits.
Computing each of the $\Oh(k)$ values for each wide diagonal requires reading $\Ohtilde(\frac{k'}{k})$ extra characters. 
Hence, the running time is $\Ohtilde(\frac{nk}{k'}+k')$ per wide diagonal and $\Ohtilde({\frac{nk^3}{(k')^2}+k^2})$ in total (across the $\Theta(\frac{k^2}{k'})$ wide diagonals). 
This bound is incomparable to $\Ohtilde(\frac{nk}{k'}+k^3)$ of~\cite{GKS19}: While we pay less on the second term, the uniform sampling rate increases.
In order to decrease the first term, instead of sampling over each wide diagonal independently, we provide a synchronization mechanism so that the global uniform sampling rate remains $\Ohtilde(\frac{k}{k'})$.
This leads to an implementation with running time $\Ohtilde(\frac{nk}{k'}+\frac{k^4}{k'})$, which already improves upon~\cite{GKS19}.
However, synchronizing only over appropriate smaller groups of wide diagonals, we can achieve the running time of $\Ohtilde(\frac{nk}{k'}+k^2+\frac{k^2}{k'}\sqrt{kn})$, which subsumes both $\Ohtilde(\frac{nk^3}{(k')^2}+k^2)$ and $\Ohtilde(\frac{nk}{k'}+\frac{k^4}{k'})$.

Our algorithm for the $k$ vs $(1+\eps)k$ gap edit distance for strings without length-$\ell$ substrings
with period at most $2k$ follows a very different approach, inspired by the existing 
solutions for estimating the Ulam distance~\cite{AN10,NSS17}.
This method consists of three ingredients. 
First, we construct decompositions $X=X_0\cdots X_m$ and $Y=Y_0\cdots Y_m$ into 
phrases of length $\Oh(\ell k)$ such that, if $\ED(X,Y)\le k$, then $\sum_{i=0}^m \ED(X_i,Y_i)\le \ED(X,Y)$
holds with good probability (note that $\ED(X,Y)\le  \sum_{i=0}^m \ED(X_i,Y_i)$ is always true).
The second ingredient estimates $\ED(X_i,Y_i)$ for any given $i$.
This subroutine is then applied for a random sample of indices $i$ by the third ingredient, which distinguishes between $\sum_{i=0}^m \ED(X_i,Y_i)\le k$ and $\sum_{i=0}^m \ED(X_i,Y_i)>(1+\eps)k$ relying on the Chernoff bound.
The assumption that $X$ does not contain long periodic substrings is needed only in the first step:
it lets us uniquely determine the beginning of the phrase $Y_i$ assuming that the initial $\ell$ positions of the phrase $X_i$ are aligned without mismatches in the optimal edit distance alignment (which is true with good probability for a random decomposition of $X$).
We did not optimize the $\Ohtilde(\ell k^2)$ term in our running time $\Ohtilde(\frac{n}{\eps^2k}+\ell k^2)$ to keep our implementation of the other two ingredients much simpler than their counterparts in~\cite{AN10,NSS17}.

A simple \emph{random deletion process}, introduced in~\cite{S14}, solves the quadratic gap edit distance problem in linear time. The algorithm simultaneously scans $X$ and $Y$ from left to right.
If the two currently processed characters $X[x]$ and $Y[y]$ match, they are aligned, and the algorithm proceeds to $X[x+1]$ and $Y[y+1]$.
Otherwise, one of the characters is deleted uniformly at random (that is, the algorithm proceeds to $X[x]$ and $Y[y+1]$ or to $X[x+1]$ and $Y[y]$).
This process can be interpreted as a one-dimensional random walk, and the hitting time of the random walk provides the necessary upper bound on the edit distance. 
In order to conduct a similar process in sublinear query complexity,
we compare $X[x]$ and $Y[y]$ with probability $\Ohtilde(\frac{1}{p})$ only;
otherwise, we simply align $X[x]$ and $Y[y]$. 
We show that performing this random walk over samples is sufficient for the  $k$ vs $\Theta(k^2p)$ gap edit distance problem. 
Finally, we observe that this \emph{random walk over samples} can be implemented in $\Ohtilde(\frac{n}{p})$ time
by batching iterations.

In order to derive an embedding, we modify the random deletion process so that, after learning that $X[x]=Y[y]$,
the algorithm uniformly at random chooses to stay at $X[x]$ and $Y[y]$ or move to $X[x+1]$ and $Y[y+1]$.
This has no impact on the final outcome, but the decision whether the algorithm stays at $X[x]$
or moves to $X[x+1]$ can now be made independently of $Y$.
This allows for an embedding whose shared randomness consists in the set $S\sub [1\dd 3n]$
of iterations $i$ when $X[x]$ is accessed and, for each $i\in S$, a random function
$h_i : \Sigma\to \{0,1\}$. For each iteration $i\in S$, the embedding outputs $X[x]$
and proceeds to $X[x+h_i(X[x])]$. If $Y$ is processed using the same 
shared randomness, the two output strings are, with good probability, at Hamming distance between $\frac{k-p+1}{p+1}$ and $\Oh(k^2)$.

\subsubsection*{Organization}
After introducing the main notations in Section~\ref{sec:prelim}, we describe and analyze our algorithm for the quadratic gap edit distance problem in Section~\ref{sec:simple}.
In Section~\ref{sec:edLCE}, we solve the more general $k$ vs $k'$ gap edit distance problem for $k' = \Oh(k^2)$. 
The $k$ vs $(1+\eps)k$ gap edit distance problem for strings without long periodic substrings is addressed in Section~\ref{sec:PTAS}.
The random walk over samples process is presented in Section~\ref{sec:randomwalk}.
Finally, the embedding result is provided in Section~\ref{sec:embed}.

\subsubsection*{Further Remarks} 
A recent independent work~\cite{BCR20} uses a greedy algorithm similar to ours and achieves a running time of $\tilde{O}(\frac{n}{\sqrt{k}})$ for the quadratic gap edit distance problem. 
This is in contrast to our bound of $\Ohtilde(\frac{n}{k}+k^2)$, which is superior for $k \le n^{2/5}$. At the same time, for $k \ge n^{2/5+o(1)}$, the algorithm of Andoni and Onak~\cite{AO12} has a better running time $\Oh\big(\frac{n^{2+o(1)}}{k^3}\big)$. 
We also remark here that there exists an even simpler algorithm (by now folklore) 
that has query complexity  $\Ohtilde(\frac{n}{\sqrt{k}})$. 
The algorithm samples both sequences $X$ and $Y$ independently with probability $\tilde{\Theta}(\frac{1}{\sqrt{k}})$ so that  $\Pr[ X[x] \text{ and }Y[y]\text{ are sampled}]=\tilde{\Theta}(\frac{1}{k})$ for all $x \in [0\dd |X|)$ and $y \in [0\dd |Y|)$. 
Then, by running the Landau--Vishkin algorithm~\cite{LV88} suitably over the sampled sequences, one can solve the quadratic gap edit distance problem.
Still, it remains open to tightly characterize the time and query complexity of the quadratic gap edit distance problem.

\section{Preliminaries}\label{sec:prelim}
  A \emph{string} $X$ is a finite sequence of characters from an \emph{alphabet} $\Sigma$.
  The length of $X$ is denoted by $|X|$ and, for $i\in [0\dd |X|)$,
  the $i$th character of $X$ is denoted by $X[i]$.
  A string $Y$ is a \emph{substring} of a string $X$ if $Y=X[\ell]X[\ell+1]\cdots X[r-1]$
  for some $0\le \ell \le r \le |X|$. 
  We then say that $Y$ \emph{occurs} in $X$ at position $\ell$.
  The set of positions where $Y$ occurs in $X$ is denoted $\Occ(Y,X)$.
  The \emph{occurrence} of $Y$ at position $\ell$ in $X$ is denoted by $X[\ell \dd r)$ or $X[\ell\dd r-1]$.
  Such an occurrence is a \emph{fragment} of $X$, and it can be represented by (a pointer to) $X$
  and a pair of indices $\ell\le r$. 
  Two fragments (perhaps of different strings) \emph{match} if they are occurrences of the same substring.
  A fragment $X[\ell\dd r)$ is a \emph{prefix} of $X$ if $\ell=0$ and a \emph{suffix} of $X$ if $r=|X|$.

  A positive integer $p$ is a \emph{period} of a string $X$ if $X[i]=X[i+p]$ holds for each $i\in [0\dd |X|-p)$.
  We define $\per(X)$ to be the smallest period of $X$.
  The following result relates periods to occurrences:
  \begin{fact}[{Breslaurer and Galil~\cite[Lemma 3.2]{BG95}}]\label{fct:period}
    If strings $P,T$ satisfy $|T|\le \frac32|P|$, then $\Occ(P,T)$
    forms an arithmetic progression with difference $\per(P)$.
  \end{fact}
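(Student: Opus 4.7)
The plan is to show that for any two consecutive occurrences $p_i < p_{i+1}$ in $\Occ(P,T)$, the gap satisfies $p_{i+1} - p_i = \per(P)$; this immediately implies that $\Occ(P,T)$ forms an arithmetic progression with difference $\per(P)$. The cases with $|\Occ(P,T)|\le 1$ are trivial, so I would assume that two such consecutive occurrences exist.

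First, I would observe that the gap $d := p_{i+1} - p_i$ is itself a period of $P$: for each $j \in [0\dd|P|-d)$, the overlap of the two occurrences in $T$ gives $P[j+d] = T[p_i + j + d] = T[p_{i+1} + j] = P[j]$. Second, since both occurrences fit inside $T$, the later one starts at position at most $|T|-|P|$, so $d \le |T| - |P| \le \tfrac{|P|}{2}$. Writing $p := \per(P)$, both $p$ and $d$ are periods of $P$ bounded by $|P|/2$, so $p + d - \gcd(p,d) \le |P|$; Fine and Wilf's periodicity lemma then yields that $\gcd(p,d)$ is also a period of $P$, and by the minimality of $p$ we conclude $p \mid d$.

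It remains to rule out $d > p$. Assuming $d = mp$ with $m \ge 2$, I would derive a contradiction by showing that $P$ also occurs in $T$ at the intermediate position $p_i + p$, contradicting the choice of $p_{i+1}$ as the next occurrence after $p_i$. For $j \in [0\dd|P|-p)$, the equality $T[p_i + p + j] = P[p+j] = P[j]$ follows from the occurrence at $p_i$ together with the periodicity of $P$. For $j \in [|P|-p\dd|P|)$, the chain $(m-1)p = d - p \le \tfrac{|P|}{2} - p \le |P| - p \le j$ shows that $j - (m-1)p \in [0\dd|P|)$, so the occurrence at $p_{i+1} = p_i + mp$ yields $T[p_i + p + j] = P[j - (m-1)p] = P[j]$, again by periodicity.

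I expect this final shifting step to be the delicate part: the hypothesis $|T| \le \tfrac{3}{2}|P|$ enters exactly here, ensuring that the ``far'' index $j - (m-1)p$ remains inside $[0\dd|P|)$ so that the two extreme occurrences at $p_i$ and $p_{i+1}$ jointly witness the intermediate one at $p_i + p$. Without this slack, the two occurrences could fail to cover the full length-$|P|$ window starting at $p_i + p$, and the contradiction could not be forced.
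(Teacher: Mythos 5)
Your proof is correct. The paper itself states this as a cited fact from Breslaurer and Galil~\cite{BG95} without reproducing the argument, so there is no in-paper proof to compare against; your argument is the standard one. The structure is sound: the gap $d$ between consecutive occurrences is a period (by overlapping the two occurrences inside $T$), the bound $d \le |T|-|P| \le |P|/2$ puts you in the regime where Fine--Wilf applies to $p=\per(P)$ and $d$, giving $p \mid d$, and then the two-case check shows that $d>p$ would force an intermediate occurrence at $p_i+p$, contradicting consecutiveness. All the index bounds you need check out: for $j\in[|P|-p\dd|P|)$ you indeed have $j-(m-1)p \ge (|P|-p)-(d-p) = |P|-d \ge |P|/2 \ge 0$ and $j-(m-1)p < |P|$, so the occurrence at $p_{i+1}$ validly determines $T[p_i+p+j]$, and $p_i+p+|P| \le p_{i+1}+|P| \le |T|$ so the intermediate occurrence fits in $T$. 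Your closing remark about where $|T|\le\frac32|P|$ is used is also accurate: it is exactly what guarantees $d\le|P|/2$, which both enables Fine--Wilf and keeps the shifted index nonnegative in the final step.
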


  \paragraph*{Hamming distance and edit distance}
  The \emph{Hamming distance} between two strings $X,Y$ of the same length is defined
  as the number of mismatches. Formally, $\HAM(X,Y)=|\{i\in [0\dd |X|) : X[i]\ne Y[i]\}|$.
  The edit distance between two strings $X$ and $Y$ is denoted $\ED(X,Y)$.

  \paragraph*{LCE queries}
  Let $X,Y$ be strings and let $k$ be a non-negative integer.
  For $x\in [0\dd |X|]$ and $y\in [0\dd |Y|]$, we define
  $\LCE_k^{X,Y}(x,y)$ as the largest integer $\ell$ such that
  $\HAM(X[x\dd x+\ell),Y[y\dd y+\ell)) \le k$ (in particular,
  $\ell \le \min(|X|-x,|Y|-y)$ so that $X[x\dd x+\ell)$ and $Y[y\dd y+\ell)$ are well-defined).
  We also set $\LCE_k^{X,Y}(x,y)=0$ if $x\notin [0\dd |X|]$ or $y\notin [0\dd |Y|]$.

  Our algorithms rely on two notions of \emph{approximate} $\LCE$ queries.
  The first variant is sufficient for distinguishing between $\ED(X,Y)\le k$ and $\ED(X,Y)>k(3k+5)$
  in $\Ohtilde(\frac{n}{k+1}+k^2)$ time, while a more general algorithm distinguishing
  between $\ED(X,Y)\le k$ and $\ED(X,Y)>\alpha k$ is based on the more subtle second variant.
  \begin{definition}
    Let $X,Y$ be strings and let $k\ge 0$ be an integer.
    For integers $x,y$, we set $\apLCE{k}^{X,Y}(x,y)$ as any value satisfying
    $\LCE_0^{X,Y}(x,y) \le \apLCE{k}^{X,Y}(x,y) \le \LCE_k^{X,Y}(x,y)$.
  \end{definition}

  \begin{definition}
    Let $X,Y$ be strings and let $r > 0$ be a real parameter.
    For integers $x,y$, we set
    $\bpLCE_r^{X,Y}(x,y)$ as any \emph{random variable} satisfying the following conditions:
    \begin{itemize}
      \item $\bpLCE_r^{X,Y}(x,y) \ge \LCE_0^{X,Y}(x,y)$,
      \item $\Pr\big[\bpLCE_r^{X,Y}(x,y) > \LCE_k^{X,Y}(x,y)\big] \le \exp(-\frac{k+1}{r})$ for every integer $k\ge 0$.
    \end{itemize}
    \end{definition}
    Note that $\bpLCE_r^{X,Y}(x,y)$ for $r = \frac{k+1}{\ln N}$
    satisfies the conditions on $\apLCE{k}^{X,Y}(x,y)$ with probability~${1-\frac{1}{N}}$.
    Thus, $\bpLCE_r$ queries with sufficiently small $r=\Thetatilde(k+1)$ yield $\apLCE{k}$ queries with high probability.

\section{Quadratic Gap Edit Distance}\label{sec:simple}
The classic Landau--Vishkin exact algorithm~\cite{LV88} for testing if $\ED(X,Y)\le k$ 
is given below as \cref{alg:lv}.
The key property of this algorithm is that $d_{i,j}= \max\{x : \ED(X[0\dd x),Y[0\dd x+j))\le i\}$
holds for each $i\in [0\dd k]$ and $j\in [-k\dd k]$.
Since $\LCE_0^{X,Y}$ queries can be answered in $\Oh(1)$ time after linear-time preprocessing,
the running time is $\Oh(|X|+k^2)$.

\begin{algorithm}[h]
  \lForEach{$i \in [0\dd k]$ \KwSty{and} $j \in [-k-1\dd k+1]$}{$d'_{i,j}:=d_{i,j}:=-\infty$}
  $d'_{0,0}:= 0$\;
  \For{$i := 0$ \KwSty{to} $k$}{
    \For{$j := -k$ \KwSty{to} $k$}{
      \lIf{$d'_{i,j}\ne -\infty$}{%
        $d_{i,j} := d'_{i,j}+\LCE_{0}^{X,Y}(d'_{i,j},d'_{i,j}+j)$
      }%
    }  
    \lFor{$j := -k$ \KwSty{to} $k$}{%
      $d'_{i+1,j} := \min(|X|,\max(d_{i,j-1},d_{i,j}+1,d_{i,j+1}+1))$
    }
  }
  \lIf{$||X|-|Y|| \le k$ \KwSty{and} $d_{k,|Y|-|X|} = |X|$}{\Return{YES}}
  \lElse{\Return{NO}}
  \caption{The Landau--Vishkin algorithm~\cite{LV88}}\label{alg:lv}
  \end{algorithm}

The main idea behind the algorithm of Goldenberg et al.~\cite{GKS19}
is that if $\LCE_0$ queries are replaced with $\apLCE{k}$ queries,
then the algorithm is still guaranteed to return YES if $\ED(X,Y)\le k$
and NO if $\ED(X,Y)> k(k+2)$.
The cost of their algorithm is $\Ohtilde(\frac{1}{k+1}|X|)$ plus $\Ohtilde(k)$ per $\apLCE{k}$ query,
which yields $\Ohtilde(\frac{1}{k+1}|X|+k^3)$ in total.
Nevertheless, their implementation is tailored to the specific structure of $\LCE$ queries in \cref{alg:lv},
and it requires these queries to be asked and answered in a certain order,
which makes them use an online variant~\cite{LMS98} of the Landau--Vishkin algorithm.

An auxiliary result of this paper is that $\apLCE{k}^{X,Y}(x,y)$ queries with $|x-y|\le k$
can be answered in $\Ohtilde(k)$ time after $\Ohtilde(\frac{1}{k+1}|X|)$ preprocessing,
which immediately yields a more modular implementation of the algorithm of~\cite{GKS19}.
In fact, we show that $\Ohtilde(k)$ time is sufficient to answer
\emph{all} queries $\apLCE{k}^{X,Y}(x,y)$ with fixed $x$ and arbitrary $y\in [x-k\dd x+k]$.

Unfortunately, this does not give a direct speed-up, because the values $d'_{i,j}$ in \cref{alg:lv} might all be different.
However, given that relaxing $\LCE_0$ queries to $\apLCE{k}$ queries yields a cost of up to $k$ mismatches for every $\apLCE{k}^{X,Y}(x,y)$ query,
the algorithm may as well pay $\Oh(k)$ further edits (insertions or deletions) to change the \emph{shift}
$j=y-x$ arbitrarily. 
As a result, we do not need to consider each shift $j$ separately.
This results in a much simpler \cref{alg:simple}.

\begin{algorithm}
  \caption{Simple algorithm}\label{alg:simple}
  $d'_0 := 0$\;
  \For{$i := 0$ \KwSty{to} $k$}{
    $d_i := d'_i + \max_{\delta=-k}^k \apLCE{k}^{X,Y}(d'_i,d'_i+\delta)$\;
    $d'_{i+1} := \min(|X|,d_i + 1)$\;
  }
  \lIf{$||X|-|Y|| \le k$ \KwSty{and} $d_k = |X|$}{\Return{YES}}
  \lElse{\Return{NO}}
\end{algorithm}

\begin{lemma}\label{lem:simple_correct}
\cref{alg:simple} returns YES if $\ED(X,Y)\le k$ and NO if $\ED(X,Y)>(3k+5)k$.
\end{lemma}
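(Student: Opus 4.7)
I would separate the lemma into a \emph{completeness} direction ($\ED(X,Y)\le k$ implies YES) and a \emph{soundness} direction (YES implies $\ED(X,Y)\le (3k+5)k$). Throughout, I let $\delta_i \in [-k\dd k]$ denote the shift realizing the maximum in iteration $i$ of \cref{alg:simple}, and $\ell_i := d_i - d'_i$ denote the returned $\apLCE{k}$ value.

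For completeness, the plan is to compare \cref{alg:simple} against \cref{alg:lv} and prove by induction on $i$ that $d_i \ge M_i := \max_{j \in [-k\dd k]} d_{i,j}$. The base step is immediate from $d_{0,j}=-\infty$ for $j\ne 0$ and $d_{0,0}=\LCE_0^{X,Y}(0,0)\le \apLCE{k}^{X,Y}(0,0)\le d_0$. For the inductive step, $d'_i \ge M_{i-1}+1 \ge d'_{i,j}$ holds for every $j$, so the defining equality $X[d'_{i,j}\dd d_{i,j}) = Y[d'_{i,j}+j\dd d_{i,j}+j)$ restricts to the shared suffix starting at position $d'_i$, yielding $\LCE_0^{X,Y}(d'_i, d'_i+j)\ge d_{i,j}-d'_i$ whenever $d_{i,j}\ge d'_i$ (and otherwise $d_i \ge d'_i > d_{i,j}$ is trivial). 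Taking the maximum over $j\in [-k\dd k]$ and using $\apLCE{k}\ge \LCE_0$ gives $d_i\ge M_i$. Hence $\ED(X,Y)\le k$ forces $d_k\ge d_{k,|Y|-|X|}=|X|$ and $||X|-|Y||\le k$, and the algorithm returns YES.

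For soundness, the plan is to construct an explicit monotone edit alignment of $X$ with $Y$ of cost at most $(3k+5)k$. The iteration-$i$ segment will be the alignment of $X[d'_i\dd d_i)$ with some $Y[a_i\dd b_i)$; since $\HAM(X[d'_i\dd d_i), Y[d'_i+\delta_i\dd d_i+\delta_i))\le k$, left/right padding gives $\ED(X[d'_i\dd d_i), Y[a_i\dd b_i)) \le k + |a_i - (d'_i+\delta_i)| + |b_i - (d_i+\delta_i)|$. I would chain consecutive segments with $a_i\ge b_{i-1}$, paying $\max(1, a_i-b_{i-1})$ per transition and $|Y|-b_k$ at the end. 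The default $(a_i, b_i) := (d'_i+\delta_i, d_i+\delta_i)$ is valid whenever $\delta_i\ge \delta_{i-1}$; when $\delta_i<\delta_{i-1}$ I would instead set $a_i := b_{i-1}$, which keeps the transition cost at $1$ and inflates the alignment cost by $\delta_{i-1}-\delta_i-1$. In both cases the combined cost of iteration $i$ and its incoming transition is at most $k+1+|\delta_i-\delta_{i-1}|\le 3k+1$. Iteration $0$ costs $\max(0,\delta_0)+k\le 2k$ (an initial shift replaces the missing transition), and for iteration $k$ I would additionally tune $b_k$ to fold the final-cleanup insertions into the alignment slack; using $||Y|-|X|-\delta_k|\le 2k$ (from $||X|-|Y||\le k$ and $|\delta_k|\le k$), iteration $k$ plus its transition plus the final cleanup totals at most $5k+1$. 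Summing yields $2k + (k-1)(3k+1) + (5k+1) = (3k+5)k$.

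The main obstacle is monotonicity of the constructed path. Whenever $\delta_i$ drops by more than $1$ from $\delta_{i-1}$, the default start $a_i=d'_i+\delta_i$ lies strictly to the left of $b_{i-1}$ in $Y$, so naive chaining would require moving backwards and is invalid. The remedy above---shifting the iteration-$i$ target to start at $b_{i-1}$ and paying for the extra left-padding via the $|a_i-(d'_i+\delta_i)|$ slack in the alignment cost---is what forces the coefficient $3$ in front of $k^2$ rather than a smaller constant, and the analogous adjustment of $b_k$ at the end is responsible for the additive $5k$ term.
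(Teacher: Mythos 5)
Your completeness direction is correct and is essentially the paper's Claim~3.3 re-expressed through the Landau--Vishkin table: the paper proves $x\le d_i$ directly by splitting an optimal alignment, while you prove $d_i\ge\max_j d_{i,j}$; these are equivalent once one recalls what the LV table represents, so this half is fine.

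The soundness direction, however, has a genuine gap. Your per-iteration bookkeeping relies on the bound $\ED(X[d'_i\dd d_i),Y[a_i\dd b_i))\le k + |a_i - (d'_i+\delta_i)| + |b_i - (d_i+\delta_i)|$, and when $\delta_i<\delta_{i-1}$ you fix $a_i:=b_{i-1}$ but keep the default $b_i:=d_i+\delta_i$. The trouble is that whenever $\ell_i := d_i-d'_i < \delta_{i-1}-\delta_i-1$ (which is possible: e.g.\ $\ell_i=0$, $\delta_{i-1}=k$, $\delta_i=-k$), you get $b_i < a_i$, so $Y[a_i\dd b_i)$ is not a well-defined fragment and the formula does not apply. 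You only patched the $a_i$ side of monotonicity, not the $b_i$ side. Worse, the obvious repair $b_i := \max(a_i, d_i+\delta_i)$ fed into your formula yields $k + 2(\delta_{i-1}-\delta_i-1) - \ell_i$, which can reach $5k-2$ and blows past the $3k+1$ budget; the true cost in this degenerate case is just $\ell_i$ (delete the whole segment), but then $b_i$ is inflated to $b_{i-1}$ and this propagates to $a_{i+1},a_{i+2},\ldots$, so the per-iteration argument no longer decouples. In short, your claim that ``in both cases the combined cost of iteration $i$ and its incoming transition is at most $k+1+|\delta_i-\delta_{i-1}|$'' is not established by what you wrote. The paper sidesteps this entirely (Claim~3.2) by not committing to a single endpoint $b_i$: it instead proves, inductively, a bound on $\ED(X[0\dd d_i),Y[0\dd y))$ \emph{simultaneously for all} $y\in[d_i-k\dd d_i+k]\cap[0\dd|Y|]$, so at the next step the needed $y'$ is always available regardless of how $\delta_i$ jitters. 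That flexibility is exactly what your fixed-$(a_i,b_i)$ chaining loses. You could probably recover a correct proof by maintaining the same range-invariant, but as written the plan does not close.
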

\begin{proof}
We prove two claims on the values $d'_i$ and $d_i$.
\begin{claim}\label{claim:simple_no}
  Each $i\in [0\dd k]$ has the following properties:
  \begin{enumerate}[label={\rm(\alph*)}]
    \item\label{it:simple_di} $\ED(X[0\dd d'_i),Y[0\dd y)) \le (3k+1)i+k$ for every $y\in [d'_i-k\dd d'_i+k]\cap [0\dd |Y|]$;
    \item\label{it:simple_dip} $\ED(X[0\dd d_i),Y[0\dd y)) \le (3k+1)i+4k$ for every $y\in [d_i-k\dd d_i+k]\cap [0\dd |Y|]$.
  \end{enumerate}
\end{claim}
\begin{proof}
We proceed by induction on $i$. Our base case is Property~\ref{it:simple_di} for $i=0$.
Since $d'_0 = 0$,
we have $\ED(X[0\dd d'_0),\allowbreak Y[0\dd y))=y \le k$ for $y\in [d'_0-k\dd \allowbreak d'_0+k]\cap [0\dd |Y|]$.

Next, we shall prove that Property~\ref{it:simple_dip} holds for $i\ge 0$ assuming that Property~\ref{it:simple_di}
is true for $i$.
By definition of $\apLCE{k}$ queries, we have $d_i \le d'_i+\LCE_k^{X,Y}(d'_i,y')$ 
for some position $y'\in [d'_i-k\dd d'_i+k]\cap [0\dd |Y|]$,
and thus $\HAM(X[d'_i\dd d_i),Y[y'\dd y'+d_i-d'_i))\le k$.
The assumption yields
$\ED(X[0\dd d'_i),Y[0\dd y'))\le (3k+1)i+k$,
so we have $\ED(X[0\dd d_i),Y[0\dd y'+d_i-d'_i))\le (3k+1)i+2k$.
Due to $|y'+d_i-d'_i-y|\le 2k$, we conclude that 
$\ED(X[0\dd d_i),Y[0\dd y))\le (3k+1)i+4k$.

Finally, we shall prove that Property~\ref{it:simple_di} holds for $i>0$ assuming that Property~\ref{it:simple_dip}
is true for $i-1$.
Since $d'_i \le d_{i-1}+1$, the assumption yields 
$\ED(X[0\dd {d'_i-1}),\allowbreak Y[0\dd {y-1}))\le {(3k+1)}{(i-1)}+4k$,
and thus 
$\ED(X[0\dd d'_i),\allowbreak Y[0\dd y))\le 1+(3k+1)(i-1)+4k = (3k+1)i+k$.
\end{proof}
Thus, $\ED(X,Y){\le} (3k+5)k$ if the algorithm returns YES.

\begin{claim}\label{clm:simple_yes}
If $\ED(X[0\dd x),Y[0\dd y)) = i\in [0\dd k]$ for $x\in [0\dd |X|]$ and $y\in [0\dd |Y|]$, then $x \le d_i$.
\end{claim}
\begin{proof}
We proceed by induction on $i$. Both in the base case of $i=0$ and the inductive step of $i>0$,
we shall prove that $x \le d'_i+\max_{\delta=-k}^k \LCE_0^{X,Y}(d'_i, d'_i+\delta)$.
Since $d_i \ge d'_i +\max_{\delta=-k}^k \LCE_0^{X,Y}(d'_i, d'_i+j)$ holds by definition 
of $\apLCE{k}$ queries, this implies the claim.

In the base case of $i=0$, we have $X[0\dd x)=Y[0\dd y)$ and $d'_0=0$. Consequently, 
$x \le \LCE_0^{X,Y}(0,0) \le d'_0+\max_{\delta=-k}^k \LCE_0^{X,Y}(d'_0, d'_0+\delta)$.

For $i>0$, we consider an optimal alignment between $X[0\dd x)$ and $Y[0\dd y)$, and we distinguish its maximum prefix with $i-1$ edits.
This yields positions $x',x''\in [0\dd x]$ and $y',y''\in [0\dd y]$ with $x''-x'\in \{0,1\}$ and $y''-y'\in \{0,1\}$ such that $\ED(X[0\dd x'),Y[0\dd y'))=i-1$ and $X[x''\dd x)=Y[y''\dd y)$.
The inductive assumption yields $x' \le d_{i-1}$, which implies $x'' \le \min(x,{d_{i-1}+1})\le  d'_i$.
Due to $X[x''\dd x)=Y[y''\dd y)$, we have $\LCE_0^{X,Y}(x'',y'')\ge x-x''$.
By $x'' \le d'_i$, this implies $\LCE_0^{X,Y}(d'_i, {d'_i+y-x})\ge x-d'_i$. 
Since $|y-x|\le k$, we conclude that
$x = d'_i + (x-d'_i) \le d'_i + \LCE_0^{X,Y}(d'_i, d'_i+y-x) \le d'_i+\max_{\delta=-k}^k \LCE_0^{X,Y}(d'_i, d'_i+\delta)$.
\end{proof}
Hence, the algorithm returns YES if $\ED(X,Y)\le k$.
\end{proof}

A data structure computing $\apLCE{k}^{X,Y}(x,y)$ for a given $x$ and all $y\in [x-k\dd x+k]$ is complicated, but a simpler result stated below
and proved in \cref{sec:apLCE} suffices here. 

\begin{restatable}{proposition}{aplce}\label{prp:aplce}
  There exists an algorithm that, given strings $X$ and $Y$, an integer $k\ge 0$, an index $i$, and a range of indices $J$,
  computes $\ell:=\max_{j\in J}\apLCE{k}^{X,Y}(i,j)$.
  With high probability, the algorithm is correct and its running time is $\Ohtilde(\frac{\ell}{k+1}+|J|)$.
\end{restatable}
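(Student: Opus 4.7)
The plan is to reduce the problem to randomised sampling of $X$, then to sweep the indices $j\in J$ in parallel, killing each one as soon as we witness a sampled mismatch against $Y$.

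First I would draw a random set $S\subseteq\mathbb{Z}_{\ge 0}$ by including each integer independently with probability $p=\Theta(\log n/(k+1))$, enumerate $S$ in increasing order as $s_1<s_2<\cdots$, and read $X[i+s_r]$ one sample at a time (using doubling on the range so that we never probe beyond the final answer). An \emph{alive} set $A\subseteq J$ is maintained, initialised to $J$; for each sample $s_r$ and each $j\in A$ I probe $Y[j+s_r]$, and whenever $X[i+s_r]\ne Y[j+s_r]$ I record $M_j:=s_r$ and remove $j$ from $A$. The procedure halts once $A$ is empty and returns $\ell:=\max_{j\in J} M_j$.

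For correctness, $\ell\ge \max_j \LCE_0(i,j)$ is immediate since every recorded $M_j$ corresponds to a genuine character disagreement, so $M_j\ge \LCE_0(i,j)$ for every $j$. The opposite bound $\ell\le \max_j \LCE_k(i,j)$ follows with high probability: if it were violated at $j^*:=\arg\max_j M_j$, then the window $[0,M_{j^*})$ would contain at least $k+1$ mismatches between $X[i+\cdot]$ and $Y[j^*+\cdot]$ yet none of those positions would belong to $S$, otherwise $j^*$ would have died earlier. With our choice of $p$ this happens with probability $(1-p)^{k+1}\le n^{-\Omega(1)}$, and a union bound over $j\in J$ closes the argument.

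Bounding the running time is where I expect the technical content to lie. The $X$-probe count is $|S\cap[0,\ell]|=\Ohtilde(\ell/(k+1))$ by a standard Chernoff bound, which is well within budget. The naive $Y$-probe count, however, is $\sum_{j\in J}|S\cap[0,M_j)|$, which in the worst case could be as large as $\Ohtilde(|J|\cdot \ell/(k+1))$ and thus blow the target. To bring it down to $\Ohtilde(|J|)$, my plan is to exploit \cref{fct:period}: once $s$ exceeds roughly $\tfrac{2}{3}(j_{\max}-j_{\min})$, the surviving indices in $A$ are forced onto an arithmetic progression of difference $\per(X[i\dd i+s))$. I would maintain $A$ as a short union of such progressions and, for each new sample, advance each progression using $O(1)$ character probes into $Y$ (because $Y$ restricted to positions spaced by $\per(X[i\dd i+s))$ is itself periodic on the prefixes considered), charging each splitting event to a $j$ that dies at that step.

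The hardest step will be reconciling \cref{fct:period}, which concerns exact occurrences, with our approximate survival condition: alive indices only provably match $X$ at sampled positions, not everywhere. I expect this to be handled by combining the structural observation with an additional Chernoff argument showing that any $j$ whose true Hamming distance to the current prefix has already exceeded $k$ is killed within the next $\Ohtilde(k+1)$ samples with high probability; formalising this coupling and bounding the total number of AP splits by $\Ohtilde(|J|)$ is the main technical obstacle.
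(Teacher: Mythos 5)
Your framework (kill-set maintenance with a doubling search on the answer) differs from the paper's, which instead builds a decision oracle $\textsf{Oracle}(X,Y,i,J,k,\ell)$ answering whether $\max_{j\in J}\LCE_0(i,j)\ge\ell$ vs.\ $\max_{j\in J}\LCE_k(i,j)<\ell$, and wraps it in an exponential search. Your correctness argument is fine, and your $X$-probe accounting is fine. The real issue is exactly the one you flag at the end: you cannot bound the $Y$-probe count by $\Ohtilde(|J|)$ with the tools you propose, and the fix you sketch does not go through.

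The concrete obstruction is this. In your scheme, the alive set $A$ at sample $s$ consists of indices $j$ that agree with $X$ \emph{only at the sampled coordinates} $S\cap[0,s)$. This set carries no periodicity structure: \cref{fct:period} is about exact occurrences of $X[i\dd i+s)$, whereas your survivors have only passed $\Ohtilde(s/(k+1))$ sparse spot-checks, so they need not lie on any arithmetic progression regardless of how large $s$ gets. Your proposed repair --- that any $j$ whose true Hamming distance to the prefix exceeds $k$ is killed within $\Ohtilde(k+1)$ further samples --- is actually false as stated, because the next $\Ohtilde(k+1)$ sampled positions of $X$ need not land on even one of the $>k$ mismatches of a particular $j$ (the mismatches of different $j$'s live at different shifts of $Y$ and can be arranged to be disjoint from the sample). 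So a worst case with $\Theta(|J|)$ indices each surviving $\Theta(\ell/(k+1))$ samples is not excluded, which blows the time bound to $\Ohtilde(|J|\ell/(k+1))$.

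What the paper does instead, and what you would need, is an \emph{exact} filter on a length-$3|J|$ prefix: it reads $X[i\dd i+3|J|)$ and $Y[\min J\dd \max J+3|J|)$ in full (only $\Oh(|J|)$ characters), runs classic linear-time pattern matching to find $C=\{j\in J: X[i\dd i+3|J|)=Y[j\dd j+3|J|)\}$, and only then has the arithmetic-progression guarantee from \cref{fct:period}. After that, it never spot-checks every surviving $j$ against the samples: it runs \textsf{FindBreak} on the single string $X[i\dd i+\ell)$ and on one representative $Y$-window, and uses the returned break (or the $\bot$ periodicity certificate) to cut $C$ down to at most one candidate with another $\Oh(|J|)$-time exact pattern-matching pass. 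Only that one surviving candidate is then tested by random sampling. So the $\Ohtilde(\ell/(k+1))$ sampling cost is paid a constant number of times, not once per $j\in J$, and the $\Ohtilde(|J|)$ term is paid by exact pattern-matching on short strings. Without the exact-prefix filter step, the arithmetic-progression leverage you want is simply unavailable, and the gap you identified at the end of your proposal is a genuine one.
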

  
\begin{theorem}\label{thm:simple}
  There exists an algorithm that, given strings $X$ and $Y$, and an integer $k\ge 0$, returns
    YES if $\ED(X,Y)\le k$, and
    NO if $\ED(X,Y)> (3k+5)k$.
  With high probability,  the algorithm is correct and its running time is $\Ohtilde(\frac{1}{k+1}|X|+k^2)$.
\end{theorem}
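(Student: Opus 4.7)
The plan is to run \cref{alg:simple} directly, implementing every query of the form $\apLCE{k}^{X,Y}(d'_i,\cdot)$ via the subroutine provided by \cref{prp:aplce}. Correctness of the output is already established in \cref{lem:simple_correct}, so it suffices to verify that all queries succeed with high probability and to bound the total running time by $\Ohtilde(\frac{1}{k+1}|X|+k^2)$.

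For query correctness, the main loop invokes \cref{prp:aplce} exactly $k+1$ times, each on a range $J=[d'_i-k\dd d'_i+k]$ of size $\Oh(k)$. Since \cref{prp:aplce} is correct with high probability per call, I would amplify its success probability by a polylogarithmic factor (absorbed into the $\Ohtilde$ notation) and apply a union bound over the $\Oh(k)$ invocations, so that all calls simultaneously return valid $\apLCE{k}^{X,Y}$ values with high probability.

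For the running time, the $i$th call to \cref{prp:aplce} returns the value $\ell_i := d_i - d'_i$ and costs $\Ohtilde(\frac{\ell_i}{k+1}+|J|) = \Ohtilde(\frac{\ell_i}{k+1}+k)$. Summing over $i\in[0\dd k]$ yields a total cost of $\Ohtilde(\frac{1}{k+1}\sum_{i=0}^k \ell_i + k^2)$, so it remains to show that $\sum_{i=0}^k \ell_i = \Oh(|X|)$. This is a telescoping computation based on $d'_0=0$ and $d'_{i+1} = \min(|X|,d_i+1)$, which implies $d'_{i+1}\ge d_i$ whenever $d_i < |X|$ and $d'_{i+1}=|X|\ge d_i$ otherwise; hence $d_i - d'_{i+1}\le 0$ for every $i$, and
\[
\sum_{i=0}^k (d_i - d'_i) = d_k - d'_0 + \sum_{i=0}^{k-1}(d_i - d'_{i+1}) \le d_k \le |X|.
\]

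Beyond combining \cref{lem:simple_correct} and \cref{prp:aplce}, the only delicate step is the telescoping argument above, whose correctness crucially depends on the capping $d'_{i+1}:=\min(|X|,d_i+1)$ in \cref{alg:simple}: without it, the $\ell_i$'s could collectively exceed $|X|$ and the sublinear $\Ohtilde(\frac{1}{k+1}|X|)$ term would be lost. I do not foresee any other obstacles, since no global preprocessing on $X$ or $Y$ is required outside what \cref{prp:aplce} already performs internally.
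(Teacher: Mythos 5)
Your proposal is correct and follows the same route as the paper: invoke \cref{lem:simple_correct} for correctness, implement each $\apLCE{k}$ query via \cref{prp:aplce}, and bound $\sum_i (d_i-d'_i)\le |X|$ by telescoping through the monotone chain $0\le d'_0\le d_0\le d'_1\le\cdots\le d_k\le |X|$. The only cosmetic difference is that the paper states this chain directly while you unpack the telescoping sum explicitly; the key observation $d'_{i+1}\ge d_i$ is the same in both.
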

\begin{proof}
The pseudocode is given in \cref{alg:simple}. Queries $\apLCE{k}$ are implemented using \cref{prp:aplce}.
With high probability,  all the queries are answered correctly. Conditioned on this assumption,
\cref{lem:simple_correct} yields that \cref{alg:simple} is correct with high probability.
It remains to analyze the running time. The cost of instructions other than $\apLCE{k}$ queries is $\Oh(k)$.
By \cref{prp:aplce}, the cost of computing $d_i$ is $\Ohtilde(\frac{1}{k+1}({d_i-d'_i})+k)$.
Due to $0\le d'_0\le d_0 \le d'_1\le d_1 \le \cdots \le d'_k \le d_k \le |X|$, this sums up to $\Ohtilde(\frac{1}{k+1}|X|+k^2)$ across 
all queries.
\end{proof}

\subsection{Proof of \cref{prp:aplce}}\label{sec:apLCE}
Our implementation of $\max_{j\in J}\apLCE{k}^{X,Y}(i,j)$ queries heavily borrows from~\cite{GKS19}.
However, our problem is defined in a more abstract way and we impose stricter conditions 
on the output value, so we cannot use tools from~\cite{GKS19} as black boxes; thus, we opt for a self-contained presentation.

On the highest level, in \cref{lem:pm}, we develop an oracle that, additionally given a threshold $\ell$,
must return YES if $\max_{j\in J}\LCE_0^{X,Y}(i,j) \ge \ell$,
must return NO if  $\max_{j\in J}\LCE_k^{X,Y}(i,j) < \ell$, and may return an arbitrary answer otherwise.
The final algorithm behind \cref{prp:aplce} is then an exponential search on top of the oracle.
This way, we effectively switch to the decision version of the problem, 
which is conceptually and technically easier to handle.

The oracle can be specified as follows:
it must return YES if $X[i\dd i+\ell)=Y[j\dd j+\ell)$ for some $j\in J$,
and NO if $\HAM(X[i\dd i+\ell), Y[j\dd j+\ell))>k$ for every $j\in J$.
Now, if $\ell\le 3|J|$, then we can afford running a classic exact pattern matching algorithm~\cite{MP70}
to verify the YES-condition.
Otherwise, we use the same method to filter \emph{candidate positions} $j\in J$
satisfying $X[i\dd i+3|J|)=Y[j\dd j+3|J|)$.
If there is just one candidate position $j$, we can continue checking it by comparing $X[i+s]$ and $Y[j+s]$ at shifts $s$ sampled uniformly at random with rate $\Thetatilde(\frac{1}{k+1})$.

If there are many candidate positions, \cref{fct:period} implies that $X[i\dd i+3|J|)$ is periodic with period $p\le|J|$
and that the candidate positions form an arithmetic progression with difference $p$.
We then check whether $p$ remains a period of $X[i\dd i+\ell)$, and of $Y[j\dd j+\ell)$ for the leftmost candidate~$j$.
Even if either check misses $\frac k2$ mismatches with respect to the period,
two positive answers guarantee $\HAM(X[i\dd i+\ell),\allowbreak Y[j\dd j+\ell))\le k$, which lets us return YES.
Thus, the periodicity check (\cref{lem:period}) can be implemented by testing individual positions sampled with rate $\Thetatilde(\frac{1}{k+1})$. 

A negative answer of the periodicity check is witnessed by a single mismatch with respect to the period.
However, further steps of the oracle require richer structure as a leverage. Thus,
we augment the periodicity check so that it returns a \emph{break} $B$ with $|B|=2|J|$ and $\per(B)>|J|$.
For this, we utilize a binary-search-based procedure, which is very similar to finding ``\emph{period transitions}'' in~\cite{GKS19}.
Whenever $X[i\dd i+\ell)=Y[j\dd j+\ell)$, the break $B$ (contained in either string) must match exactly
the corresponding fragment in the other string. 
Since the break is short, we can afford checking this match for every $j\in J$ (using exact pattern matching again), and since it is not periodic, at most one candidate position $j\in J$ passes this test.
This brings us back to the case with at most one candidate position. 

Compared to the outline above, the algorithm described in \cref{lem:pm} handles the two main cases (many candidate positions vs one candidate position) in a uniform way, which simplifies formal analysis and implementation details.

We start with the procedure that certifies (approximate) periodicity or finds a break.

\begin{lemma}\label{lem:period}
  There exists an algorithm that, given a string $T$ an integer $k\ge 0$,
  and a positive integer $q\le \frac12|T|$, returns either
\begin{itemize}
  \item a length-$2q$ \emph{break} $B$ in $T$ such that $\per(B)>q$, or
  \item $\bot$, certifying that $p:=\per(T[0\dd 2q))\le q$ and $|\{i\in [0\dd |T|) : T[i]\ne T[i\bmod p]\}|\le k$.
\end{itemize}
With high probability, the algorithm is correct and costs $\Ohtilde(\frac{1}{k+1}|T|+q)$ time.
\end{lemma}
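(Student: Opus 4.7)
The plan is to proceed in three phases following the high-level sketch given by the authors: reduce to the periodic case, decide between approximate periodicity and ``many mismatches'' via Chernoff-style sampling, and in the latter case extract a break using a binary-search refinement modelled on the ``period transitions'' of \cite{GKS19}.

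First I would compute $p := \per(T[0\dd 2q))$ in $\Oh(q)$ time via the KMP failure function, returning $T[0\dd 2q)$ immediately when $p > q$. Otherwise I set up a virtual reference $\pi$ with $\pi[i] := T[i \bmod p]$, which agrees with $T$ on $[0\dd 2q)$, so the task becomes to decide whether $|D| \le k$ for $D := \{i : T[i]\ne\pi[i]\}$ or to find a length-$2q$ break. Next I would uniformly sample $\Theta(\frac{|T|\log n}{k+1})$ positions in $[2q \dd |T|)$, access $T[i]$ and $T[i \bmod p]$ to test each, and apply a Chernoff bound with threshold $\tau = \Theta(\log n)$: if the sampled mismatch count is below $\tau$, return $\bot$ (with high probability $|D| \le k$ then holds).

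The structural core of the argument is the observation that when $m^*$ is the first mismatch position (which necessarily satisfies $m^* \ge 2q$ since $\pi$ and $T$ agree on $[0\dd 2q)$), the window $B := T[m^*-2q+1 \dd m^*+1]$ has period strictly greater than $q$. I would prove this by contradiction: any period $p' \le q$ of $B$ is also a period of the prefix $T[m^*-2q+1 \dd m^*)$, a length-$(2q-1)\ge 2p-1$ fragment of $\pi$; Fine and Wilf then force this prefix to have shortest period $p$, so $p \mid p'$, and unfolding yields $T[m^*] = B[2q-1-p'] = T[m^*-p'] = \pi[m^*-p'] = \pi[m^*]$, contradicting the mismatch at $m^*$.

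What remains, when the sampled mismatch count exceeds $\tau$, is to locate $m^*$ (or any mismatch preceded by $2q-1$ matches) within a further $\Ohtilde(\frac{|T|}{k+1}+q)$ queries. I would start from the leftmost sampled mismatch $m_0 := \min M$ as an upper bound on $m^*$ and refine this bound through $\Oh(\log|T|)$ rounds: in each round, draw fresh uniform samples from the current interval at a rate scaled inversely to the interval length, replace the upper bound by the new leftmost mismatch, and stop once the interval has length $\Oh(q)$, at which point it can be scanned exhaustively. The hard part will be the cost analysis of this refinement, ensuring that (i) the geometric series of sample counts sums to $\Ohtilde(\frac{|T|}{k+1})$ even under adversarial clustering of mismatches, and (ii) Chernoff-style correctness survives across all rounds; this is the step that essentially replicates and must adapt the period-transition machinery of \cite{GKS19}.
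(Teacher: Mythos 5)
Your outline reproduces the paper's first two steps faithfully (computing $p=\per(T[0\dd 2q))$, returning $T[0\dd 2q)$ when $p>q$, and sampling at rate $\Thetatilde(\frac{1}{k+1})$ to decide whether to return $\bot$; the paper does this with a single-hit test rather than a Chernoff threshold, but that is a cosmetic difference). The genuine gap is in your break-finding phase. You propose to converge on the first incompatible position $m^*$ (or some incompatible position preceded by $2q-1$ compatible ones) by repeated \emph{random} sampling at ``a rate scaled inversely to the interval length,'' and you flag the cost analysis as the unresolved hard part. This approach cannot work: if the only mismatch in the current interval is a single isolated incompatible position, then any sampling rate bounded away from $1$ will miss it with constant probability, so the refinement stalls and you have no way to certify that the remaining interval is mismatch-free. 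The paper's binary search avoids this entirely by being \emph{deterministic} once a single incompatible position $s$ is in hand: it maintains $2q\le b\le e$, with $e$ incompatible and $[b-2q\dd b)$ all compatible, and in each iteration exhaustively scans the $2q$-window $[m-2q\dd m)$ for $m=\ceil{(b+e)/2}$---either a new incompatible position is found there (so $e$ decreases past $m$) or the window is certified compatible (so $b:=m$). This halves $e-b$ per iteration at cost $\Oh(q)$ each, giving $\Oh(q\log|T|)$ total, and the exhaustive $2q$-window scan is exactly what lets the invariant ``$[b-2q\dd b)$ all compatible'' be maintained with certainty. That deterministic scan of a length-$2q$ window, not further random sampling, is the missing idea.

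A secondary issue: your periodicity argument asserts that ``Fine and Wilf then force this prefix to have shortest period $p$, so $p\mid p'$.'' Fine--Wilf by itself only yields that $g:=\gcd(p,p')$ is a period of the length-$(2q-1)$ fragment; concluding $g=p$ requires an additional argument (e.g.\ propagating $g$-periodicity back to $T[0\dd p)$ via the $p$-periodicity of $\pi$ and contradicting minimality of $p$). The paper sidesteps this by arguing directly on characters: from $p'$ a period of $T(b-2q\dd b]$ and $g$ a period of $T(b-2q\dd b)$ it deduces $T[b]=T[b-p']=T[b-p]=T[(b-p)\bmod p]=T[b\bmod p]$, contradicting incompatibility of $b$, without ever needing $p\mid p'$.
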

\SetKwFunction{FindBreak}{FindBreak}
\begin{algorithm}[t!]

$p := \per(T[0\dd 2q))$\;\label{ln:FBper}
\lIf{$p > q$}{\Return{$T[0\dd 2q)$}}
Let $S\sub [0\dd |T|)$ with elements sampled independently at sufficiently large rate~$\Thetatilde(\frac{1}{k+1})$\;
\ForEach{$s\in S$}{
  \If{$T[s] \ne T[s\bmod p]$}{\label{ln:FBtest}
    $b:=2q$; $e := s$\;\label{ln:FBbeg}
    \While{$b < e$}{\label{ln:FBbin}
      $m := \ceil{\frac{b+e}{2}}$\;
      \For{$j := m-2q$ \KwSty{to} $m-1$}{
        \lIf{$T[j] \ne T[j\bmod p]$}{$e := j$}\label{ln:FBe}
      }
      \lIf{$e \ge m$}{$b := m$}\label{ln:FBb}
    }
    \Return{$T(b-2q\dd b]$}\label{ln:FBY}
  }
}
\Return{$\bot$}\label{ln:FBbot}
\caption{\protect\FindBreak{$T$, $q$, $k$}}\label{alg:period}
\end{algorithm}

\begin{proof}
A procedure $\FindBreak{$T$, $q$, $k$}$ implementing \cref{lem:period}
is given as \cref{alg:period}.

First, the algorithm computes the shortest period $p = \per(T[0\dd 2q))$. 
If $p > q$, then the algorithm returns $B:=T[0\dd 2q)$, which is a valid break due to $\per(T)=p>q$.

Otherwise, the algorithm tries to check if $\bot$ can be returned.
If we say that a position $i\in [0\dd |T|)$ is \emph{compatible} when $T[i] = T[i \bmod p]$,
then $\bot$ can be returned provided that there are at most $k$ incompatible positions.
The algorithm samples a subset $S\sub [0\dd |T|)$ with a sufficiently large rate
$\Ohtilde(\frac{1}{k+1})$. 
Such sampling rate guarantees that if there are at least $k+1$ incompatible positions,
then  with high probability at least one of them belongs to $S$.
Consequently, the algorithm checks whether all positions $s\in S$ are compatible (\cref{ln:FBtest}),
and, if so, returns $\bot$ (\cref{ln:FBbot}); this answer is correct with high probability.

In the remaining case, the algorithm constructs a break $B$ based on an incompatible position $s$
(Lines~\ref{ln:FBbeg}--\ref{ln:FBY}).
The algorithm performs a binary search maintaining positions $b,e$ with $2q \le b \le e < |T|$ such that $e$ is incompatible and positions in $[b-2q\dd b)$ are all compatible.
The initial choice of $b := 2q$ and $e := s$ satisfies the invariant because positions in $[0\dd 2q)$ are all 
compatible due to $p=\per(T[0\dd 2q))$.
While $b < e$, the algorithm chooses $m := \ceil{\frac{b+e}{2}}$.
If $[{m-2q}\dd\allowbreak m)$ contains an incompatible position $j$, then $j\ge b$ (because $j\ge m-2q\ge b-2q$ and positions in $[b-2q\dd b)$ are all compatible), so the algorithm maintains the invariant setting $e:=j$ for such a position $j$ (\cref{ln:FBe}). 
Otherwise, positions in $[m-2q\dd m)$ are all compatible. Due to $m\le e$,
this means that the algorithm maintains the invariant setting  $b:= m$ (\cref{ln:FBb}).
Since $e-b$ decreases by a factor of at least two in each iteration,
after $\Oh(\log |T|)$ iterations, the algorithm obtains $b = e$.
Then, the algorithm returns $B:=T(b-2q\dd b]$.

We shall prove that this is a valid break.
For a proof by contradiction, suppose that $p':= \per(B) \le q$. 
Then, $p'$ is also period of $T(b-2q\dd b)$.
Moreover, the invariant guarantees that positions in $(b-2q\dd b)$ are all compatible, 
so also $p$ is a period of $T(b-2q\dd b)$.
Since $p+p'-1 \le 2q-1$, the periodicity lemma~\cite{FW65} implies that also $\gcd(p,p')$ is a period of $X(b-2q\dd b)$.
Consequently, $T[b] = T[{b-p'}]=T[b-p]=T[(b-p)\bmod p]=T[b\bmod p]$,
i.e., $b$ is compatible. However, the invariant assures that $b$ is incompatible.
This contradiction proves that $\per(B)>q$.

It remains to analyze the running time.
Determining $\per(T[0\dd 2q))$ in \cref{ln:FBper} costs $\Oh(q)$ time using a classic algorithm~\cite{MP70}.
The number of sampled positions is $|S|=\Ohtilde(\frac{1}{k+1}|T|)$ with high probability,
so the test in \cref{ln:FBtest} costs  $\Ohtilde(\frac{1}{k+1}|T|)$ time in total.
Binary search (the loop in \cref{ln:FBbin}) has $\Oh(\log |T|)=\Ohtilde(1)$ iterations,
each implemented in $\Oh(q)$ time. 
The total running time is $\Ohtilde(\frac{1}{k+1}|T|+q)$.
\end{proof}

Next comes the oracle testing $\max_{j\in J}\apLCE{k}^{P,T}(i,j)\le \ell$. 

\begin{lemma}\label{lem:pm}
There exists an algorithm that, given strings $X$ and $Y$, an integer $k\ge 0$, an integer $\ell>0$,
an integer $i\in [0\dd |X|-\ell]$, and a non-empty range $J\sub [0\dd |Y|-\ell]$,
returns YES if $\exists_{j\in J}: X[i\dd i+\ell)=Y[j\dd j+\ell)$,
and NO if $\forall_{j\in J}: \HAM(X[i\dd i+\ell),Y[j\dd j+\ell)) > k$.
With high probability,  the algorithm is correct and its running time is $\Ohtilde(\frac{\ell}{k+1}+|J|)$.
\end{lemma}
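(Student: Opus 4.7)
My plan is to follow the outline already sketched before the lemma, structured as a case analysis on the relative sizes of $\ell$ and $|J|$. In the short-pattern regime $\ell \le 3|J|$, I would simply run Morris--Pratt~\cite{MP70} exact pattern matching for $X[i\dd i+\ell)$ inside $Y[\min J\dd \max J+\ell)$ in $\Oh(\ell+|J|)=\Oh(|J|)$ time, returning YES iff an occurrence at a position in $J$ is found. Returning NO in the ``middle'' regime (Hamming distance $\le k$ but no exact match) is permitted by the specification, so this is correct.

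In the long-pattern regime $\ell > 3|J|$, I first compute the candidate set $C:=\Occ(X[i\dd i+3|J|),Y)\cap J$ by Morris--Pratt in $\Oh(|J|)$ time. If $C$ is empty, I return NO. Otherwise, \cref{fct:period}, applied to the length-$3|J|$ pattern inside a window of length at most $4|J|-1\le\tfrac32\cdot 3|J|$, guarantees that $C$ is an arithmetic progression whose difference is $\per(X[i\dd i+3|J|))$; if $|C|\ge 2$, this difference is at most $|J|$.

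When $|C|\ge 2$, I call \textsf{FindBreak} from \cref{lem:period} on $X[i\dd i+\ell)$ with parameters $q=|J|$ and $\floor{k/2}$, and analogously on $Y[j^*\dd j^*+\ell)$ where $j^*:=\min C$. If both calls return $\bot$, then each of the two strings is within $\floor{k/2}$ Hamming mismatches of its periodic extension; since $X[i\dd i+3|J|)=Y[j^*\dd j^*+3|J|)$ forces the two periodic extensions to coincide, the triangle inequality yields $\HAM(X[i\dd i+\ell),Y[j^*\dd j^*+\ell))\le k$, so I return YES (consistent with the specification). If instead one of the calls returns a length-$2|J|$ break $B$ with $\per(B)>|J|$, say $B=X[i+b\dd i+b+2|J|)$, then any YES-witness $j\in J$ forces $Y[j+b\dd j+b+2|J|)=B$; a Morris--Pratt search for $B$ inside $Y[\min J+b\dd \max J+b+2|J|)$ runs in $\Oh(|J|)$ time, and \cref{fct:period}, applied now to the length-at-most-$3|J|-1\le\tfrac32|B|$ window, yields at most one occurrence and hence at most one remaining candidate. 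The symmetric situation where $B$ lies inside $Y[j^*\dd j^*+\ell)$ is handled analogously. Either way, I am reduced to the same state as when $|C|=1$: a single candidate $j\in J$ to verify.

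To verify the remaining candidate $j$, I sample $S\sub [0\dd \ell)$ uniformly at rate $\Theta(\log n/(k+1))$ and return YES iff $X[i+s]=Y[j+s]$ for every $s\in S$. If $X[i\dd i+\ell)=Y[j\dd j+\ell)$, then this $j$ is precisely the surviving candidate and every sample matches; conversely, if $\HAM(X[i\dd i+\ell),Y[j\dd j+\ell))>k$, the expected number of sampled mismatches is $\Omega(\log n)$, so at least one is detected with high probability. The total cost is $\Oh(|J|)$ for the Morris--Pratt calls, $\Ohtilde(\tfrac{\ell}{k+1}+|J|)$ for the two \textsf{FindBreak} invocations (by \cref{lem:period}), and $\Ohtilde(\tfrac{\ell}{k+1})$ for sampling, matching the claimed $\Ohtilde(\tfrac{\ell}{k+1}+|J|)$ bound. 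The main subtlety I anticipate is the symmetric bookkeeping when the break returned by \textsf{FindBreak} sits in $Y$ rather than in $X$, together with aligning the two periodic extensions via $X[i\dd i+3|J|)=Y[j^*\dd j^*+3|J|)$; the rest is a routine case analysis on $|C|$.
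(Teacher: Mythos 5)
Your outline matches the paper's strategy and most of the details are right, but there is a genuine gap in the ``symmetric'' $Y$-break case that you flag as a subtlety but do not resolve, and the analogy you invoke does not in fact go through.

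When $|C|\ge 2$, you call \textsf{FindBreak} on $Y[j^*\dd j^*+\ell)$ with $j^*=\min C$. The break that comes back has the form $B_Y = Y[j^*+b\dd j^*+b+2|J|)$ with $b\ge 1$, and the only guarantee is $b\ge 1$. The filtering argument for the $X$-break case works because $B_X$ is contained in $X[i\dd i+\ell)$, so a YES-witness $j$ forces a match of $B_X$ at a determined location \emph{inside} $Y[j\dd j+\ell)$. For the $Y$-break, you would correspondingly want $B_Y$ to lie inside $Y[j\dd j+\ell)$ for \emph{every} $j\in C$, so that a YES-witness $j$ forces $B_Y = X[i+(j^*+b)-j\dd i+(j^*+b)-j+2|J|)$. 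But when $b < j - j^*$ — which can happen whenever $\max C-\min C\ge 2$, since $b$ can be as small as $1$ — the break starts strictly before $Y[j]$, so part of it lies outside the window aligned with $X[i\dd i+\ell)$, the corresponding $X$-position $i+(j^*+b)-j$ lies strictly to the left of $i$, and the equality $X[i\dd i+\ell)=Y[j\dd j+\ell)$ tells you nothing about those characters. Your filtering test can then discard a genuine YES-witness and return NO on a YES-instance.

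The paper avoids this by calling \textsf{FindBreak} on $Y[\max C\dd \min C+\ell)$ rather than on $Y[\min C\dd \min C+\ell)$. Because the break always starts at position $\ge 1$ and ends at position $<$ the string length, this choice forces the returned break to lie in $Y[\max C+1\dd \min C+\ell)$, which is contained in $Y[j\dd j+\ell)$ for every $j\in C$ (using $\min C\le j\le\max C$). The string $Y[\max C\dd \min C+\ell)$ is still long enough ($\ge 2|J|+1$, since $\ell\ge 3|J|$ and $\max C-\min C\le |J|-1$) and still has small period on its prefix (since $Y[\max C\dd\max C+3|J|)=X[i\dd i+3|J|)$). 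Your YES-case argument (two $\bot$ answers imply Hamming distance $\le k$ via the coinciding period extensions) and your $X$-break filtering are fine, and your short-pattern regime and final sampled verification are also fine; the only thing missing is precisely this choice of substring for the second \textsf{FindBreak} call, and without it the algorithm is incorrect.
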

\SetKwFunction{PM}{Oracle}
\begin{proof}
A procedure $\PM{$P$, $T$, $i$, $J$, $k$, $\ell$}$ implementing \cref{lem:pm}
is given as \cref{alg:pm}.

\paragraph*{Algorithm}
If $\ell < 3|J|$, then the algorithm simply returns the answer based on
whether $X[i\dd i+\ell)=Y[j\dd j+\ell)$  holds for some $j\in J$.
Otherwise, the algorithm computes a set $C\sub J$ of \emph{candidate positions} $j$
satisfying $X[i\dd i+3|J|)=Y[j\dd j+3|J|)$, and returns NO if $C=\emptyset$.
In the remaining case, the algorithm applies the procedure $\textsf{FindBreak}$ of \cref{lem:period} to $X[i\dd i+\ell)$ and $Y[\max C \dd \min C+\ell)$, both with $q=|J|$ and threshold $\floor{\frac k2}$. If both strings are certified to have an approximate period, then the algorithm returns YES.
Otherwise, the algorithm further filters $C$ using the breaks returned by $\textsf{FindBreak}$:
If a break $B_X=X[x \dd x')$ is found in $X[i\dd i+\ell)$,
then $C$ is restricted to positions $j$ satisfying $B_X = Y[j-i+x\dd \allowbreak j-i+x')$.
Similarly, if a break $B_Y=Y[y\dd y')$ is found in $Y[\max C \dd \min C + \ell)$,
then $C$ is restricted to positions $j$ satisfying $B_Y = X[i-j+y\dd i-j+y')$.
If this filtering leaves $C$ empty, then the algorithm returns NO.
Otherwise, the algorithm samples a subset $S\sub [0\dd \ell)$ with sufficiently large rate $\Ohtilde(\frac{1}{k+1})$,
and returns the answer depending on whether $X[i+s]=Y[\min C + s]$ holds for all $s\in S$.

\paragraph*{Correctness}
Denote $M = \{j \in J : X[i\dd i+\ell)=Y[j\dd j+\ell)\}$. Recall that the algorithm
must return YES if $M\ne \emptyset$, and it may return NO whenever $M=\emptyset$.

If $|J|<3\ell$, then the algorithm verifies $M\ne \emptyset$, so the answers are correct. 
Thus, we henceforth assume $|J|\ge 3\ell$.

Let us argue that $M \sub C \sub J$ holds throughout the execution:
indeed, every position $j\in M$ satisfies $X[i\dd \allowbreak i+3|J|)=Y[j\dd j+3|J|)$,
as well as $X[x\dd x')=Y[{j-i+x}\dd\allowbreak j-i+x')$ for every fragment $X[x\dd x')$ contained in $X[i\dd i+\ell)$,
and $Y[y\dd y')=X[i-j+y\dd\allowbreak i-j+y')$ for every fragment $Y[y\dd y')$ contained in $Y[j\dd j+\ell)$.
Moreover, the strings in the two calls to $\textsf{FindBreak}$ are chosen so that the breaks, if any,
are contained in $X[i\dd i+\ell)$, and in $Y[j\dd j+\ell)$ for every $j\in C$, respectively. 
Consequently, the NO answers returned in \cref{ln:PMNo1,ln:PMNo2} are correct.

Next, note that the calls to $\textsf{FindBreak}$ satisfy the requirements of \cref{lem:period}.
In particular, the two strings are of length at least $3|J|$ and $2|J|$, respectively.
To justify the YES answer in \cref{ln:PMYES}, we shall prove that $\HAM(X[i\dd i+\ell),Y[\min C \dd \min C + \ell))\le k$ holds with high probability in case both calls return $\bot$.
Denote $p = \per(X[i\dd i+2|J|])$, let $P=X[i\dd i+p)$ be the corresponding string period,
and let $P^\infty$ be the concatenation of infinitely many copies of $P$.
The outcome $\bot$ of the first call to $\textsf{FindBreak}$ certifies that $X[i\dd i+\ell)$
is with high probability at Hamming distance at most $\frac k2$ from a prefix of $P^\infty$.
Due to $X[i\dd i+2|J|)=Y[\max C \dd \max C + 2|J|)$, 
the outcome $\bot$ of the second call to $\textsf{FindBreak}$ certifies that also $Y[\max C \dd \min C+\ell)
$ is with high probability at Hamming distance at most $\frac k2$ from a prefix of $P^\infty$.
Moreover, by \cref{fct:period}, $p$ is a divisor of $\max C - \min C$, 
so, $Y[\min C \dd \max C)$ is an integer power of $P$.
Thus, $Y[\min C \dd \min C + \ell)$ is with high probability at Hamming distance at most $\frac k2$ from a prefix of $P^\infty$.
Now, the triangle inequality yields $\HAM(X[i\dd i+\ell),Y[\min C \dd \min C + \ell))\le k$, as claimed.

It remains to justify the answers returned in \cref{ln:PMtest,ln:PMyes}.
Because the breaks $B_X$ and $B_Y$, if defined, satisfy $\per(B_X)>|J|$ and $\per(B_Y)>|J|$, 
\cref{lem:period} implies that their exact occurrences must be more than $|J|$ positions apart.
Consequently, applying \cref{ln:PMF1} or \cref{ln:PMF2}
leaves at most one position in $C$.
Thus, the algorithm correctly returns NO if it detects a mismatch in \cref{ln:PMtest} while testing random shifts $s$
for the unique position $\min C \in C$.
Finally, note that the sampling rate in the construction of $S$ 
guarantees that if there are at least $k+1$ mismatches between $X[i\dd i+\ell)$ and $Y[\min C\dd \min C+\ell)$, then  with high probability at least one of them is detected.
Thus, returning YES in \cref{ln:PMyes} is also correct.

\begin{algorithm}[t]
  \If{$\ell <3|J|$}{\Return $\exists_{j\in J}: X[i\dd i+\ell)=Y[j \dd j+\ell)$\;}\label{ln:PMpm}
  $C := \{j \in J : X[i\dd i+3|J|) = Y[j \dd j+3|J|)\}$\;\label{ln:PMpref}
  \lIf{$C = \emptyset$}{\Return NO}\label{ln:PMNo1}
  $B_X := \FindBreak{$X[i\dd i+\ell)$, $|J|$, $\floor{\frac k2}$}$\;
  $B_Y := \FindBreak{$Y[\max C\dd \min C{+}\ell)$, $|J|$, $\floor{\frac k2}$}$\;
  \lIf{$\bot = B_X$ \KwSty{and} $\bot = B_Y$}{\Return YES}\label{ln:PMYES}
  \lIf{$\bot \ne B_X =: X[x\dd x')$}{%
    $C := \{j\in C : B_X = Y[j-i+x\dd j-i+x')\}$%
  }\label{ln:PMF1}
  \lIf{$\bot \ne B_Y =: Y[y \dd y)$}{%
    $C := \{j \in C : B_Y = X[i-j+y\dd i-j+y')\}$%
  }\label{ln:PMF2}
  \lIf{$C = \emptyset$}{\Return NO}\label{ln:PMNo2}
  Let $S\sub\![0\dd \ell)$ with elements sampled independently at sufficiently large rate $\Thetatilde(\frac{1}{k+1})$\;
  \ForEach{$s\in S$}{
    \lIf{$X[i+s] \ne Y[\min C+s]$}{\Return NO}\label{ln:PMtest}
  }
  \Return YES\;\label{ln:PMyes}
  \caption{\protect\PM{$X$, $Y$, $i$, $J$, $k$, $\ell$}}\label{alg:pm}
  \end{algorithm}

\paragraph*{Running time}
\cref{ln:PMpm,ln:PMpref,ln:PMF1,ln:PMF2} can be interpreted as finding exact occurrences of $X[i\dd i+\ell)$,
$X[i\dd i+3|J|)$, $B_X$, and $B_Y$, respectively, starting at up to $|J|$ consecutive positions of $X$ or $Y$.
Since the length of all these patterns is $\Oh(|J|)$, this search can be implemented in $\Oh(|J|)$ 
using a classic pattern matching algorithm~\cite{MP70}.
The calls to \FindBreak\ from \cref{lem:period} cost $\Ohtilde(\frac{\ell}{\floor{k/2}+1}+|J|)$ time with high probability.
Finally, the number of sampled positions is $|S|=\Ohtilde(\frac{\ell}{k+1})$ with high probability,
and this is also the total cost of \cref{ln:PMtest}.
The total running time is $\Ohtilde(\frac{\ell}{k+1}+|J|)$.
\end{proof}

Finally, we derive \cref{prp:aplce} via a simple reduction to \cref{lem:pm}.
\aplce*
\begin{proof}
  Observe that \cref{lem:pm} provides an oracle that returns YES if $\max_{j\in J}\LCE_0^{X,Y}(i,j)\ge \ell$
  and NO if $\max_{j\in J}\LCE_k^{X,Y}(i,j)< \ell$.
  However, before calling $\PM{$P$, $T$, $i$, $J$, $k$, $\ell$}$, we need to make sure that $\ell>0$, $i\in [0\dd |X|-\ell]$, and $\emptyset \ne J \sub [0\dd |Y|-\ell]$.
  Thus, basic corner cases have to be handled separately:
  The algorithm returns YES if $\ell \le 0$;
  otherwise, it sets $J:= J \cap [0\dd |Y|-\ell]$,
  returns NO if $i\notin [0\dd |X|-\ell]$
  or $J=\emptyset$, and makes a call $\PM{$P$, $T$, $i$, $J$, $k$, $\ell$}$ in the remaining case.

  A single call to the oracle costs $\Ohtilde(\frac{\ell}{k+1} + |J|)$ time.
  Hence, we need to make sure that the intermediate values of the threshold $\ell$
  are bounded from above by a constant multiple of the final value.
  For this, the algorithm uses exponential search rather than ordinary binary search.
\end{proof}

\section{Improved Approximation Ratio}\label{sec:edLCE}
Goldenberg et al.~\cite{GKS19} generalized their algorithm in order to solve
the $k$ vs $\alpha k$ gap edit distance problem in $\Ohtilde(\frac{n}{\alpha}+k^3)$
time for any $\alpha\ge 1$.
This transformation is quite simple, because \cref{alg:lv} (the Landau--Vishkin algorithm)
with $\LCE_0$ queries replaced by $\apLCE{\alpha-1}$ queries 
returns YES if $\ED(X,Y)\le k$ and NO if $\ED(X,Y)> k+(\alpha-1)(k+1)$.

However, if we replace $\apLCE{k}$ queries with $\apLCE{\alpha-1}$ queries in \cref{alg:simple},
then we are guaranteed to get a NO answer only if $\ED(X,Y)>2k(k+2)+(\alpha-1)(k+1)$.
As a result, with an appropriate adaptation of \cref{prp:aplce},
\cref{alg:simple} yields an $\Ohtilde(\frac{n}{\alpha}+k^2)$-time solution to the $k$ vs $\alpha k$ gap edit distance problem only for $\alpha =\Omega(k)$. 
The issue is that \cref{alg:simple} incurs a cost of up to $\Theta(k)$ edits for up to $\Theta(k)$ arbitrary
changes of the shift $y-x$ within queries $\LCE^{X,Y}(x,y)$. 
On the other hand,  no such shift changes are performed in \cref{alg:lv}, but this results 
in $\LCE^{X,Y}(x,y)$ queries asked for up to $\Theta(k^2)$ distinct positions $x$, which is the reason behind the $\Ohtilde(k^3)$ term in the running time $\Ohtilde(\frac{n}{\alpha}+k^3)$ of~\cite{GKS19}.

Nevertheless, since each $\apLCE{\alpha-1}^{X,Y}(x,y)$ query incurs a cost of up to $\alpha-1$ edits (mismatches)
it is still fine to pay $\Oh(\alpha-1)$ further edits (insertions or deletions) to change the shift $y-x$
by up to $\alpha-1$. Hence, we design \cref{alg:improved} as a hybrid of \cref{alg:lv,alg:simple}.

\begin{algorithm}[h]
  \lForEach{$i \in [0\dd k]$ \KwSty{and} $j \in [\floor{\frac{-k}{\alpha}}-1\dd \floor{\frac{k}{\alpha}}+1]$}{$d'_{i,j}:=d_{i,j}:=-\infty$}
  $d'_{0,0}:= 0$\;
  \For{$i := 0$ \KwSty{to} $k$}{
    \For{$j := \floor{\frac{-k}{\alpha}}$ \KwSty{to} $\floor{\frac{k}{\alpha}}$}{
      \If{$d'_{i,j}\ne -\infty$}{%
        $d_{i,j} := d'_{i,j}+\max_{\delta=j\alpha}^{(j+1)\alpha-1}\apLCE{\alpha-1}^{X,Y}(d'_{i,j},d'_{i,j}+\delta)$\;
      }%
    }  
    \For{$j := \floor{\frac{-k}{\alpha}}$ \KwSty{to} $\floor{\frac{k}{\alpha}}$}{%
      $d'_{i+1,j} := \min(|X|,\max(d_{i,j-1},d_{i,j}+1,d_{i,j+1}+1))$\;
    }
  
  }
  $j := \floor{\frac{1}{\alpha}(|Y|-|X|)}$\;
  \lIf{$||X|-|Y|| \le k$ \KwSty{and} $d_{k,j} = |X|$}{\Return{YES}}
  \lElse{\Return{NO}}
  \caption{Improved algorithm}\label{alg:improved}
  \end{algorithm}

\begin{lemma}\label{lem:improved_correct}
  For any integers $k\ge 0$ and $\alpha \ge 1$, \cref{alg:improved} returns YES if $\ED(X,Y)\le k$ and NO if $\ED(X,Y)>k+3(k+1)(\alpha-1)$.
\end{lemma}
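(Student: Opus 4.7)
The plan is to mirror the structure of the proof of \cref{lem:simple_correct} for \cref{alg:simple}, now tracking an extra wide-diagonal index $j$ so that $d'_{i,j}$ and $d_{i,j}$ represent partial alignments whose shift $y-x$ lies in the window $[j\alpha \dd (j+1)\alpha)$. Set $j^\star := \floor{(|Y|-|X|)/\alpha}$; the YES output asserts $||X|-|Y||\le k$ together with $d_{k,j^\star}=|X|$, and the definition of $j^\star$ places $|Y|$ in the window $[|X|+j^\star\alpha \dd |X|+(j^\star+1)\alpha)$, so any invariant established at row $i=k$ and column $j^\star$ will bound $\ED(X,Y)$ at $y=|Y|$.

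For soundness I would prove by induction on $i\in [0\dd k]$ the two invariants: (a) whenever $d'_{i,j}\ne -\infty$, $\ED(X[0\dd d'_{i,j}), Y[0\dd y)) \le i(3\alpha-2) + (\alpha-1)$ for every $y\in [d'_{i,j}+j\alpha \dd d'_{i,j}+(j+1)\alpha) \cap [0\dd |Y|]$; and (b) $\ED(X[0\dd d_{i,j}), Y[0\dd y)) \le i(3\alpha-2) + 3(\alpha-1)$ for $y$ in the analogous window around $d_{i,j}+j\alpha$. The (a)$\Rightarrow$(b) step is immediate: an $\apLCE{\alpha-1}$ query contributes up to $\alpha-1$ mismatches at a witness shift $\delta^\star\in [j\alpha\dd(j+1)\alpha)$, and moving the right endpoint $y$ within its window from $d_{i,j}+\delta^\star$ costs up to $\alpha-1$ further insertions/deletions. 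The (b)$\Rightarrow$(a) step at level $i+1$ considers each of the three contributions $d_{i,j-1}$, $d_{i,j}+1$, $d_{i,j+1}+1$ to $d'_{i+1,j}$; in each case we pick the witness $y''$ from its invariant-(b) window as close as possible to the target $y$, so that the cost is one edit (the insertion/substitution/deletion crossing wide diagonals) plus up to $\alpha-1$ further adjustments, totalling $\alpha$. Unwinding yields $i(3\alpha-2)+3(\alpha-1)$ at $i=k$, and plugging $(i,j,y)=(k,j^\star,|Y|)$ into invariant (b) rearranges to $\ED(X,Y)\le k+3(k+1)(\alpha-1)$. The corner case where $d'_{i+1,j}$ is capped at $|X|$ only tightens the bound and does not break the invariant.

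For completeness I plan to prove by induction on $i$ that whenever $\ED(X[0\dd x),Y[0\dd y)) = i$ for some $x\in[0\dd|X|]$, $y\in [0\dd|Y|]$ with $i\le k$, we have $x \le d_{i,j}$ where $j := \floor{(y-x)/\alpha}$; the index $j$ lies in the valid range because $|y-x|\le i\le k$. The argument adapts \cref{clm:simple_yes} essentially verbatim: isolate the longest prefix of an optimal alignment that uses $i-1$ edits, obtaining $x',y',x'',y''$ with $x''-x',\, y''-y' \in \{0,1\}$ and $X[x''\dd x)=Y[y''\dd y)$, apply the inductive hypothesis at $j' := \floor{(y'-x')/\alpha}$, and note that $|j-j'|\le 1$ since a single edit changes $y-x$ by at most one. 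Hence one of $d_{i-1,j-1}$, $d_{i-1,j}+1$, $d_{i-1,j+1}+1$ witnesses $x''\le d'_{i,j}$. Finally, because $y-x\in [j\alpha\dd (j+1)\alpha)$, the max defining $d_{i,j}$ includes $\apLCE{\alpha-1}^{X,Y}(d'_{i,j},d'_{i,j}+(y-x))\ge \LCE_0^{X,Y}(d'_{i,j},d'_{i,j}+(y-x))\ge x-d'_{i,j}$, promoting $x''\le d'_{i,j}$ to $x\le d_{i,j}$. Applying this with $(x,y,i)=(|X|,|Y|,\ED(X,Y))$ and using monotonicity of $d_{\cdot,j^\star}$ in $i$ forces $d_{k,j^\star}=|X|$, i.e.\ the YES output.

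The main obstacle is the arithmetic bookkeeping in the soundness step: for the final bound to collapse to exactly $k+3(k+1)(\alpha-1)$ and, in particular, to degenerate to ordinary Landau--Vishkin at $\alpha=1$ (yielding $\ED\le k$), one must carefully pick the witness $y''$ optimally inside its window in each of the three recurrence cases so that only $\alpha-1$ (not $\alpha$ or $2\alpha-1$) further insertions/deletions are needed to migrate the shift between adjacent wide diagonals. This is where the wide-diagonal bookkeeping is essential and where a naive charging scheme would inflate the per-iteration cost beyond $3\alpha-2$, ruining the additive-$k$ guarantee.
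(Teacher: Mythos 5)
Your proposal mirrors the paper's proof: two inductive invariants on $d'_{i,j}$ and $d_{i,j}$ for soundness plus a third inductive claim for completeness, and your bounds $i(3\alpha-2)+(\alpha-1)$ and $i(3\alpha-2)+3(\alpha-1)$ are algebraically identical to the paper's $i+(3i+1)(\alpha-1)$ and $i+3(i+1)(\alpha-1)$. The only cosmetic difference is that you uniformly charge $\alpha$ per row in the (b)$\Rightarrow$(a) step, whereas the paper notes the $d_{i-1,j}+1$ sub-case costs only $1$; this overcount is harmless since $\alpha\ge 1$, and you additionally make explicit the monotonicity of $d_{\cdot,j^\star}$ and the $|X|$-capping corner case, which the paper leaves implicit.
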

\begin{proof}
  As in the proof of \cref{lem:simple_correct}, we characterize the values $d_{i,j}$ and $d'_{i,j}$ using two claims.

\begin{claim}\label{clm:improved_no}
  Each $i\in [0\dd k]$ and  $j \in [\floor{\frac{-k}{\alpha}}\dd \floor{\frac{k}{\alpha}}]$
  satisfies the following two properties:
  \begin{enumerate}[label={\rm(\alph*)}]
    \item\label{it:improved_dij} $\ED(X[0\dd d'_{i,j}),Y[0\dd y)) \le i+(3i+1)(\alpha-1)$ for $y\in [d'_{i,j}+j\alpha\dd d'_{i,j}+(j+1)\alpha)\cap [0\dd |Y|]$;
    \item\label{it:improved_dijp} $\ED(X[0\dd d_{i,j}),Y[0\dd y)) \le i+3(i+1)(\alpha-1)$ for $y\in [d_{i,j}+j\alpha\dd d_{i,j}+(j+1)\alpha)\cap [0\dd |Y|]$.
  \end{enumerate}
\end{claim}
\begin{proof}
We proceed by induction on $i$. Our base case is Property~\ref{it:improved_dij} for $i=0$.
Due to $d'_{0,j}=-\infty$ for $j\ne 0$, the range for $y$ is non-empty only for $j=0$,
when the range is $[0\dd \alpha)$ due to $d'_{0,0}=0$.
Moreover, for $y\in [0\dd \alpha)$, we have $\ED(X[0\dd d'_{0,0}),Y[0\dd y))=y \le \alpha-1$.

Next, we shall prove Property~\ref{it:improved_dijp} for $i\ge 0$ assuming that Property~\ref{it:improved_dij}
is true for $i$.
By definition of $\apLCE{\alpha-1}$ queries, we have $d_{i,j} \le d'_{i,j}+\LCE_{\alpha-1}^{X,Y}(d'_{i,j},y')$ 
for some position $y'\in [d'_{i,j}+j\alpha\dd d'_{i,j}+(j+1)\alpha)\cap [0\dd |Y|]$,
and thus $\HAM(X[d'_{i,j}\dd d_{i,j}),Y[y'\dd y'+d_{i,j}-d'_{i,j}))\le \alpha-1$.
The inductive assumption yields
$\ED(X[0\dd d'_{i,j}),Y[0\dd y'))\le i+(3i+1)(\alpha-1)$,
so we have $\ED(X[0\dd d_{i,j}),Y[0\dd y'+d_{i,j}-d'_{i,j}))\le i+(3i+2)(\alpha-1)$.
Due to $|y'+d_{i,j}-d'_{i,j}-y|\le \alpha-1$, we conclude that 
$\ED(X[0\dd d_{i,j}),Y[0\dd y))\le i+3(i+1)(\alpha-1)$.

Finally, we shall prove Property~\ref{it:improved_dij} for $i>0$ assuming that Property~\ref{it:improved_dijp}
is true for $i-1$. We consider three subcases:
If $d'_{i,j} \le d_{i-1,j-1}$, then the inductive assumption yields 
$\ED(X[0\dd d'_{i,j}),\allowbreak Y[0\dd y-\alpha))\le(i-1)+3i(\alpha-1)$,
and therefore 
$\ED(X[0\dd d'_{i,j}),Y[0\dd y))\le \alpha+(i-1)+3i(\alpha-1) = i+(3i+1)(\alpha-1)$.
If $d'_{i,j} \le d_{i-1,j}+1$, then the inductive assumption yields 
$\ED(X[0\dd d'_{i,j}-1),Y[0\dd y-1))\le {(i-1)}+3i(\alpha-1)$,
and therefore 
$\ED(X[0\dd d'_{i,j}),Y[0\dd y))\le 1+ (i-1)+3i(\alpha-1) = i+3i(\alpha-1)$.
If $d'_{i,j} \le d_{i-1,j+1}+1$, then the inductive assumption yields 
$\ED(X[0\dd d'_{i,j}-\alpha),Y[0\dd y))\le (i-1)+3i(\alpha-1)$,
and therefore 
$\ED(X[0\dd d'_{i,j}),Y[0\dd y))\le \alpha+(i-1)+3i(\alpha-1) = i+(3i+1)(\alpha-1)$.
\end{proof}
In particular, if the algorithm returns YES, then $\ED(X,Y)\le k + 3(k+1)(\alpha-1)$.

\begin{claim}\label{clm:improved_yes}
  If $\ED(X[0\dd x),Y[0\dd y)) = i$ for $x\in [0\dd |X|]$, $y\in [0\dd |Y|]$, and $i\in [0\dd k]$,
  then $x \le d_{i,j}$ holds for $j=\floor{\frac{1}{\alpha}(y-x)}$.
  \end{claim}
  \begin{proof}
  We proceed by induction on $i$. Both in the base case of $i=0$ and in the inductive step of $i>0$,
  we prove that $x \le d'_{i,j}+\max_{\delta=j\alpha}^{(j+1)\alpha-1}\LCE_0^{X,Y}(d'_{i,j}, d'_{i,j}+\delta)$.
  This implies the claim since $d_{i,j} \ge d'_{i,j}+\max_{\delta=j\alpha}^{(j+1)\alpha-1}\LCE_0^{X,Y}(d'_{i,j}, d'_{i,j}+\delta)$,
  holds by definition of $\apLCE{\alpha-1}$ queries.
  
  In the base case of $i=0$, we have $X[0\dd x)=Y[0\dd y)$, so $x=y$ and  $j=0$.
  Consequently, due to $d'_{0,0}=0$,  we have 
  $x\le \LCE_0^{X,Y}(0,0) \le d'_{0,0}+\max_{\delta=0}^{\alpha-1}\LCE_0^{X,Y}(d'_{0,0},d'_{0,0}+\delta)$.
  
  For $i > 0$, we consider an optimal alignment between $X[0\dd x)$ and $Y[0\dd y)$, and we distinguish its maximum prefix with $i-1$ edits.
  This yields positions $x',x''\in [0\dd x]$ and $y',y''\in [0\dd y]$ with $x''-x'\in \{0,1\}$ and $y''-y'\in \{0,1\}$ such that $\ED(X[0\dd x'),Y[0\dd y'))=i-1$ and $X[x''\dd x)=Y[y''\dd y)$.
  The inductive assumption yields $x' \le d_{i-1,j'}$, where $j'=\floor{\frac{1}{\alpha}(y'-x')}$
  satisfies $|j-j'|\le 1$.
  We shall prove that $x'' \le d'_{i,j}$ by considering two possibilities. 
  If $j' \ge j$, then $x'' \le \min(x,x'+1) \le \min(|X|,d_{i-1,j'}+1)\le d'_{i,j}$.
  If $j' < j$, on the other hand, then $y'-x' < y''-x''$ implies 
  $x'' = x' \le d_{i-1,j'}=d_{i-1,j-1} \le d'_{i,j}$.
  Due to $X[x''\dd x)=Y[y''\dd y)$, we have $\LCE_0^{X,Y}(x'',y'')\ge x-x''$.
  By $x'' \le d'_{i,j}$, this implies $\LCE_0^{X,Y}(d'_{i,j}, d'_{i,j}+y-x)\ge x-d'_{i,j}$. 
  By definition of $j$, we conclude that
  $x = d'_{i,j}+(x-d'_{i,j}) \le d'_{i,j} + \LCE_0^{X,Y}(d'_{i,j}, d'_{i,j}+y-x) \le d'_{i,j}+\max_{\delta=j\alpha}^{(j+1)\alpha-1}\LCE_0^{X,Y}(d'_{i,j}, d'_{i,j}+\delta)$.
  \end{proof}
  In particular, if $\ED(X,Y)\le k$, then the algorithm returns YES.
\end{proof}

If we use \cref{prp:aplce} to implement $\apLCE{\alpha-1}$ queries in \cref{alg:improved},
then the cost of computing $d_{i,j}$ is $\Oh(\alpha + \frac{1}{\alpha}(d_{i,j}-d'_{i,j}))$
with high probability. This query is performed only for $d'_{i,j}\ge 0$,
and it results in $d_{i,j}\le |X|$. As $d_{i,j}\le d'_{i+1,j}$, the total query time
for fixed $j$ sums up to $\Ohtilde(\frac{1}{\alpha}|X|+k\alpha)$ across 
all queries.
Over all the $\Oh(\frac{k}{\alpha})$ values $j$, this gives  $\Ohtilde(\frac{k}{\alpha^2}|X|+k^2)$
time with high probability,
which is not comparable to 
the running time $\Ohtilde(\frac{1}{\alpha}|X|+k^3)$ of~\cite{GKS19}.

However, we can obtain a faster algorithm using the data structure specified below and described in \cref{sec:batch}.
In particular, this result dominates \cref{prp:aplce} and,
if we set $\Delta=[-k\dd k]$, then $\apLCE{k}^{X,Y}(x,y)$ queries with $|x-y|\le k$
can be answered in $\Ohtilde(k)$ time after $\Ohtilde(\frac{1}{k+1}|X|)$ preprocessing,
as promised in \cref{sec:simple}.

\begin{restatable}{proposition}{prpbatch}\label{prp:batch}
  There exists a data structure that, initialized with strings $X$ and $Y$, an integer $k\ge 0$,
  and an integer range $\Delta$,
  answers the following queries: given an integer $x$, return $\apLCE{k}^{X,Y}(x,x+\delta)$
  for all $\delta \in \Delta$.
  The initialization costs $\Ohtilde(\frac{1}{k+1}|X|)$ time with high probability,
  and the queries cost $\Ohtilde(|\Delta|)$ time with high probability.
\end{restatable}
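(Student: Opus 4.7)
I would prove \cref{prp:batch} by extending the machinery of \cref{prp:aplce} so that the random sampling performed inside \cref{lem:pm} and \cref{lem:period} is carried out once during initialization and then reused across all subsequent queries. Concretely, for each $\delta \in \Delta$ the algorithm of \cref{prp:aplce} with $J=\{x+\delta\}$ costs $\Ohtilde(\tfrac{\ell}{k+1}+1)$, where the first additive term is entirely due to the internal $\Thetatilde(\tfrac{1}{k+1})$-rate samples drawn in \cref{alg:period} and \cref{alg:pm}. If those samples are precomputed, each $\delta$ contributes only $\Ohtilde(1)$ to the query cost, and the target query time $\Ohtilde(|\Delta|)$ falls out.

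\textbf{Step 1 (Initialization).} I would draw a set $S\sub [0\dd \max(|X|,|Y|))$ at rate $r=\Thetatilde(\tfrac{1}{k+1})$ and read $X[s]$ and $Y[s]$ for every $s\in S$, together with a rank/select index that enumerates $S\cap[a\dd b)$ in $\Ohtilde(1)$ time per reported element. By Chernoff, $|S|=\Ohtilde(\tfrac{|X|}{k+1})$ with high probability, so this fits the claimed budget. On top of $S$ I would build a sparse Karp--Rabin-type fingerprint structure on the sampled subsequences, supporting equality tests of aligned sampled windows in $\Ohtilde(1)$ time; this is needed only to handle the periodicity witnesses described below.

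\textbf{Step 2 (Query).} For a query $(x,\Delta)$, I would run the exponential search behind \cref{prp:aplce} separately for each $\delta\in\Delta$, using the singleton $J=\{x+\delta\}$ so that the pattern-matching bookkeeping in \cref{lem:pm} costs $\Ohtilde(1)$. Every randomized sampling step inside \FindBreak and inside \cref{alg:pm} would be replaced by a deterministic enumeration of $S\cap[a\dd b)$ for the relevant interval $[a\dd b)$; mismatch tests become look-ups of characters already stored in Step 1, and period-compatibility tests of the form $X[s]\stackrel{?}{=}X[s\bmod p]$ are resolved by the sparse fingerprint structure rather than by reading fresh characters. Summed over $\delta\in\Delta$ this gives $\Ohtilde(|\Delta|)$ query time with high probability.

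\textbf{Main obstacle.} The delicate point is correctness: \cref{lem:period,lem:pm} were analysed assuming a \emph{fresh} sample per invocation, whereas now a single $S$ must simultaneously yield a correct verdict for \emph{every} query-induced interval that might ever arise. I would resolve this by inflating the sampling rate by a logarithmic factor and then taking a union bound over the (polynomially many) intervals of the form $[a\dd b)\sub [0\dd |X|)$ that can appear as the argument of a sampling test across all queries; a standard Chernoff estimate then shows that $S\cap[a\dd b)$ contains a mismatch with high probability whenever the corresponding local Hamming distance exceeds $k$. A secondary subtlety is that the periodicity check wants to compare $X[s]$ with $X[s\bmod p]$ where $s\bmod p$ usually lies outside $S$; the natural fix is to reformulate this pointwise test as an equality between short windows $X[s-t\dd s)$ and $X[(s\bmod p)-t\dd s\bmod p)$ and to decide it via the sparse Karp--Rabin structure, so no extra character accesses are needed at query time.
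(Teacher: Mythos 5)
Your cost analysis misidentifies where the $\Ohtilde(\frac{\ell}{k+1})$ term in \cref{prp:aplce} comes from: it is not the cost of \emph{drawing} the samples but the cost of \emph{enumerating and testing} the $\Thetatilde(\frac{\ell}{k+1})$ sampled positions inside the relevant interval. Precomputing $S$ together with $X[s]$, $Y[s]$ for $s\in S$ and a rank/select index does not remove this term---enumerating $S\cap[a\dd b)$ still takes time proportional to $|S\cap[a\dd b)|=\Thetatilde(\frac{b-a}{k+1})$---so running the exponential search of \cref{prp:aplce} with singleton $J=\{x+\delta\}$ still costs $\Ohtilde(\frac{\ell}{k+1})$ per $\delta$, and the sum over $\Delta$ is not $\Ohtilde(|\Delta|)$. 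Your Karp--Rabin idea could plausibly batch the period-compatibility tests inside $\textsf{FindBreak}$ (compare $(X[t])_{t\in S\cap[a\dd b)}$ against a fixed period via fingerprints), but it cannot rescue the final sampling loop of \cref{lem:pm}, where you must test $X[i+s]$ against $Y[\min C+s]$ for the surviving candidate $\min C$: those $Y$-positions are $\{t+(\min C-i): t\in S\cap[i\dd i+\ell)\}$, a \emph{shifted} copy of $S$, and with a single global sample of absolute positions you have stored $Y[s]$ only for $s\in S$, not at these shifted positions. Fingerprinting the $S$-sampled subsequences of $X$ and $Y$ would compare the two strings at identical absolute positions rather than along the diagonal $\min C-i$, so it does not help either. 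The obstacle you explicitly flag (reusing one sample across many queries, handled by inflating the rate and a union bound) is real but secondary; the time analysis is where the proposal breaks.

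The paper's proof takes a genuinely different route. It introduces the probabilistic primitive $\bpLCE_r$ and proves that it is \emph{composable} (\cref{lem:combine}): $\bpLCE_r^{XX',Y}(0,j)$ can be assembled from independent $\bpLCE_r^{X,Y}(0,j)$ and $\bpLCE_r^{X',Y}(0,j+|X|)$. The data structure (\cref{lem:batch}) precomputes, for anchor positions $x'\equiv|X|\pmod{q}$ with $q=\ceil{r|\Delta|}$ and $r=\Thetatilde(k+1)$, all values $\ell_{x',\delta}=\bpLCE_r^{X,Y}(x',x'+\delta)$ by scanning from right to left and invoking a batched routine (\cref{lem:box}) on length-$q$ blocks, for total cost $\Ohtilde(\frac{|X|}{r})=\Ohtilde(\frac{|X|}{k+1})$. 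A query at $x$ then runs the batched routine only on the short segment $X[x\dd x')$ up to the nearest anchor $x'$ (cost $\Ohtilde(\frac{q}{r}+|\Delta|)=\Ohtilde(|\Delta|)$) and composes with the stored $\ell_{x',\delta}$. In short, the paper precomputes \emph{answers at anchors}, not \emph{samples}; this anchor-decomposition plus composability is the idea your proposal is missing, and reusing a sample set alone cannot substitute for it.
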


Since the $\apLCE{\alpha-1}^{X,Y}(x,y)$ queries in \cref{alg:improved} are asked for $\Oh(\frac{k^2}{\alpha})$ positions $x$ and for positions $y$ satisfying $|y-x|=\Oh(k)$, a straightforward application of \cref{prp:batch}
yields an $\Ohtilde(\frac{1}{\alpha}|X|+\frac{k^3}{\alpha})$-time implementation of \cref{alg:improved},
which is already better the running time of~\cite{GKS19}.
However, the running time of a more subtle solution described below subsumes both $\Ohtilde(\frac{1}{\alpha}|X|+\frac{k^3}{\alpha})$ and $\Ohtilde(\frac{k}{\alpha^2}|X|+k^2)$ (obtained using \cref{prp:aplce}).

\begin{theorem}\label{thm:tradeoff}
  There exists an algorithm that, given strings $X$ and $Y$, an integer $k\ge 0$,
  and a positive integer $\alpha = \Oh(k)$, returns
      YES if $\ED(X,Y)\le k$, and
       NO if $\ED(X,Y)> k+3(k+1)(\alpha-1)$.
  With high probability,  the algorithm is correct and its running time is $\Ohtilde(\frac{1}{\alpha}|X|+k^2+\frac{k}{\alpha}\sqrt{|X|k})$.
\end{theorem}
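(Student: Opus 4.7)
Correctness of Algorithm~\ref{alg:improved} for the promised gap is already established by Lemma~\ref{lem:improved_correct}, so the entire task is to implement the $\apLCE{\alpha-1}^{X,Y}$ queries within the claimed running-time budget. The core tool is Proposition~\ref{prp:batch}: after $\Ohtilde(|X|/\alpha)$ initialization with parameter $\alpha-1$ and an integer range $\Delta$, it returns, for any queried $x$, the values $\apLCE{\alpha-1}^{X,Y}(x,x+\delta)$ for all $\delta\in\Delta$ in $\Ohtilde(|\Delta|)$ time. The excerpt already notes the two extreme instantiations of this proposition: a single global data structure with $|\Delta|=\Theta(k)$ yields $\Ohtilde(\frac{1}{\alpha}|X|+\frac{k^3}{\alpha})$, while one data structure per wide diagonal (so $|\Delta|=\alpha$) yields $\Ohtilde(\frac{k}{\alpha^2}|X|+k^2)$. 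The plan is to interpolate between these two regimes by grouping wide diagonals.

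Concretely, I would partition the $\Oh(k/\alpha)$ wide diagonals $j\in[\floor{-k/\alpha}\dd\floor{k/\alpha}]$ into $G$ consecutive groups of size $g$ (with $1\le g\le k/\alpha$, to be fixed at the end). For each group, build one instance of Proposition~\ref{prp:batch} with parameter $\alpha-1$ and $\Delta$ equal to the combined integer range $\bigcup_{j}[j\alpha\dd(j+1)\alpha)$ over the $g$ wide diagonals in the group; this is a single integer range of size $g\alpha$. To evaluate $\max_{\delta=j\alpha}^{(j+1)\alpha-1}\apLCE{\alpha-1}^{X,Y}(d'_{i,j},d'_{i,j}+\delta)$ in Algorithm~\ref{alg:improved}, query the group's data structure at $x:=d'_{i,j}$ and return the maximum of the outputs whose shifts belong to $[j\alpha\dd(j+1)\alpha)$. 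Taking a componentwise maximum preserves both defining inequalities $\LCE_0\le\cdot\le\LCE_{\alpha-1}$ of $\apLCE{\alpha-1}$, so the returned quantity is a valid $\apLCE{\alpha-1}$-value in the sense required by the proof of Lemma~\ref{lem:improved_correct}.

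For the time analysis, total initialization cost is $\Ohtilde(G\cdot|X|/\alpha)=\Ohtilde(k|X|/(\alpha^2 g))$; Algorithm~\ref{alg:improved} issues $\Oh(k^2/\alpha)$ batch queries, each at cost $\Ohtilde(g\alpha)$, contributing $\Ohtilde(k^2 g)$; and remaining bookkeeping across the $(i,j)$ cells is $\Oh(k^2/\alpha)=\Oh(k^2)$. Summing, the total is $\Ohtilde(k|X|/(\alpha^2 g)+k^2 g+k^2)$, which I would minimize by the clamped balancing choice $g=\max\bigl(1,\min\bigl(\floor{k/\alpha},\ceil{\sqrt{|X|/k}/\alpha}\bigr)\bigr)$. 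A brief case analysis on whether the unconstrained optimum $\sqrt{|X|/k}/\alpha$ falls below $1$, inside $[1,k/\alpha]$, or above $k/\alpha$ shows that the resulting bound is always dominated by $\Ohtilde(\frac{1}{\alpha}|X|+k^2+\frac{k}{\alpha}\sqrt{k|X|})$: in the first regime $k|X|/\alpha^2\le k^2$, in the middle regime the balanced expression equals $\Ohtilde(\frac{k}{\alpha}\sqrt{k|X|})$, and in the last regime $k^3/\alpha\le|X|/\alpha$.

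The main subtlety I expect is composing the high-probability guarantees: Proposition~\ref{prp:batch} is invoked $\Oh(k/(\alpha g))$ times at initialization and $\Oh(k^2/\alpha)$ times for queries, so tuning its failure probability to $1/\mathrm{poly}(n)$ and applying a union bound delivers overall high-probability correctness together with the stated running time. Corner cases such as $d'_{i,j}=-\infty$ or $d'_{i,j}+\delta$ falling outside $[0\dd|Y|]$ can be handled trivially at the interface, consistent with the guards already present in Algorithm~\ref{alg:improved}.
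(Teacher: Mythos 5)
Your proposal is correct and takes essentially the same approach as the paper: the paper also partitions the $\Oh(k/\alpha)$ wide diagonals into groups of size $b$ (your $g$), initializes one instance of Proposition~\ref{prp:batch} per group with $|\Delta|=\alpha b$, and balances the resulting $\Ohtilde(\frac{k}{\alpha^2 b}|X|+k^2 b)$ over $b\in[1\dd\lceil k/\alpha\rceil]$, with the boundary values of $b$ absorbing the $\Ohtilde(\frac{1}{\alpha}|X|+k^2)$ terms exactly as in your clamped case analysis.
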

\begin{proof}
  We define an integer parameter $b\in [1\dd \lceil\frac{k}{\alpha}\rceil]$ (to be fixed later) and initialize $\Oh(\frac{k}{\alpha b})$ instances of the data structure of \cref{prp:batch} for answering $\apLCE{\alpha-1}^{X,Y}$ queries. The instances are indexed with $j'\in  [\floor{\frac{-k}{\alpha b}}\dd \floor{\frac{k}{\alpha b}}]$, and the $j'$th instance has interval $\Delta_{j'}=[j'\alpha b\dd (j'+1)\alpha b)$.
  This way, the value  $d_{i,j}$  can be retrieved from the values $\apLCE{\alpha-1}(d'_{i,j},d'_{i,j}+\delta)$ for $\delta\in \Delta_{\lfloor\frac{j}{b}\rfloor}$, that is, from a single query to an instance of the data structure of \cref{prp:batch}.

  Correctness follows from \cref{lem:improved_correct} since with high probability all $\apLCE{\alpha-1}$ queries
  are answered correctly.  
  The total preprocessing cost is $\Ohtilde(\frac{k}{\alpha b}\cdot \frac{1}{\alpha }|X|)=\Ohtilde(\frac{k}{\alpha ^2b}|X|)$ with high probability,
  and each value $d_{i,j}$ is computed in $\Ohtilde(\alpha b)$ time with high probability.
  The number of queries is $\Oh(\frac{k^2}{\alpha})$, so the total running time is $\Ohtilde(\frac{k}{\alpha^2b}|X|+k^2 b)$  with high probability.
  Optimizing for $b$ yields $\Ohtilde(\frac{k}{\alpha}\sqrt{|X|k})$. Due to $b\in [1\dd \lceil\frac{k}{\alpha}\rceil]$,
  we get additional terms $\Ohtilde(k^2+\frac{1}{\alpha}|X|)$.
\end{proof}

\subsection{Proof of \cref{prp:batch}}\label{sec:batch}
While there are many similarities between the proofs of \cref{prp:aplce,prp:batch},
the main difference is that we heavily rely on $\bpLCE_r$ queries in the proof of \cref{prp:batch}. 
The following fact illustrates their main advantage compared to $\apLCE{k}$ queries: \emph{composability}.
\begin{fact}\label{lem:combine}
Let $X,X',Y$ be strings, let $r > 0$ be real parameter,
and let $j\in [0\dd |Y|-|X|]$.
Suppose that $\bpLCE_r^{X,Y}(0,j)$ and $\bpLCE_r^{X',Y}(0,j+|X|)$ are independent random variables,
and define
\[ \ell:= \begin{cases} 
  \bpLCE_r^{X,Y}(0,j) & \text{if }\bpLCE_r^{X,Y}(0,j)<|X|,\\
  |X|+\bpLCE_r^{X',Y}(0,j+|X|) & \text{otherwise.}
\end{cases}
\]
Then, $\ell$ satisfies the conditions for $\bpLCE_r^{XX',Y}(0,j)$.
\end{fact}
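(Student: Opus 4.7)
The plan is to verify the two defining conditions of $\bpLCE_r^{XX',Y}(0,j)$ separately. Write $\ell_1 := \bpLCE_r^{X,Y}(0,j)$ and $\ell_2 := \bpLCE_r^{X',Y}(0,j+|X|)$ for brevity. For the lower-bound property $\ell \ge \LCE_0^{XX',Y}(0,j) =: L$, I would split on whether $L \le |X|$ or $L > |X|$. If $L \le |X|$, the entire exact match is contained in $X$, so $\ell_1 \ge \LCE_0^{X,Y}(0,j) \ge L$; whichever branch defines $\ell$, we obtain $\ell \ge L$ (either $\ell = \ell_1 \ge L$, or $\ell \ge |X| \ge L$). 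If $L > |X|$, then $X$ matches $Y[j\dd j+|X|)$ exactly, so $\ell_1 \ge |X|$ forces the ``otherwise'' branch; composing with $\ell_2 \ge \LCE_0^{X',Y}(0,j+|X|) \ge L - |X|$ yields $\ell \ge L$.

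For the tail inequality, fix $k \ge 0$ and set $M := \LCE_k^{XX',Y}(0,j)$. I would split on whether $M < |X|$ or $M \ge |X|$. When $M < |X|$, one readily checks that $M = \LCE_k^{X,Y}(0,j)$, since the Hamming condition on prefixes of length at most $|X|$ is identical for $XX'$ and $X$. The event $\{\ell > M\}$ forces $\ell_1 > M$ (either $\ell = \ell_1 > M$, or $\ell \ge |X| > M$ already in the ``otherwise'' branch), so its probability is at most $\Pr[\ell_1 > \LCE_k^{X,Y}(0,j)] \le \exp(-(k+1)/r)$.

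The main obstacle is the case $M \ge |X|$, where independence has to be exploited twice. Set $k_1 := \Ham(X[0\dd |X|), Y[j\dd j+|X|)) \le k$. A direct calculation gives (a) $M - |X| = \LCE_{k-k_1}^{X',Y}(0, j+|X|)$ and (b) $\LCE_{k_1-1}^{X,Y}(0,j) < |X|$ whenever $k_1 \ge 1$. The event $\{\ell > M\}$ demands both that the ``otherwise'' branch is taken (so $\ell_1 \ge |X|$) and that $\ell_2 > M - |X|$. Assuming $k_1 \ge 1$, (b) implies $\{\ell_1 \ge |X|\} \subseteq \{\ell_1 > \LCE_{k_1-1}^{X,Y}(0,j)\}$, which has probability at most $\exp(-k_1/r)$; by (a) and the tail bound for $\bpLCE_r^{X',Y}$, $\Pr[\ell_2 > M - |X|] \le \exp(-(k-k_1+1)/r)$. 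Independence of $\ell_1$ and $\ell_2$ lets me multiply, yielding $\exp(-k_1/r)\cdot\exp(-(k-k_1+1)/r) = \exp(-(k+1)/r)$. The boundary $k_1 = 0$ is immediate: then $X$ matches $Y[j\dd j+|X|)$ exactly, so $\ell_1 \ge |X|$ deterministically, and the bound on event (ii) alone already achieves the target $\exp(-(k+1)/r)$.
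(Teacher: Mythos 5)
Your proof is correct and follows essentially the same route as the paper's: you establish the decomposition identity for $\LCE_k^{XX',Y}(0,j)$ via $k_1 = \HAM(X,Y[j\dd j+|X|))$, split on whether the target exceeds $|X|$, and use independence to multiply the two tail bounds $\exp(-k_1/r)\cdot\exp(-(k-k_1+1)/r)$. Your case distinction on $M<|X|$ versus $M\ge|X|$ is precisely the paper's $k<d$ versus $k\ge d$ with $d=k_1$, and your explicit treatment of the boundary $k_1=0$ is if anything slightly more careful than the paper's.
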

\begin{proof}
Define $d = \HAM(X, Y[j\dd j+|X|))$ and note that the following equality holds for $k\ge 0$:
\[LCE_k^{XX',Y}(0,j) =
\begin{cases}
  \LCE_k^{X,Y}(0,j) & \text{if }k < d,\\
  |X|+\LCE_{k-d}^{X',Y}(0,j+|X|) & \text{if }k \ge d.
\end{cases}
\]

Let us first prove that $\ell \ge \LCE_0^{XX',Y}(0,j)$.
If $\bpLCE_r^{X,Y}(0,j)<|X|$, then $\LCE_0^{X,Y}(0,j) \le \bpLCE_r^{X,Y}(0,j) < |X|$
implies $d>0$, and thus $\ell = \bpLCE_r^{X,Y}\!(0,j)\ge \LCE_0^{X,Y}\!(0,j) = \LCE_0^{XX',Y}\!(0,j)$.
Otherwise, 
$\ell = |X|+\bpLCE_{r}^{X',Y}(0,j+|X|) \ge |X|+\LCE_{0}^{X',Y}(0,j+|X|)\ge \LCE_0^{XX',Y}(0,j)$.
Hence, the claim holds in both cases.

Next, let us bound the probability $\Pr\big[\ell >  \LCE_k^{XX',Y}(0,j)\big]$ for $k\ge 0$.
We consider two cases.
If $k < d$, then $\LCE_k^{XX',Y}(0,j)<|X|$ and
\begin{align*}
  \Pr\big[\ell >  \LCE_k^{XX',Y}(0,j)\big] 
  & \le \Pr\big[\bpLCE_r^{X,Y}(0,j) >  \LCE_k^{XX',Y}(0,j)\big]\\
  & = \Pr\big[\bpLCE_r^{X,Y}(0,j) >  \LCE_k^{X,Y}(0,j)\big]\\
  & \le \exp(-\tfrac{k+1}{r}).
\end{align*}
On the other hand, if $k\ge d$, then  $\LCE_k^{XX',Y}(0,j)=|X|+\LCE_{k-d}^{X',Y}(0,j+|X|)\ge |X| > \LCE_{d-1}^{X,Y}(0,j)$.
Hence, the independence of $\bpLCE_r^{X,Y}(0,j)$ and $\bpLCE_r^{X,Y}(0,j+|X|)$ yields
\begin{align*}\Pr\big[\ell >  \LCE_k^{XX',Y}\!\!(0,j)\big]
  & \le \Pr\big[\bpLCE_r^{X,Y}\!\!(0,j) >  |X| \text{ and } \bpLCE_r^{X',Y}\!\!(0,j+|X|) > \LCE_{k-d}^{X',Y}\!(0,j+|X|) \big] \\
  & = \Pr\big[\bpLCE_r^{X,Y}\!\!(0,j) >  |X|\big] \cdot \Pr\big[\bpLCE_r^{X',Y}\!\!(0,j+|X|) > \LCE_{k-d}^{X',Y}\!(0,j+|X|) \big] \\
  & \le \exp(-\tfrac{d}{r})\cdot \exp(-\tfrac{k-d+1}{r})\\
  & =\exp(-\tfrac{k+1}{r}).
\end{align*}
This completes the proof.
\end{proof}

Next, we show that a single value $\bpLCE_r^{X,Y}(0,j)$ can be computed efficiently.
We also require that the resulting position $\ell$ witnesses $\LCE_0^{X,Y}(0,j)\le \ell$.
\begin{fact}\label{fct:bplce}
There is an algorithm that, given strings $X$ and $Y$, a real parameter $r>0$,
and an integer $j$, returns a value $\ell = \bpLCE_r^{X,Y}(0,j)$ such that $X[\ell] \ne Y[j+\ell]$ or $\ell = \min(|X|,|Y|-j)$.
The algorithm takes $\Ohtilde(\frac{1}{r}|X|)$ time
with high probability.
\end{fact}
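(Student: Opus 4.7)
The plan is to use a single-level rejection-style sampling. Set $p := \min(1, 1/r)$, let $n := \min(|X|, |Y|-j)$, and construct a random set $S \subseteq [0\dd n)$ by including each position independently with probability $p$. The algorithm then traverses $S$ in increasing order, comparing $X[s]$ with $Y[j+s]$ at each sample $s$, and returns the first $s \in S$ with $X[s] \ne Y[j+s]$; if no sampled position witnesses a mismatch, it returns $\ell := n$. Efficiency is obtained by not materializing $S$ all at once but instead generating consecutive samples via geometric jumps, stopping as soon as a mismatch is found.

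For correctness, the output structure is immediate: either $\ell < n$ and $\ell$ was examined as a mismatch, so $X[\ell]\ne Y[j+\ell]$, or $\ell = n = \min(|X|,|Y|-j)$. The bound $\ell \ge \LCE_0^{X,Y}(0,j)$ holds because in the first case $\ell$ is itself a mismatch position (so at least the first-mismatch position $\LCE_0$), and in the second case $\ell = n \ge \LCE_0^{X,Y}(0,j)$ trivially. For the tail bound, fix $k \ge 0$. If $\LCE_k^{X,Y}(0,j) \ge n$, then $\ell \le n \le \LCE_k^{X,Y}(0,j)$ deterministically. Otherwise, let $m_0 < m_1 < \cdots < m_k$ be the first $k+1$ mismatch positions, so that $m_k = \LCE_k^{X,Y}(0,j)$. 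The event $\ell > \LCE_k^{X,Y}(0,j)$ requires that none of $m_0, \ldots, m_k$ lies in $S$, and the independence of the sampling decisions gives
\[
\Pr\bigl[\ell > \LCE_k^{X,Y}(0,j)\bigr] \le (1-p)^{k+1} \le \exp\bigl(-p(k+1)\bigr) \le \exp\bigl(-\tfrac{k+1}{r}\bigr),
\]
exactly matching the definition of $\bpLCE_r^{X,Y}(0,j)$.

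For the running time, each next sample is obtained by drawing a gap from a geometric distribution with parameter $p$ in $\Oh(1)$ time, after which the comparison of $X[s]$ and $Y[j+s]$ costs $\Oh(1)$. The total work is thus proportional to $|S|+1$, and $|S|$ is binomially distributed with mean at most $pn \le |X|/r$. A standard Chernoff bound yields $|S| = \Oh(|X|/r + \log|X|)$ with high probability, giving the claimed $\Ohtilde(|X|/r)$ running time. There is no genuine technical obstacle here: the whole argument boils down to the independence of the sampling decisions at the first $k+1$ mismatch positions, together with a Chernoff concentration bound for the number of samples.
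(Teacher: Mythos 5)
Your approach is essentially the same as the paper's: sample positions independently, return the leftmost sampled mismatch or $\min(|X|,|Y|-j)$, and bound the tail probability by the chance of missing the first $k+1$ mismatches. The only structural difference is that you fold the two regimes into a single parameter $p=\min(1,1/r)$, whereas the paper handles $r\le 1$ (naive scan) and $r>1$ (sampling at rate $1/r$) as explicit cases.

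One small glitch: the final link in your chain
$\exp(-p(k+1)) \le \exp(-(k+1)/r)$
requires $p\ge 1/r$, which fails exactly when $r<1$ (then $p=1<1/r$). In that regime the overall conclusion is still fine, because $(1-p)^{k+1}=0$ makes $\Pr[\ell>\LCE_k]\le 0 \le \exp(-(k+1)/r)$ directly, but the intermediate inequality as written is reversed. Either observe that the probability vanishes when $p=1$, or split off the $r\le 1$ case as the paper does; with that patch the argument is airtight.
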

\begin{proof}
If $r\le 1$, then the algorithm returns $\ell = \LCE_0^{X,Y}(0,j)$ computed naively in $\Oh(|X|)$ time.
It is easy to see that this value satisfies the required conditions.

If $r > 1$, then the algorithm samples a subset $S \sub [0\dd \min(|X|,|Y|-j))$ so that the events
$s\in S$ are independent with $\Pr[s\in S]=\frac{1}{r}$.
If $X[s] = Y[j+s]$ for each $s\in S$, then the algorithm returns $\ell = \min(|X|,|Y|-j)$.
Otherwise, the algorithm returns $\ell = \min\{s \in S : X[s]\ne Y[j+s]\}$,
This way, $\ell \ge \LCE_0^{X,Y}(i,j)$, and  $P[\ell] \ne Y[j+\ell]$ or $\ell = \min(|X|,|Y|-j)$.

It remains to bound $\Pr\big[\ell>\LCE_k^{X,Y}(0,j)\big]$ for every $k\ge 0$. 
This event holds only if each of the $k+1$ leftmost mismatches (that is, the leftmost positions $s$ such that $X[s]\ne Y[j+s]$) does not belong to $S$. By definition of $S$, the probability of this event is 
$(1-\tfrac{1}{r})^{k+1} \le \exp(-\frac{k+1}{r})$.

Since $|S|=\Ohtilde(\frac{1}{r}\min(|X|,|Y|-j))$ with high probability, the total running time is $\Ohtilde(\frac{1}{r}|X|)$ with high probability.
\end{proof}

We are now ready to describe a counterpart of \cref{lem:period}.
\begin{lemma}\label{lem:period2}
  There is an algorithm that, given a string $T$, a real parameter $r>0$,
  and a positive integer $q \le \frac12|T|$ such that $p:=\per(T[0\dd 2q))\le q$,
  returns $\ell\in [2q\dd |T|]$ such that
\begin{itemize}
  \item $\ell = \bpLCE_r^{T,T'}(0,0)$, where $T'$ is an infinite string with $T'[i]=T[i\bmod p]$ for $i\ge 0$, and
  \item $\ell = |T|$ or $\per(T(\ell-2q\dd \ell]) > q$. 
\end{itemize}
The algorithm takes $\Ohtilde(\frac{1}{r}|T|+q)$ time with high probability.
\end{lemma}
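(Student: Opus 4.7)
The plan is to adapt the \textsf{FindBreak} procedure of \cref{lem:period} to the randomized $\bpLCE_r$ setting, using \cref{fct:bplce} in place of the uniform sampling at rate $\Thetatilde(\frac{1}{k+1})$ employed in \cref{alg:period}. First I would compute $p := \per(T[0\dd 2q))$ with the classic Morris--Pratt algorithm in $\Oh(q)$ time and define the periodic extension implicitly via $T'[i] := T[i \bmod p]$, which supports $\Oh(1)$-time character access without being materialized. Then I would invoke \cref{fct:bplce} on $T$ and $T'$ to compute $\ell_0 = \bpLCE_r^{T,T'}(0,0)$ in $\Ohtilde(\frac{1}{r}|T|)$ time. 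The guarantee of \cref{fct:bplce} ensures that either $\ell_0 = |T|$ or $T[\ell_0] \ne T'[\ell_0]$; moreover, since $p$ is a period of $T[0\dd 2q)$, positions in $[0\dd 2q)$ are all compatible with $T'$, so $\ell_0 \ge \LCE_0^{T,T'}(0,0) \ge 2q$.

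If $\ell_0 = |T|$, I would return $\ell := |T|$, which immediately satisfies the second bullet and inherits the $\bpLCE_r$ bounds from $\ell_0$. Otherwise, $\ell_0$ is an incompatible position, and I would run the binary-search block (Lines~\ref{ln:FBbeg}--\ref{ln:FBY} of \cref{alg:period}) initialized with $b := 2q$ and $e := \ell_0$. This preserves the invariant that all positions in $[b-2q\dd b)$ are compatible while $e$ is incompatible, and terminates after $\Oh(\log |T|)$ iterations with $b = e$. I would return $\ell := b$; the periodicity-lemma argument already developed in the proof of \cref{lem:period} shows that $\per(T(\ell - 2q\dd \ell]) > q$. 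Each binary-search iteration costs $\Oh(q)$, so the total running time is $\Ohtilde(\frac{1}{r}|T| + q)$, as required.

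For the two $\bpLCE_r$ conditions, observe that $\ell \le \ell_0$ and that $\ell$ is either $|T|$ or an incompatible position; in either case $\LCE_0^{T,T'}(0,0) \le \ell$, so the first condition holds. For the tail bound, monotonicity gives $\{\ell > \LCE_k^{T,T'}(0,0)\} \subseteq \{\ell_0 > \LCE_k^{T,T'}(0,0)\}$, so the exponential bound $\exp(-\frac{k+1}{r})$ transfers directly from $\ell_0$ to $\ell$. The step I would verify most carefully is exactly this monotone inheritance of the tail bound, together with the fact that the binary-search contraction always terminates at a position that is itself incompatible (ensuring $\LCE_0$ is not overshot); all remaining parts are a direct transplant of the machinery in \cref{lem:period}.
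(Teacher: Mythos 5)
Your proposal is correct and is essentially the same construction as the paper's: compute $p=\per(T[0\dd 2q))$, invoke \cref{fct:bplce} to get $\ell_0=\bpLCE_r^{T,T'}(0,0)$, return $|T|$ if $\ell_0=|T|$, and otherwise run the same binary-search contraction as in \cref{alg:period} starting from $(b,e)=(2q,\ell_0)$ and return $b$. The paper also justifies the returned value exactly as you do — $\LCE_0\le b$ since $b$ is incompatible, and the tail bound transfers from $\ell_0$ to $b$ by the pointwise inequality $b\le\ell_0$ — so there is no gap and no meaningful divergence from the paper's argument.
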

\SetKwFunction{FindBreakTwo}{FindBreak2}
\begin{algorithm}[th]
  \SetKwComment{Comment}{$\triangleright$\ \rm}{}
  $p := \per(T[0\dd 2q))$\;\label{ln:FB2per}
  Define $T'[0\dd \infty)$ with $T'[i]=T[i\bmod p]$\;
  $\ell' := \bpLCE_r^{T,T'}(0,0)$\Comment*{computed using \cref{fct:bplce}}
  \lIf{$\ell' = |T|$}{\Return{$|T|$}}
      $b:=2q$; $e := \ell'$\;\label{ln:FB2beg}
      \While{$b < e$}{\label{ln:FB2bin}
        $m := \ceil{\frac{b+e}{2}}$\;
        \For{$j := m-2q$ \KwSty{to} $m-1$}{
          \lIf{$T[j] \ne T'[j]$}{$e := j$}\label{ln:FB2e}
        }
        \lIf{$e \ge m$}{$b := m$}\label{ln:FB2b}
      }
      \Return{$b$}\label{ln:FB2Y}\;
  \caption{\protect\FindBreakTwo{$T$, $r$, $q$}}\label{alg:period2}
  \end{algorithm}
\begin{proof}
A procedure $\FindBreakTwo{$T$, $r$, $q$}$ implementing \cref{lem:period2}
  is given as \cref{alg:period2}.

  First, the algorithm computes the shortest period $p = \per(T[0\dd 2q))$
  (guaranteed to be at most $q$ by the assumption)
  and constructs an infinite string $T'$ with $T'[i]=T[i\bmod p]$ for each 
  $i\ge 0$; note that random access to $T'$ can be easily implemented 
  on top of random access to $T$.
  
  Next, the algorithm computes $\ell':=\bpLCE_r^{T,T'}(0,0)$ using \cref{fct:bplce}.
  If $\ell' = |T|$, then $|T|$ satisfies both requirements for the resulting value $\ell$,
  so the algorithm returns $\ell:=|T|$ (\cref{ln:FBper}).

  Otherwise, the algorithm tries to find a position $\ell\le \ell'$
  such that $\per(T(\ell-2q\dd \ell])>q$ (Lines~\ref{ln:FB2beg}--\ref{ln:FB2Y}).
  This step is implemented as in the proof of \cref{lem:period}.
  We call a position $i\in [0\dd |T|)$ \emph{compatible} if $T[i]=T'[i]$.
  The algorithm performs a binary search maintaining positions $b,e$, with $2q \le b \le e < |T|$, such that $e$ is incompatible and the positions in $[b-2q\dd b)$ are all compatible.
  The initial choice of $b := 2q$ and $e := \ell$ satisfies the invariant because the positions in $[0\dd 2q)$ are all compatible due to $p=\per(T[0\dd 2q))$.
  While $b < e$, the algorithm chooses $m := \ceil{\frac{b+e}{2}}$.
  If $[m-2q\dd m)$ contains an incompatible position $j$, then $j\ge b$ (because $j\ge m-2q\ge b-2q$ and the positions in $[b-2q\dd b)$ are all compatible), so the algorithm maintains the invariant setting $e:=j$ for such a position $j$ (\cref{ln:FB2e}). 
  Otherwise, all the positions in $[m-2q\dd m)$ are compatible. Due to $m\le e$,
  this means that the algorithm maintains the invariant setting  $b:= m$ (\cref{ln:FB2b}).
  Since $e-b$ decreases at least twofold in each iteration,
  after $\Oh(\log |T|)$ iterations, the algorithm obtains $b = e$.
  Then, the algorithm returns $\ell:=b$.
  
  We shall prove that this result is correct.
  For a proof by contradiction, suppose that $p':= \per(T(b-2q\dd b]) \le q$. 
  Then, $p'$ is also period of $T(b-2q\dd b)$.
  Moreover, the invariant guarantees that positions in $(b-2q\dd b)$ are all compatible, 
  so also $p$ is a period of $T(b-2q\dd b)$.
  Since $p+p'-1 \le 2q-1$, the periodicity lemma~\cite{FW65} implies that also $\gcd(p,p')$ is a period of $T[b-2q+1\dd b)$.
  Consequently, $T[b] = T[b-p']=T[b-p]=T'[b-p]=T'[b]$,
  i.e., $b$ is compatible. However, the invariant assures that $b$ is incompatible.
  This contradiction proves that $\per(T(b-2q\dd b]) > q$.
  The incompatibility of $b$ guarantees that $\LCE_0^{T,T'}(0,0)\le b$.
  Moreover, since $b\le \ell'$, we have 
  $\Pr[b>\LCE_{k}^{T,T'}(0,0)]\le \Pr[\ell'>\LCE_{k}^{T,T'}(0,0)]\le \exp(-\frac{k+1}{r})$
  for each $k\ge 0$. Thus, $b$ satisfies the requirements for $\bpLCE_{r}^{T,T'}(0,0)$.
  
  It remains to analyze the running time.
  Determining $\per(T[0\dd 2q))$ in \cref{ln:FB2per} costs $\Oh(q)$ time using a classic algorithm~\cite{MP70}.
  The application of \cref{fct:bplce} costs $\Ohtilde(\frac{1}{r}|T|)$ time with high probability.
  Binary search (the loop in \cref{ln:FB2bin}) has $\Oh(\log |T|)=\Ohtilde(1)$ iterations,
  each implemented in $\Oh(q)$ time. 
  Consequently, the total running time is $\Ohtilde(\frac{1}{r}|T|+q)$ with high probability.
\end{proof}

Next, we develop a counterpart of \cref{lem:pm}.
\begin{lemma}\label{lem:box}
There is an algorithm that, given strings $X$ and $Y$, a real parameter $r>0$,
and an integer range $J$, 
returns $\bpLCE_r^{X,Y}(0,j)$ for each $j\in J$.
The algorithm costs $\Ohtilde(\frac{1}{r}|X|+|J|)$ time with high probability.
\end{lemma}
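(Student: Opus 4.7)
The plan is to mirror the case analysis in the proof of \cref{lem:pm}, but return actual $\bpLCE_r$ values for each $j\in J$ rather than a YES/NO answer. In the base case $|X|\le 2|J|$, I would compute $\LCE_0^{X,Y}(0,j)$ exactly for every $j\in J$ via a single Z-function over $X\$Y[\min J\dd \max J+|X|]$, returning this value as a valid $\bpLCE_r$; the cost is $\Oh(|J|)$. For $|X|>2|J|$, classical pattern matching finds the candidate set $C=\{j\in J : X[0\dd 2|J|)=Y[j\dd j+2|J|)\}$ in $\Oh(|J|)$ time, and the same Z-function yields $\LCE_0^{X,Y}(0,j)$ for every $j\in J\setminus C$, which again serves as an admissible $\bpLCE_r$.

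For $j\in C$, the match must be extended past position $2|J|$. If $|C|=1$, a single invocation of \cref{fct:bplce} on the suffixes $X[2|J|\dd|X|)$ and $Y[j^*+2|J|\dd|Y|)$ computes the tail in $\Ohtilde(|X|/r)$ time, and \cref{lem:combine} composes it with the exact length-$2|J|$ prefix. If $|C|\ge 2$, \cref{fct:period} gives $p:=\per(X[0\dd 2|J|))\le|J|$ and forces $C$ to be an arithmetic progression with common difference $p$. I would then invoke \cref{lem:period2} once on $X$ and once on $Y[\min C\dd|Y|)$, both with $q=|J|$ and an appropriately scaled rate, obtaining values $\ell_X$ and $\ell_Y$ satisfying the $\bpLCE$ specification against the common periodic extension $P^\infty$ with $P=X[0\dd p)$; each call costs $\Ohtilde(|X|/r+|J|)$.

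The main obstacle is assembling valid $\bpLCE_r^{X,Y}(0,j)$ values from $\ell_X$ and $\ell_Y$ in the multi-candidate case. A naive $\min(\ell_X,\ell_Y-kp)$ rule fails the lower bound $\ge \LCE_0^{X,Y}(0,j)$, because $X$ and $Y[j\dd]$ may coincidentally agree at a position where both deviate from $P^\infty$. I plan to resolve this by sharing the underlying random sample $S$ across the two calls to \cref{lem:period2} and, at each $s\in S$ flagged as a mismatch against $P^\infty$ on either side, directly probing whether $X[s]=Y[j+s]$, so that only genuine $(X,Y[j\dd])$ mismatches count; a further short extension via direct character probes past the nominal break then produces the final value. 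The break structure guaranteed by \cref{lem:period2} (period exceeding $|J|\ge p$) limits this extension to $\Oh(1)$ amortized probes per $j\in C$, by a contradiction argument with the still-periodic window of $Y[j\dd]$ adjacent to the break. Composing the probabilistic bounds via the triangle inequality on Hamming distances and a union bound costs a constant-factor reduction of $r$, which is absorbed in the scaling above. The total running time sums to $\Ohtilde(|X|/r+|J|)$ as claimed.
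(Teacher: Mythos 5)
Your proposal correctly mirrors the paper's overall structure up to a point: base case $|X|=\Oh(|J|)$ handled by a Z-function / exact pattern matching; filtering the candidate set $C$ of positions matching a length-$\Theta(|J|)$ prefix; exact $\LCE_0$ values returned for $j\notin C$; special-casing $|C|=1$; and, for $|C|\ge 2$, invoking the period structure from \cref{fct:period} and calling \cref{lem:period2} once on $X$ and once on a suitable fragment of $Y$. You also correctly pinpoint the main obstacle: the naive $\min(\ell^X,\ell^Y-\text{shift})$ rule can violate the hard lower bound $\bpLCE_r^{X,Y}(0,j)\ge\LCE_0^{X,Y}(0,j)$ when $X$ and $Y[j\dd]$ both deviate from $P^\infty$ at the same place with the same character.

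However, your proposed resolution of that obstacle does not work, and it is where the proof has a genuine gap. Sharing the random sample $S$ between the two \cref{lem:period2} calls does not let you decide, for an arbitrary $j\in C$, whether a flagged position $s$ is a \emph{genuine} mismatch between $X[0\dd\cdot)$ and $Y[j\dd\cdot)$: you sample $X[s]$ and $\bar Y[s]=Y[\min C+s]$, but a genuine mismatch for shift $j$ requires knowledge of $Y[j+s]=\bar Y[(j-\min C)+s]$, a position at a $j$-dependent offset that your shared sample simply does not cover for $j\ne\min C$. Probing it directly, for every flagged $s$ and every $j\in C$, costs up to $|C|\cdot|S'|$ queries, which is not $\Ohtilde(|X|/r+|J|)$ in general. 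The claim that ``a further short extension \ldots is $\Oh(1)$ amortized probes per $j\in C$'' is unsubstantiated: if $X$ and $Y[j\dd]$ coincidentally agree past the nominal break, you must in general rerun essentially the entire break-finding machinery, at cost $\Ohtilde(|X|/r+|J|)$ per such $j$, and nothing in your argument bounds the number of such $j$.

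The paper's actual fix is different and is the crux of the lemma. For each $j\in C$ it tentatively sets $\ell_j=\min(\ell^X,\,\ell^Y-(j-\min C))$ and \emph{tests} the lower bound by a single window comparison, namely whether $X(\ell_j-2\Delta\dd\ell_j]=Y(j+\ell_j-2\Delta\dd j+\ell_j]$ (where $\Delta=\max J-\min J$). If not, some mismatch witnesses $\LCE_0^{X,Y}(0,j)\le\ell_j$ and $\ell_j$ is accepted (the upper tail is handled separately by a triangle-inequality/independence argument akin to your $k_1+k_2\ge k+1$ idea). If the windows \emph{do} match, the algorithm simply falls back to computing $\bpLCE_r^{X,Y}(0,j)$ from scratch via \cref{fct:bplce}. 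The point that makes this efficient --- and that your proposal lacks --- is that \cref{lem:period2} guarantees the break window has period $>\Delta$; by \cref{fct:period}, an aperiodic pattern of that length can occur at most once among $\Delta+1$ consecutive starting positions, so at most one $j\in C$ with $\ell_j=\ell^X$ and at most one with $\ell_j\ne\ell^X$ can pass the window test. Hence at most two fall-back calls are ever made, and the total cost stays $\Ohtilde(|X|/r+|J|)$. Without this ``at most two fall-backs'' argument, the running-time bound is not established.
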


\SetKwFunction{PMTwo}{Batch}
\begin{algorithm}[th]
\SetKwComment{Comment}{$\triangleright$\ \rm}{}
$\Delta := \max J - \min J$\;
\lForEach{$j\in J$}{$\ell_j := \min(\LCE_0^{X,Y}(0,j),2\Delta)$}\label{ln:short}
$C := \{j\in J : \ell_j = 2\Delta \}$\;
\If{$|C|\le 1$}{
  \lForEach(\Comment*[f]{computed using \cref{fct:bplce}}){$j\in C$}{$\ell_j := \bpLCE_r^{X,Y}(0,j)$}\label{ln:only}
}\Else{\label{ln:else_start}
  $\ell^X := \FindBreakTwo{$X$, $r$, $2\Delta$}$\;
  $\ell^Y := \FindBreakTwo{$Y[\min C \dd \min(\max C + |X|,|Y|))$, $r$, $2\Delta$}$\;
  \ForEach{$j\in C$}{
    $\ell_j := \min(\ell^X, \ell^Y-j+\min C)$\;\label{ln:standard}
    \If{$\ell_j<\min(|X|,|Y|-j)$ \KwSty{and} $X(\ell_j-2\Delta\dd \ell_j]=Y(j+\ell_j-2\Delta\dd j+\ell_j]$}{\label{ln:if2}
      $\ell_j :=  \bpLCE_r^{X,Y}(0,j)$\Comment*{computed using \cref{fct:bplce}}\label{ln:special}
    }
  }\label{ln:else_end}
}
\Return{$(\ell_j)_{j\in J}$}
\caption{\protect\PMTwo{$X$, $Y$, $r$, $J$}}\label{alg:box}
\end{algorithm}
\begin{proof}
  A procedure $\PMTwo{$X$, $Y$, $r$, $J$}$ implementing \cref{lem:box}
  is given as \cref{alg:box}.

  First, the algorithm sets $\Delta := \max J - \min J$
  and computes  $\min(\LCE_0^{X,Y}(0,j),2\Delta)$ for each $j\in J$.
  Implementation details of this step are discussed later on. 
  Then, the algorithm sets $C := \{j\in J : \ell_j = 2\Delta\}$ so that 
  $\ell_j = \LCE_0^{X,Y}(0,j)$ holds for each $j\in J\sm C$.
  Consequently, $\ell_j$ can be returned as $\bpLCE_r^{P,T}(0,j)$ for $j\in J\sm C$,
  and the algorithm indeed returns these values (the values $\ell_j$ set in \cref{ln:short}
  are later modified only for $j\in C$).

  Thus, the remaining focus is on determining $\bpLCE_r^{X,Y}(0,j)$ for $j\in C$.
  If $|C|\le 1$, then these values are computed explicitly using \cref{fct:bplce}. 
  In the case of $|C|\ge 2$, handled in Lines~\ref{ln:else_start}--\ref{ln:else_end},
  the algorithm applies the \FindBreakTwo\ function of \cref{lem:period2}
  for $X$ and $\bar{Y}:=Y[\min C \dd \min(\max C + |X|,|Y|))$, both with $q=2\Delta$.

  These calls are only valid if $\per(X[0\dd 2\Delta))\le \Delta$ and $\per(\bar{Y}[0\dd 2\Delta))\le \Delta$,
  so we shall prove that these conditions are indeed satisfied.
  First, note that $C = \{o+\min J: o \in \Occ(X[0\dd 2\Delta),Y[\min J\dd \max J+2\Delta))\}$.
  Consequently, \cref{fct:period} implies that $C$ is an arithmetic progression
  with difference $p:=\per(X[0\dd 2\Delta))$.
  Due to $|C|\ge 2$, we conclude that $p\le \max J - \min J = \Delta$.
  Moreover, since $Y[j\dd j+2\Delta)=X[0\dd 2\Delta)$ for each $j\in C$,
  we have $\bar{Y}[0\dd 2\Delta)=Y[\min C\dd \min C+2\Delta)=X[0\dd 2\Delta)$,
  and thus $\per(\bar{Y}[0\dd 2\Delta))\le \Delta$.
  Hence, the calls to \FindBreakTwo\ are indeed valid.

  Based on the values $\ell^X$ and $\ell^Y$ returned by these two calls,
  the algorithm seets $\ell_j:=\min(\ell^X, \ell^Y-j+\min C)$ for each $j\in C$ in \cref{ln:standard}.
  These values satisfy the following property:
  \begin{claim}\label{clm:standard}
    For each $j\in C$, the value $\ell_j=\min(\ell^X, \ell^Y-j+\min C)$ set in \cref{ln:standard} satisfies   $\Pr[\ell_j > \LCE_k^{X,Y}(0,j)] \le  \exp(-\frac{k+1}{r})$ for every integer $k\ge 0$.
  \end{claim}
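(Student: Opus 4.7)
The plan is to introduce a common infinite periodic reference string $Z$ so that both $\ell^X$ and $\ell^Y-j+\min C$ can be interpreted as approximate LCE values against $Z$, and then combine them by a triangle inequality in Hamming distance together with the independence of the two \FindBreakTwo\ invocations.

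First I would set $p:=\per(X[0\dd 2\Delta))$ and $Z[i]:=X[i\bmod p]$ for $i\ge 0$. The preceding analysis has already shown that $C$ is an arithmetic progression with common difference $p$ and that $\bar Y[0\dd 2\Delta)=X[0\dd 2\Delta)$; writing $s:=j-\min C$, the shift $s$ is therefore a nonnegative multiple of $p$. It follows that the periodic extension $\bar Y'$ used inside the second \FindBreakTwo\ call equals $Z$, and that $\bar Y'[s+i]=Z[i]$ for every $i\ge 0$. By \cref{lem:period2}, $\ell^X$ satisfies the upper-tail condition of $\bpLCE_r^{X,Z}(0,0)$.

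Next I would transfer the upper-tail bound on $\ell^Y$ to the suffix $U:=Y[j\dd\cdot)=\bar Y[s\dd\cdot)$. With $h:=\HAM(\bar Y[0\dd s),\bar Y'[0\dd s))$, the decomposition
\[
  \HAM(\bar Y[0\dd s+\ell),\bar Y'[0\dd s+\ell))=h+\HAM(U[0\dd\ell),Z[0\dd\ell))
\]
gives $\LCE_k^{\bar Y,\bar Y'}(0,0)\le s+\LCE_{k-h}^{U,Z}(0,0)$ for $k\ge h$, so the \cref{lem:period2} guarantee on $\ell^Y$ implies
\[
  \Pr[\ell^Y-s>\LCE_{k'}^{U,Z}(0,0)]\le\exp(-(k'+h+1)/r)\le\exp(-(k'+1)/r)\quad\text{for all }k'\ge 0.
\]

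To conclude, fix $k\ge 0$ and set $L:=\LCE_k^{X,Y}(0,j)=\LCE_k^{X,U}(0,0)$. Since \FindBreakTwo\ never returns a value exceeding the length of its input, $\ell^X\le|X|$ and $\ell^Y-s\le|\bar Y|-s\le|Y|-j=|U|$, so if $L=\min(|X|,|U|)$ then $\min(\ell^X,\ell^Y-s)\le L$ deterministically. Otherwise $X[L]\ne U[L]$, whence $\HAM(X[0\dd L+1),U[0\dd L+1))=k+1$, and the triangle inequality in Hamming distance against $Z$ yields $k_1+k_2\ge k+1$ for $k_1:=\HAM(X[0\dd L{+}1),Z[0\dd L{+}1))$ and $k_2:=\HAM(U[0\dd L{+}1),Z[0\dd L{+}1))$. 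The two \FindBreakTwo\ invocations draw their random bits independently, so the tail bounds multiply:
\[
  \Pr[\min(\ell^X,\ell^Y-s)>L]\le\exp(-k_1/r)\cdot\exp(-k_2/r)\le\exp(-(k+1)/r).
\]
The main delicate step is the period bookkeeping --- verifying that $\bar Y'$ genuinely equals $Z$ and that the shift $s$ preserves the period phase --- without which the shift-by-$s$ identity for $\LCE$ would fail.
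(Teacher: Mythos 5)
Your proof is correct and follows essentially the same route as the paper's: introduce the infinite periodic reference string (your $Z$, the paper's $X'$), verify $\bar Y'=Z$ and that $s=j-\min C$ is a multiple of $p$ so the shift preserves the period phase, split the $k+1$ mismatches at the critical prefix via the triangle inequality into $k_1$ (for $X$ vs.\ $Z$) and $k_2$ (for $U$ vs.\ $Z$), bound each tail using the $\bpLCE_r$ guarantee from \cref{lem:period2}, and multiply by the independence of the two \FindBreakTwo\ calls. The only cosmetic difference is that you carry the quantity $h=\HAM(\bar Y[0\dd s),\bar Y'[0\dd s))$ symbolically and absorb it into the exponent, whereas the paper observes directly that $h=0$ (since $p=\per(\bar Y[0\dd 2\Delta))$ and $s\le\Delta$); both are valid.
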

  \begin{proof}
    Note that $\ell_j \le \min(|X|,|Y|-j)$ due to $\ell^X\le |X|$ and $\ell^Y\le|\bar{Y}|$,
   Consequently, if $\LCE_k^{X,Y}(0,j)=\min(|X|,|Y|-j)$, then the claim holds trivially.
  In the following, we assume that $d := \LCE_k^{X,Y}(0,j) < \min(|X|,|Y|-j)$
  so that $X[0\dd d]$ and $Y[j\dd j+d]$ are well-defined fragments
   with $\HAM(X[0\dd d], Y[j\dd j+d])=k+1$.

   Consider an infinite string $X'$ with $X'[i]=X[i\bmod p]$ for each $i\ge 0$,
   and define $k_1 = \HAM(X[0\dd d],X'[0\dd d])$ as well as $k_2 = \HAM(Y[j\dd j+d],X'[0\dd d])$,
   observing that the triangle inequality yields $k_1 + k_2 \ge k+1$.
   Due to $\ell^X=\bpLCE_r^{X,X'}(0,0)$ (by \cref{lem:period2}), we have $\Pr[\ell^X > d] \le \exp(-\frac{k_1}{r})$, because $d \ge \LCE_{k_1-1}^{X,X'}(0,0)$.

   Next, consider an infinite string $\bar{Y}'$ with $\bar{Y}'[i] =\bar{Y}[i \bmod p]$ for $i\ge 0$,
   and observe that $\bar{Y}'=X'$ due to $\bar{Y}[0\dd 2\Delta)=X[0\dd 2\Delta)$. 
   Consequently,
   $k_2 =\HAM(\bar{Y}[j-\min C\dd j-\min C+d],\allowbreak \bar{Y}'[0\dd d])$.
   Since $C$ forms an arithmetic progression with difference $p$ and $j\in C$,
   we further have 
   $k_2 =\HAM(\bar{Y}[j-\min C\dd j-\min C+d],\bar{Y}'[j-\min C\dd j-\min C+d])$.
   Moreover, $p=\per(\bar{Y}[0\dd 2\Delta))$ and $j-\min C \le \Delta$ yields
   $\bar{Y}[0\dd j-\min C)=\bar{Y}'[0\dd j-\min C)$, and therefore
   $k_2 =\HAM(\bar{Y}[0\dd j-\min C+d],\allowbreak \bar{Y}'[0\dd j-\min C+d])$.
   We conclude that $j-\min C+d \ge \LCE_{k_2-1}^{\bar{Y},\bar{Y}'}(0,0)$.
   Due to $\ell^Y=\bpLCE_r^{\bar{Y},\bar{Y}'}(0,0)$ (by \cref{lem:period2}), we thus have $\Pr[\ell^Y-j+\min C > d]=\Pr[\ell^Y > j-\min C + d] \le \exp(-\frac{k_2}{r})$.

  Finally, since the calls to \cref{lem:period2} use independent randomness
  (and thus $\ell^X$ and $\ell^Y$ are independent random variables), we conclude that
  \begin{multline*}
    \Pr[\ell_j > \LCE_k^{X,Y}(0,j)]=
  \Pr[\ell_j > d] = \Pr[\ell^X > d \text{ and } \ell^Y-j+\min C > d]
  \\ = \Pr[\ell^X > d]\cdot \Pr[\ell^Y-j+\min C > d] \le \exp(-\tfrac{k_1}{r})\cdot \exp(-\tfrac{k_2}{r})
 =\exp(-\tfrac{k_1+k_2}{r}) \le \exp(-\tfrac{k+1}{r}),\end{multline*}
  which completes the proof.
  \end{proof}

  For each $j\in C$, after setting $\ell_j$ in \cref{ln:standard}, the algorithm performs an additional check in \cref{ln:if2}; its implementation is discussed later on.
  If the check succeeds, then the algorithm falls back to computing $\ell_j=\bpLCE_r^{X,Y}(0,j)$,
  using \cref{fct:bplce}, which results in a correct value by definition.
  Otherwise, $\ell_j \ge |X|$, $\ell_j \ge |Y|-j$, or $X(\ell_j-2\Delta\dd \ell_j]\ne Y(j+\ell_j-2\Delta \dd j+\ell_j]$.
  Each of these condition yields $\LCE_{0}^{X,Y}(0,j)\le \ell_j$.
  Hence, due to \cref{clm:standard}, returning $\ell_j$ as $\bpLCE_r^{X,Y}(0,j)$ is then correct.
  This completes the proof that the values returned by \cref{alg:box} are correct.

  However, we still need to describe the implementation of \cref{ln:short} and \cref{ln:if2}.
  The values $\min(\LCE_0^{X,Y}(0,j),2\Delta)$ needed in \cref{ln:short}
  are determined using an auxiliary string
  $T=X[0\dd 2\Delta)\$Y[\min J \dd \max J + 2\Delta)$, where
  $\$$ and each out-of-bounds character does not match any other character.
  The PREF table of $T$, with 
  $\mathrm{PREF}_T[t]=\LCE_0^{T,T}(0,t)$ for $t\in [0\dd |T|]$,
  can be constructed in $\Oh(|T|)=\Oh(|J|)$ time using a textbook algorithm~\cite{Jewels}
  and satisfies $\min(\LCE_0^{X,Y}(0,j),2\Delta)=\mathrm{PREF}_T[2\Delta+1+j-\min J]$
  for each $j\in J$.
  
  Our approach to testing $X(\ell_j-2\Delta\dd \ell_j]= Y(j+\ell_j-2\Delta \dd j+\ell_j]$ in \cref{ln:if2}
  depends on whether $\ell_j = \ell^X$ or not.
  Positions $j\in C$ with $\ell_j = \ell^X$ need to be handled only if $\ell^X<|X|$.
  In this case, $X(\ell_j-2\Delta\dd \ell_j]= Y(j+\ell_j-2\Delta \dd j+\ell_j]$
  holds if and only if $X(\ell^X-2\Delta\dd \ell^X]$ has an occurrence in $T$
  at position $j+\ell^X-2\Delta+1$. 
  Hence, a linear-time pattern matching algorithm is used to 
  find the occurrences of $X(\ell^X-2\Delta\dd \ell^X]$ starting in $T$ between
  positions $\min C + \ell^X -2\Delta+1$ and $\max C + \ell^X - 2\Delta+1$, inclusive.
  Due to $\max C - \min C \le \Delta$, this takes $\Oh(\Delta)$ time.
  Moreover, since $\per(X(\ell^X-2\Delta\dd \ell^X])>\Delta$ holds by \cref{lem:period2},
  \cref{fct:period} implies that there is at most one such occurrence, i.e., at most one position $j$ with $\ell_j = \ell^X$ passes the test in \cref{ln:if2}.

  Next, consider positions $j\in C$ with $\ell_j \ne \ell^X$.
  Since these positions satisfy $\ell_j = \ell^Y-j+\min C$,
  the condition $\ell_j < |Y|-j$ implies $\ell^Y+\min C < |Y|$.
  Moreover, the condition $\ell_j < |X|$ implies $\ell^Y< |X|+j-\min C \le |X|+\max C-\min C$.
  Consequently, such $j\in C$ need to be handled only if 
  $\ell^Y < \min(|Y|-\min C, |X|+\max C-\min C) = |\bar{Y}|$. 
  Our observation is that $X(\ell_j-2\Delta\dd \ell_j]= Y(j+\ell_j-2\Delta \dd j+\ell_j]$
  if and only if $\bar{Y}(\ell^Y-2\Delta \dd \ell^Y]$
  has an occurrence in $X$ at position $\ell^Y-j+\min C-2\Delta+1$.
  Hence, a linear-time pattern matching algorithm is used to 
  find the occurrences of $\bar{Y}(\ell^Y-2\Delta \dd \ell^Y]$ starting in $X$ between
  positions  $\ell^Y-\max C+\min C-2\Delta+1$ and  $\ell^Y-2\Delta+1$, inclusive.
  Due to $\max C - \min C \le \Delta$, this takes $\Oh(\Delta)$ time.
  Moreover, since $\per(\bar{Y}(\ell^Y-2\Delta \dd \ell^Y])>\Delta$ holds by \cref{lem:period2},
  \cref{fct:period} implies that there is at most one such occurrence, i.e., at most one position $j$ with $\ell_j \ne \ell^X$ passes the test in \cref{ln:if2}.

  We conclude that \cref{ln:if2} (across all $j\in C$) can be implemented in $\Oh(\Delta)$
  time and that \cref{ln:special} needs to be executed for at most two indices $j\in C$.
  Consequently, the overall cost of executing Lines~\ref{ln:standard}--\ref{ln:special}
  is $\Ohtilde(\frac1{r}|X|+\Delta)$ with high probability.
  Due to $|\bar{Y}|\le |X|+\Delta$, the cost of calls to \FindBreakTwo\ of \cref{lem:period2}
  is also $\Ohtilde(\frac1{r}|X|+\Delta)$ with high probability.
  As explained above, executing \cref{ln:short} costs $\Oh(|J|)$ time.
  The cost of \cref{ln:only} is $\Ohtilde(\frac{1}{r}|X|)$ with high probability.
  Due to $\Delta = |J|-1$, the total running time is therefore $\Ohtilde(\frac{1}{r}|X|+|J|)$
  with high probability.
  \end{proof}

  We are now ready to prove a counterpart of \cref{prp:batch} for $\bpLCE_r$ queries.
\begin{lemma}\label{lem:batch}
There is a data structure that, initialized with strings $X$ and $Y$, a real parameter $r>0$,
and an integer range $\Delta$,
answers the following queries: given an integer $x$, return $\bpLCE_r^{X,Y}(x,x+\delta)$
for each $\delta \in \Delta$.
The initialization costs $\Ohtilde(\frac{1}{r}|X|)$ time with high probability,
and the queries cost $\Ohtilde(|\Delta|)$ time with high probability.
\end{lemma}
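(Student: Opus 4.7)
The plan is to decompose $X$ into blocks $X^{(b)} := X[bL \dd (b+1)L)$ of length $L = \Theta(r|\Delta|)$ and precompute, for each block, enough information to reduce any query to a single invocation of \cref{lem:box} on a partial block followed by a short walk over subsequent blocks, glued together using the composability fact (\cref{lem:combine}). In the preprocessing stage, for every block $b$, I invoke \cref{lem:box} on $(X^{(b)}, Y, r, J_b)$ with $J_b = \{bL+\delta : \delta \in \Delta\}$, using a fresh independent random seed in each call; this produces $v^{(1)}_{b,\delta} := \bpLCE_r^{X^{(b)}, Y}(0, bL+\delta)$ for every $\delta \in \Delta$ at cost $\Ohtilde(L/r + |\Delta|) = \Ohtilde(|\Delta|)$ per block, summing to $\Ohtilde((|X|/L)|\Delta|) = \Ohtilde(|X|/r)$ in total, which matches the preprocessing budget.

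To answer a query at a position $x$, I first locate the block $b$ with $x \in [bL \dd (b+1)L)$ and compute $\bpLCE_r^{X[x \dd (b+1)L), Y}(0, x+\delta)$ for every $\delta \in \Delta$ by one more call to \cref{lem:box} (with its own fresh seed), again in $\Ohtilde(|\Delta|)$ time. Whenever this partial-block value hits the boundary $(b+1)L - x$, I extend across full subsequent blocks using \cref{lem:combine}: its iterated application shows that the walk advances by a full $L$ whenever $v^{(1)}_{b',\delta} = L$, and terminates at the first $b' \ge b+1$ for which $v^{(1)}_{b',\delta} < L$, contributing $v^{(1)}_{b',\delta}$ to the final value. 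Since this walk could traverse $\Theta(|X|/L)$ blocks per $\delta$ naively, I would accelerate it with a standard binary-lifting table: for each $\delta$ and level $j \in [0 \dd \lceil \log (|X|/L) \rceil]$, store $v^{(2^j)}_{b,\delta} := \bpLCE_r^{X[bL \dd (b+2^j)L), Y}(0, bL+\delta)$, built bottom-up in $\Oh(1)$ per entry by combining two previously computed values whose contributing $X$-ranges are disjoint. The table has $\Ohtilde(|X||\Delta|/L) = \Ohtilde(|X|/r)$ entries, remaining within the preprocessing budget, while a standard descending-level traversal resolves each $\delta$ in $\Oh(\log n)$ lookups, for total query cost $\Ohtilde(|\Delta|)$.

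The main obstacle will be bookkeeping the randomness so that every pairwise application of \cref{lem:combine} — both along the lifting hierarchy during preprocessing and between the partial-block query call and the precomputed tables at query time — involves two $\bpLCE_r$ values drawn from stochastically independent random bits; otherwise the probabilistic tail bound in the definition of $\bpLCE_r$ is not preserved. I resolve this by giving each \cref{lem:box} invocation its own independent random seed and noting that any two values later combined come either from different blocks of $X$ (so from different preprocessing calls) or from the partial-block query call and a precomputed entry, which are independent by construction. Under this discipline, a level-by-level induction verifies the hypotheses of \cref{lem:combine} at each combination, showing that every $v^{(2^j)}_{b,\delta}$ and, ultimately, the returned value satisfy the conditions for $\bpLCE_r^{X,Y}(x, x+\delta)$ with high probability, completing the proof.
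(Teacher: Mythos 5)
Your proposal is correct and achieves the same bounds, but takes a genuinely different route from the paper. You precompute per-block values $v^{(1)}_{b,\delta}$ together with a binary-lifting hierarchy, and at query time you do a descending traversal to find where the chain of full blocks terminates. The paper instead runs a backward DP: it precomputes the \emph{suffix} values $\ell_{x,\delta}=\bpLCE_r^{X,Y}(x,x+\delta)$ for all $x\equiv|X|\pmod q$, building $\ell_{x,\delta}$ from $\ell_{x+q,\delta}$ by one call to \cref{lem:box} on $X[x\dd x+q)$ followed by one application of \cref{lem:combine}. Because the stored entries already encode the entire suffix, a query needs only a single \cref{lem:box} call on the partial block $X[x\dd x')$ and one composition with $\ell_{x',\delta}$ — no lifting table and no traversal. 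Both approaches cost $\Ohtilde(|X|/r)$ preprocessing and $\Ohtilde(|\Delta|)$ per query, and both rely on the same two facts (\cref{lem:box} and \cref{lem:combine}), so the difference is mainly in the organization: your forward-walk design adds a $\log$ factor of table entries and requires arguing that the descending traversal reproduces exactly the left-to-right iterated composition (which holds by associativity of the combination rule in \cref{lem:combine}, since a level-$j$ entry equals the iterated composition of its $2^j$ constituent level-$0$ entries and greedily peeling binary digits lands at the first non-full block), whereas the paper's backward DP makes each application of \cref{lem:combine} immediate. If you keep the binary-lifting route, you should spell out that associativity/greedy argument explicitly — it is the one non-trivial step you currently gesture at — and note the boundary truncation for blocks near the end of $X$; also note that the combined value is a valid $\bpLCE_r$ random variable \emph{unconditionally} (the tail bound is part of its distribution), so the phrase ``with high probability'' applies to the running time rather than to correctness of the returned value.
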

\begin{algorithm}[ht]
  \SetKwComment{Comment}{$\triangleright$\ \rm}{}
  \SetKwFunction{Construction}{Construction}
  \SetKwFunction{Query}{Query}
  
  \Construction{$X$, $Y$, $r$, $\Delta$} \Begin{
    $q := \ceil{r|J|}$\;
    $x := |X|$\;
    \lForEach{$\delta \in \Delta$}{$\ell_{x,\delta} := 0$}
    \While{$x \ge q$}{
      $x := x - q$\;
      $(v_\delta)_{\delta\in \Delta} := \PMTwo{$X[x \dd x+q)$, $Y$, $r$, $\{x+\delta:\delta\in \Delta\}$}$\;
      \ForEach{$\delta \in \Delta$}{
        \lIf{$v_\delta < q$}{$\ell_{x,\delta}:=v_\delta$}
        \lElse{$\ell_{x,\delta} := q + \ell_{x+q,\delta}$}
      }
    }
  }
  \BlankLine

  \Query{$x$} \Begin{
    \lIf{$x\notin [0\dd |X|]$}{\Return{$(0)_{\delta \in\Delta}$}}
    $x' := x + (|X|-x)\bmod q$\;
    \If{$x' \ne x$}{
      $(v_\delta)_{\delta\in \Delta} := \PMTwo{$X[x \dd x')$, $Y$, $r$, $\{x+\delta:\delta\in \Delta\}$}$\;
      \ForEach{$\delta \in \Delta$}{
        \lIf{$v_\delta < x'-x$}{$\ell_{x,\delta}:=v_\delta$}
        \lElse{$\ell_{x,\delta} := x'-x + \ell_{x',\delta}$}
      }
    }
    \Return{$(\ell_{x,\delta})_{\delta \in\Delta}$}\;
  }

  \caption{Implementation of the data structure of \cref{lem:batch}}\label{alg:batch}
  \end{algorithm}
\begin{proof}
Procedures $\Construction{$X$, $Y$, $r$, $\Delta$}$ and $\Query{$x$}$ implementing \cref{lem:batch} are given as \cref{alg:batch}.

The construction algorithm precomputes the answers for all $x\in [0\dd |X|]$
satisfying $x \equiv |X| \pmod{q}$, where $q = \ceil{r|\Delta|}$. 
More formally, for each such $x$, the data structure stores $\ell_{x,\delta}=\bpLCE_{r}^{X,Y}(x,x+\delta)$ for all $\delta\in \Delta$.
First, the values $\ell_{|X|,\delta}$ are set to $0$.
In subsequent iterations, the algorithm computes $\ell_{x,\delta}$ based on $\ell_{x+q,\delta}$.
For this, the procedure \PMTwo{$X[x \dd x+q)$, $Y$, $r$, $\{x+\delta:\delta\in \Delta\}$} of \cref{lem:box}
is called. The resulting values $\bpLCE_r^{X[x\dd x+q),Y}(0,x+\delta)$ are then combined 
with $\ell_{x+q,\delta}=\bpLCE_r^{X[x+q\dd |X|),Y}(0,x+q+\delta)$ based on \cref{lem:combine},
which yields $\bpLCE_r^{X[x\dd |X|),Y}(0,{x+\delta})=\bpLCE_r^{X,Y}(x,{x+\delta})$;
the latter values are stored at $\ell_{x,\delta}$.

The cost of a single iteration is $\Ohtilde(\frac{q}{r}+|\Delta|)=\Ohtilde(\frac{q}{r})$ with high probability,
and the number of iterations is $\Oh(\frac{1}{q}|X|)$, so the total preprocessing time is $\Ohtilde(\frac{1}{r}|X|)$ with high probability.

To answer a query for a given integer $x$, the algorithm needs to compute $\bpLCE_{r}^{X,Y}(x,x+\delta)$ for all $\delta\in \Delta$. If $x\notin [0\dd |X|]$, then these values are equal to $0$ by definition.
Otherwise, the algorithm computes the nearest integer $x'\ge x$ with $x' \equiv |X| \pmod{q}$.
 If $x'=x$, then the sought values have already been precomputed.
Otherwise, the algorithm proceeds based the values $\ell_{x',\delta}$.
For this, the procedure \PMTwo{$X[x \dd x')$, $Y$, $r$, $\{x+\delta:\delta\in \Delta\}$} of \cref{lem:box}
is called.
The resulting values $\bpLCE_r^{X[x\dd x'),Y}(0,x+\delta)$ are then combined 
with $\ell_{x',\delta}=\bpLCE_r^{X[x'\dd |X|),Y}(0,x'+\delta)$ based on \cref{lem:combine},
which yields the sought values $\bpLCE_r^{X[x\dd |X|),Y}(0,x+\delta)=\bpLCE_r^{X,Y}(x,x+\delta)$.

The cost of a query is $\Ohtilde(\frac{x'-x}{r}+|\Delta|)=\Ohtilde(\frac{q}{r}+|\Delta|)=\Ohtilde(\Delta)$ with high probability.
\end{proof}

Finally, we recall that $\bpLCE_r^{X,Y}(x,y)$ satisfies the requirements for $\apLCE{k}^{X,Y}(x,y)$
with probability at least $1-\exp(-\frac{k+1}{r})$.
Consequently, taking sufficiently small $r=\Thetatilde(k+1)$ guarantees success with high probability.
Thus, \cref{lem:batch} yields \cref{prp:batch}, which we restate below.

\prpbatch*

\section{PTAS for Aperiodic Strings}\label{sec:PTAS}

In this section, we design an algorithm distinguishing between $\ED(X,Y)\le k$
and $\ED(X,Y)>(1+\eps)k$, assuming that $X$ does not have a length-$\ell$ 
substring with period at most $2k$. The high-level approach of our solution is
based on the existing algorithms for Ulam distance~\cite{AN10,NSS17}.
The key tool in these algorithms is a method for decomposing $X=X_0\cdots X_m$ and $Y=Y_0\cdots Y_m$ into
short phrases such that $\ED(X,Y)= \sum_{i=0}^m \ED(X_i,Y_i)$ if $\ED(X,Y)\le k$.
While designing such a decomposition in sublinear time for general strings $X$ and $Y$ 
remains a challenging open problem, the lack of long highly periodic substrings makes this task feasible.

\begin{lemma}\label{lem:decompose}
  There exists an algorithm that, given strings $X$ and $Y$, integers $k$ and $\ell$
  such that $\per(X[i\dd \allowbreak i+\ell))>2k$ for each $i\in [0\dd |X|-\ell]$,
  and a real parameter $0<\delta<1$, returns
  factorizations $X=X_0\cdots X_m$ and $Y=Y_0\cdots Y_m$ with
   $m =\Oh(\frac{\delta}{(k+1)\ell}|X|)$ such that
   $|X_i|\le \ceil{\delta^{-1}(k+1)\ell}$ for each $i\in [0\dd m]$ and,
   if $\ED(X,Y)\le k$, then $\Pr[\ED(X,Y)= \sum_{i=0}^m \ED(X_i,Y_i)] \ge 1-\delta$.
   The running time of the algorithm is $\Oh(\frac{\delta}{k+1}|X|)$.
\end{lemma}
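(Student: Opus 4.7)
The plan is to construct the decomposition by (i) placing the boundaries $0 = b_0 < b_1 < \cdots < b_m = |X|$ in $X$ at approximately equal spacing $L := \ceil{\delta^{-1}(k+1)\ell}$ with a random offset, and (ii) locating the corresponding boundary $c_i$ in $Y$ by searching for exact occurrences of the length-$\ell$ block $P_i := X[b_i \dd b_i+\ell)$ inside the window $W_i := Y[\max(0,b_i-k) \dd \min(|Y|,b_i+k+\ell))$. The output is $X_i := X[b_i \dd b_{i+1})$ and $Y_i := Y[c_i \dd c_{i+1})$. Concretely, I would pick a uniformly random offset $t \in \{1,\ldots,L\}$, set $b_i := t+(i-1)L$ for $i \ge 1$ while $b_i \le |X|-\ell$, and then $b_m := |X|$ (up to shrinking the spacing by a constant factor, this respects $|X_i| \le L$). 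For each internal $b_i$, I run a linear-time pattern-matching routine (e.g., Knuth--Morris--Pratt) to find all occurrences of $P_i$ in $W_i$; if there is a unique such occurrence, I take $c_i$ to be that position, otherwise I fall back to $c_i := \min(b_i,|Y|)$.

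The correctness argument hinges on the following structural observation. Fix an optimal alignment of $X$ and $Y$, and let $M \subseteq [0\dd |X|)$ be the set of $X$-positions that are substituted or deleted, so that $|M| \le \ED(X,Y) \le k$. Call $b_i$ \emph{good} if $[b_i \dd b_i+\ell) \cap M = \emptyset$. When $b_i$ is good, the alignment matches $P_i$ exactly with some $Y[c_i^* \dd c_i^*+\ell)$, where necessarily $|c_i^*-b_i| \le k$, so $c_i^* \in W_i$. For uniqueness, suppose $c \in W_i$ is a second occurrence of $P_i$; then $0 < |c-c_i^*| \le 2k$. The hypothesis $\per(P_i) > 2k$ combined with $\per(P_i) \le |P_i| = \ell$ forces $\ell > 2k$, so the two occurrences overlap, and a direct character-by-character comparison shows $|c-c_i^*|$ is a period of $P_i$, contradicting $\per(P_i) > 2k$. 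Hence every good $b_i$ yields the correct $c_i = c_i^*$.

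Next comes the union bound over the random offset. For each fixed $p \in M$, the set of offsets $t \in [1\dd L]$ that place some boundary $b_i = t+(i-1)L$ into $[p-\ell+1 \dd p]$ is a union of disjoint length-$\ell$ intervals (disjoint because $L \ge \ell$), so has total size at most $\ell$. Summing over $|M| \le k$ yields
\[
\Pr[\text{some internal } b_i \text{ is bad}] \le \frac{k\ell}{L} \le \frac{k\delta}{k+1} < \delta.
\]
When every internal $b_i$ is good, the $c_i$ are correctly and monotonically determined, and splitting the fixed optimal alignment at the aligned pairs $(b_i,c_i)$ gives subalignments of each $(X_i,Y_i)$ whose edit counts sum to $\ED(X,Y)$; therefore $\sum_i \ED(X_i,Y_i) \le \ED(X,Y)$, and the reverse inequality is immediate by concatenating optimal subalignments. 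The running time bound is straightforward: there are $m = \Oh(|X|/L) = \Oh(\frac{\delta|X|}{(k+1)\ell})$ pattern-matching subroutines, each on a pattern of length $\ell$ and text of length $\ell+2k = \Oh(\ell)$, giving $\Oh(m\ell) = \Oh(\frac{\delta|X|}{k+1})$ total.

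The main obstacle I expect is the periodicity-based uniqueness step: one must carefully argue that the occurrence $c_i^*$ provided by the alignment is the \emph{only} occurrence of $P_i$ in $W_i$, ruling out a competing occurrence at distance $\le 2k$. Everything else (random offset plus union bound, pattern matching inside $W_i$, and the standard splitting of the alignment) is routine once this uniqueness is established.
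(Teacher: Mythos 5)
Your proposal follows the same strategy as the paper: a random offset with blocks of size $\approx \delta^{-1}(k+1)\ell$, exact pattern matching in a width-$\Oh(k)$ window of $Y$ to locate each boundary, a periodicity argument for uniqueness, and a union bound over offsets that hit a ``bad'' position. Both constructions, probability bounds, and running-time bounds are essentially identical.

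However, there is a genuine gap in the way you define the bad set $M$. You take $M$ to be the set of $X$-positions that are substituted or deleted, and then claim that if $[b_i \dd b_i+\ell) \cap M = \emptyset$, the optimal alignment matches $P_i = X[b_i \dd b_i+\ell)$ \emph{exactly} against some length-$\ell$ substring $Y[c_i^* \dd c_i^*+\ell)$. This does not follow: an insertion into $Y$ can occur \emph{between} two matched positions of $P_i$ (e.g., $X[b_i]$ matched to $Y[c^*]$, $Y[c^*+1]$ inserted, $X[b_i+1]$ matched to $Y[c^*+2]$, and so on), in which case every position in $[b_i \dd b_i+\ell)$ is a match yet the aligned $Y$-block has length $> \ell$, so no length-$\ell$ occurrence is guaranteed. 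Then your pattern-matching step may find no occurrence (or a coincidental wrong one), the fallback $c_i := \min(b_i,|Y|)$ is not an aligned pair, and the ``split the alignment at the aligned pairs $(b_i,c_i)$'' step breaks. The paper sidesteps this by defining an \emph{error} to also include every position $x$ such that $X[x]$ is matched against $Y[y]$ and $Y[y+1]$ is inserted (plus, for bookkeeping, $x=|X|$). With that enlarged definition, a block disjoint from the error set really is matched exactly, and since each edit operation still contributes at most one error position in $[0\dd |X|)$, the bound $|M|\le k$ survives and the rest of your union-bound and running-time analysis goes through verbatim.
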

\begin{proof}
Let $q = \ceil{\delta^{-1}(k+1)\ell}$. If $|X|\le q$, then the algorithm returns
trivial decompositions of $X$ and $Y$ with $m=0$.
In the following, we assume that $q < |X|$. 
The algorithm chooses $r\in [0\dd q)$ uniformly at random and creates a partition
$X=X_0\cdots X_m$ so that $|X_0|=r$, $|X_i|=q$ for $i\in [1\dd m)$, and $|X_m|\le q$.
This partition clearly satisfies $m =\Oh(\frac{1}{q}|X|) =\Oh(\frac{\delta}{(k+1)\ell}|X|)$ and  $|X_i|\le q=\ceil{\delta^{-1}(k+1)\ell}$ for each $i\in [0\dd m]$.

Let us define $x_i$ for $i\in [0\dd m+1]$ so that $X_i = X[x_{i}\dd x_{i+1})$ for $i\in [0\dd m]$.
For each $i\in [1\dd m]$, the algorithm finds the occurrences of $X[x_i\dd x_i+\ell)$ in $Y$
with starting positions between $x_{i}-k$ and $x_{i}+k$. 
If there is no such occurrence (perhaps due to $|X_i|< \ell$ for $i=m$),
then the algorithm declares a failure and  returns a partition $Y=Y_0\cdots Y_m$
with $Y_0=Y$ and $Y_i=\eps$ for $i\ge 1$.
Otherwise, due to the assumption that $\per(X[x_i\dd x_i+\ell)) > 2k$,
there is exactly one occurrence, say, at position $y_i$.
(Recall that the distance between two positions in $\Occ(P,T)$ is either a period of $P$ or larger than $|P|$.)
The algorithm defines $Y_i=Y[y_i\dd y_{i+1})$ for $i\in [0\dd m]$,
where $y_0 = 0$ and $y_{m+1}=|Y|$.
This approach can be implemented in $\Oh(m\ell)=\Oh(\frac{\delta}{k+1}|X|)$
time using a classic linear-time pattern matching algorithm~\cite{MP70}.

We shall prove that the resulting partition $Y=Y_0\cdots Y_m$ satisfies the requirements.
Assuming that $\ED(X,Y)\le k$, let us fix an optimal alignment between $X$ and $Y$.
We need to prove that, with probability at least $1-\delta$, the fragments $X[x_i\dd x_i+\ell)$
are all matched against  $Y[y_i\dd y_i+\ell)$. By optimality of the alignment, 
this will imply $\Pr[\ED(X,Y)= \sum_{i=0}^m \ED(X_i,Y_i)] \ge 1-\delta$.

We say that a position $x\in [0\dd |X|]$ is an \emph{error}
if $x=|X|$ or (in the alignment considered) the position $X[x]$ is deleted
or matched against a position $Y[y]$ for $y\in [0\dd |Y|)$
such that $Y[y]\ne X[x]$ or $Y[y+1]$ is inserted.
Each edit operation yields at most one error,
so the total number of errors is at most $k+1$.
Moreover, if $[x_i\dd x_i+\ell)$ does not contain any error,
then $X[x_i\dd x_i+\ell)$ is matched exactly against a fragment of $Y$,
and that fragment must be $Y[y_i\dd y_i+\ell)$ (since we considered all starting positions in $[x_i-k\dd x_i+k]$).
Hence, if the algorithm fails, then there is an error $x\in [x_i\dd x_i+\ell)$ for some $i\in [1\dd m]$.
By definition of the decomposition $X=X_0\cdots X_m$, this implies $x\bmod q \in [r\dd r+\ell)\bmod q$, or, 
equivalently, $r\in (x-\ell\dd x]\bmod q$.
The probability of this event is $\frac{\ell}{q}\le \frac{\delta}{k+1}$.
The union bound across all errors yields an upper bound of $\delta$ on the failure probability.
\end{proof}

Next, we present a subroutine that will be applied to individual phrases
of the decompositions of \cref{lem:decompose}. Given that 
the phrases are short, we can afford using the classic Landau--Vishkin algorithm~\cite{LV88}
whenever we find out that the corresponding phrases do not match exactly.
\begin{lemma}\label{lem:simple}
There exists an algorithm that, given strings $X$ and $Y$, and an integer $k\ge 0$,
computes $\ED(X,Y)>0$ exactly, taking $\Ohtilde(|X|+\ED(X,Y)^2)$ time,
or certifies that $\ED(X,Y)\le k$ with high probability,
taking $\Ohtilde(1+\frac{1}{k+1}|X|)$ time with high probability.
\end{lemma}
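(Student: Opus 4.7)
The plan is to compose a fast sampling-based certifier with the classical Landau--Vishkin algorithm accelerated by exponential doubling of the distance threshold. First the algorithm runs the certifier; upon failure it falls back to exact computation.

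\textbf{Fast certifier.} If $|X|\ne|Y|$, the algorithm immediately defers to exact computation. Otherwise, it samples each position of $[0\dd |X|)$ independently with probability $\rho=\Theta(\tfrac{\log n}{k+1})$, obtaining a set $S\sub[0\dd|X|)$ of size $\Ohtilde(|X|/(k+1))$ with high probability. It then verifies whether $X[s]=Y[s]$ for every $s\in S$; if all these equalities hold, the algorithm outputs ``$\ED(X,Y)\le k$''. Correctness rests on the following: if $\HAM(X,Y)>k$, then the probability that the sampling misses every one of the (at least $k+1$) mismatches is at most $(1-\rho)^{k+1}\le \exp(-\rho(k+1))=n^{-\Omega(1)}$. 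Hence, conditioned on the certifier succeeding, we have $\HAM(X,Y)\le k$ with high probability, and therefore $\ED(X,Y)\le \HAM(X,Y)\le k$ because $|X|=|Y|$. The running time is $\Ohtilde(1+|X|/(k+1))$, dominated by the $|S|$ character comparisons.

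\textbf{Exact computation.} If the certifier defers ($|X|\ne|Y|$) or discovers a sampled mismatch, then $\ED(X,Y)\ge 1$. After an $\Oh(|X|)$-time preprocessing step for $\LCE_0$ queries on $X$ and $Y$, the algorithm invokes the Landau--Vishkin algorithm (\cref{alg:lv}) with exponentially increasing thresholds $k'=1,2,4,\ldots$; each invocation then runs in $\Oh((k')^2)$ time. The algorithm stops as soon as some call confirms $\ED(X,Y)\le k'$, at which point the exact edit distance is recovered as the smallest $i\in[0\dd k']$ with $d_{i,|Y|-|X|}=|X|$. Since doubling guarantees that the final threshold satisfies $k'\le 2\ED(X,Y)$, the total cost telescopes to $\Oh\bigl(|X|+\sum_{j=0}^{\Oh(\log \ED(X,Y))}4^j\bigr)=\Oh(|X|+\ED(X,Y)^2)$.

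\textbf{Main obstacle.} The principal design choice is calibrating the sampling rate $\rho$: it must be high enough that any instance with $\HAM>k$ is detected (justifying the certification), yet low enough that the sample contains only $\Ohtilde(|X|/(k+1))$ positions. The choice $\rho=\Theta(\log n/(k+1))$ balances these requirements and also makes $|S|$ concentrated around its expectation via a Chernoff bound. A secondary subtlety is that the certifier may miss instances with $\ED\le k$ but $\HAM>k$ (caused by shifts in the optimal alignment) or with $|X|\ne|Y|$; in both cases the algorithm safely falls through to the exact branch, whose guarantee $\Ohtilde(|X|+\ED(X,Y)^2)$ is consistent with the lemma statement and does not conflict with the sublinear bound, which is promised only when the algorithm actually outputs a certification.
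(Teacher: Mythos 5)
Your proof is correct and follows the same two-branch strategy as the paper: a sampling certifier that detects any instance with $\HAM(X,Y)>k$ (hence $\ED(X,Y)>k$ when $|X|=|Y|$) at rate $\Thetatilde(\frac{1}{k+1})$, followed by a fallback to the Landau--Vishkin exact algorithm once $\ED(X,Y)>0$ is established. The only difference is that you spell out the standard threshold-doubling trick used to turn Landau--Vishkin's $\Oh(n+k'^2)$-time decision procedure into an $\Oh(n+\ED(X,Y)^2)$-time exact computation, which the paper leaves implicit in its citation of~\cite{LV88}.
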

\begin{proof}
The algorithm first checks if $|X|=|Y|$, and then it samples $X$ with sufficiently large rate $\Thetatilde(\frac{1}{k+1})$ checking whether $X[i]=Y[i]$ for each sampled position $i$.
If the checks succeed, then the algorithm certifies that $\ED(X,Y)\le k$.
This branch takes $\Ohtilde(1+\frac{1}{k+1}|X|)$ time with high probability.
Otherwise,  $\ED(X,Y)>0$, and the algorithm falls back to a procedure of Landau and Vishkin~\cite{LV88}, whose running time is $\Ohtilde(|X|+|Y|+\ED(X,Y)^2)=\Ohtilde(|X|+\ED(X,Y)^2)$. 

As for correctness, it suffices to show that if the checks succeeded, then $\ED(X,Y)\le k$
with high probability. We shall prove a stronger claim that $\HAM(X,Y)\le k$.
For a proof by contradiction, suppose that $\HAM(X,Y)\ge k+1$ and consider some $k+1$ mismatches.
Notice that the sampling rate is sufficiently large that at least one of these mismatches is sampled with high 
probability. This completes the proof.
\end{proof}

The next step is to design a procedure which distinguishes between $\sum_{i=0}^m\ED(X_i,Y_i)\le k$
and $\sum_{i=0}^m\ED(X_i,Y_i)\ge (1+\eps)k$. 
Our approach relies on the Chernoff bound: we apply \cref{lem:simple} to determine $\ED(X_i,Y_i)$
for a small sample of indices $i$, and then we use these values to estimate the sum $\sum_{i=0}^m\ED(X_i,Y_i)$.
\begin{lemma}\label{lem:sum}
There exists an algorithm that, given strings $X_0,\ldots,X_m,Y_0,\ldots,Y_m$, and a real parameter $0<\eps<1$, returns
  YES if $\sum_{i=0}^m\ED(X_i,Y_i)\le k$, and returns
  NO if $\sum_{i=0}^m\ED(X_i,Y_i)\ge (1+\eps)k$.
The algorithm succeeds with high probability, and its running time is $\Ohtilde(qk+k^2+\frac{n}{\eps^2(k+1)})$,
where $q=\max_{i=0}^m |X_i|$ and $n=\sum_{i=0}^m(|X_i|+|Y_i|)$.
\end{lemma}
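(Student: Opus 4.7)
The plan is to estimate $D := \sum_{i=0}^m \ED(X_i,Y_i)$ by sampling over the phrase indices and appealing to \cref{lem:simple} for the local edit-distance computations. I would draw a set $S \sub [0\dd m]$ by including each index independently with probability $p$ (to be chosen), and for every $i \in S$ invoke the algorithm of \cref{lem:simple} on $(X_i, Y_i)$ with threshold~$k$; the call returns either the exact value $\ED(X_i, Y_i)$ (via the Landau--Vishkin fallback) or a certificate that $\ED(X_i, Y_i) \le k$. Setting $V_i$ to the exact value in the first case and $V_i := 0$ in the second, the algorithm outputs YES iff the inverse-propensity estimator $\hat D := p^{-1}\sum_{i\in S} V_i$ lies below $(1+\eps/2)k$.

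For correctness, I would replace each $\ED(X_i, Y_i)$ by the truncation $Z_i := \min(\ED(X_i,Y_i), k) \in [0, k]$ and show that $\sum Z_i$ still separates the two cases. In YES, $\sum Z_i \le \sum \ED(X_i,Y_i) \le k$. In NO, either some phrase has $\ED_i > k$, in which case \cref{lem:simple} must fall back and a single such sampled phrase already witnesses the NO case with $V_i = k$, or every $\ED_i \le k$ so that $\sum Z_i = D \ge (1+\eps)k$. Because the $V_i$'s take values in $[0, k]$, Bernstein's inequality concentrates $\hat D$ around its mean to additive error $\eps k / 2$ with high probability whenever the effective sample size is $\Omega(\log n / \eps^2)$. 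The residual bias from certified phrases (where $V_i = 0$ but the true $\ED_i$ might be positive) would be bounded by observing that certification in \cref{lem:simple} requires the internal position-level sampling, at rate $\Thetatilde(1/(k+1))$, to miss every mismatch, an event whose probability decays exponentially in the Hamming distance.

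For the running time, each call to \cref{lem:simple} costs $\Ohtilde(|X_i|/(k+1))$ on the certification branch and $\Ohtilde(|X_i| + \ED(X_i,Y_i)^2)$ on the fallback branch. Summing the cheap branch over $i \in S$ gives $\Ohtilde(pn/(k+1))$, which is $\Ohtilde(n/(\eps^2(k+1)))$ for the appropriate choice of $p$ and matches the target sampling term. For the expensive branch, I would use the YES-case observation that at most $\Oh(k)$ phrases carry nonzero $\ED_i$ (since each contributes at least $1$ to $\sum \ED_i \le k$); summing $\Ohtilde(|X_i| + \ED_i^2)$ over them yields $\Ohtilde(qk + \sum \ED_i^2) = \Ohtilde(qk + k^2)$ via $\sum \ED_i^2 \le (\max \ED_i)(\sum \ED_i) \le k^2$, and this bound is not inflated by the sampling because exact-branch calls are charged to physical dirty phrases rather than to sampled copies of them.

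The hard part will be reconciling the Bernstein-mandated sample rate with the target budget $\Ohtilde(n/(\eps^2(k+1)))$: naive concentration asks for $p = \Theta(\log n / \eps^2)$, which would risk a factor-$k$ overhead on the exact-branch accounting above. Resolving this tension likely requires a two-stage strategy -- a cheap global scan at rate roughly $\Thetatilde(1/(\eps^2 k))$ to identify a short list of phrases that look dirty, followed by precise Landau--Vishkin calls only on that list -- together with a careful argument that the certification bias does not accumulate beyond $\eps k / 2$ across the sample.
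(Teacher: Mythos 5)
Your plan samples at the wrong granularity, and this causes two failures that you flag but do not resolve. First, the certification bias: you set $V_i := 0$ when \cref{lem:simple} (with fixed threshold $k$) certifies $\ED(X_i,Y_i)\le k$, and you claim the resulting bias is controlled because certification decays in the Hamming distance. That reasoning is incorrect---the certification guarantees $\ED(X_i,Y_i)\le k$, not $\ED(X_i,Y_i)=0$, and a same-length phrase with a single mismatch (so $\ED_i=1$) is certified with probability about $1-\Thetatilde(1/(k+1))$. On the NO instance consisting of $(1+\eps)k$ such phrases, nearly all are certified, $\hat D\approx 0$, and the algorithm answers YES; the truncation $Z_i$ does not change this. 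Second, the variance: even with exact $V_i=\ED(X_i,Y_i)$, a single phrase can carry all of $D$, giving $\sum_i V_i^2 = k^2$ in the YES case and $\mathrm{Var}(\hat D)=\frac{1-p}{p}\sum_i V_i^2\ge k^2/p$; to concentrate to $\pm\eps k/2$ one would need $p\gtrsim 1/\eps^2>1$, so phrase-level subsampling simply cannot work. A two-stage ``cheap scan then focus'' patch cannot escape this either, because the bad instance is a lone dirty phrase that you either fully read or miss.

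The paper kills both problems at once by changing the sampling unit. It writes $D=\sum_{i,j} r_{i,j}$ with $r_{i,j}=[\ED(X_i,Y_i)>j]$ over $j\in[0\dd |X_i|+|Y_i|)$, a sum of exactly $n$ Bernoulli indicators, and draws $r=\Thetatilde(n/(\eps^2 k))$ uniform samples $(i_t,j_t)$. Each sample is resolved by calling \cref{lem:simple} with the \emph{random} threshold $j_t$ (not $k$): a certification means $\ED(X_{i_t},Y_{i_t})\le j_t$, hence $r_{i_t,j_t}=0$ with exactly the right semantics, so no bias; and a fall-back returns the exact distance, which is cached and reused, with an early abort once the stored distances exceed $k$ or one fall-back runs too long. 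Since each $R_t$ is Bernoulli with mean $D/n$, multiplicative Chernoff cleanly separates $D\le k$ from $D\ge(1+\eps)k$ at sample size $\Thetatilde(n/(\eps^2 k))$. The per-sample certification cost is $\Ohtilde(|X_{i_t}|/(j_t+1))$, whose expectation over a uniform $j_t$ is $\Ohtilde(1)$, yielding $\Ohtilde(n/(\eps^2 k))$ in total for the cheap branch and, by the early-abort and convexity, $\Ohtilde(qk+k^2)$ for the fall-backs. The randomized threshold is the ingredient missing from your proposal: it eliminates both the certification bias and the variance blow-up in one stroke.
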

\begin{proof}
If $k=0$, then the algorithm naively checks if $X_i=Y_i$ for each $i$, which costs $\Oh(n)$ time.
In the following, we assume that $k>0$.

For $i\in [0\dd m]$ and $j\in [0\dd |X_i|+|Y_i|)$, let us define an indicator $r_{i,j}=[\ED(X_i,Y_i)>j]$.
Observe that $\sum_{i=0}^m\ED(X,Y)=\sum_{i=0}^m \sum_{j=0}^{|X_i| + |Y_i|-1} r_{i,j}$.
The algorithm samples independent random variables $R_1,\ldots,R_r$ distributed as a uniformly random among the $n$ terms $r_{i,j}$, where $r=\Thetatilde(\frac{n}{\eps^2 k})$ is sufficiently large,
and returns YES if and only if $\sum_{t=1}^r R_t \le (1+\frac{\eps}{2})\frac{rk}{n}$.

Before we provide implementation details, let us prove the correctness of this approach.
If $\sum_{i=0}^m\ED(X,Y)\le k$, then $\Exp[R_t]\le \frac{k}{n}$. Consequently, the multiplicative Chernoff bound
implies $\Pr[\sum_{t=1}^r R_t \ge (1+\frac{\eps}{2})\frac{rk}{n} ] \le \exp(-\frac{\eps^2rk}{12n})$.
Since $r=\Thetatilde(\frac{n}{\eps^2 k})$ is sufficiently large,
the complementary event holds with high probability.
Similarly, if $\sum_{i=0}^m\ED(X,Y)\ge (1+\eps)k$, then $\Exp[R_t]\ge \frac{(1+\eps)k}{n}$.
Consequently, the multiplicative Chernoff bound implies $\Pr[\sum_{t=1}^r R_t \le (1+\frac{\eps}{2})\frac{rk}{n} ] \le
\Pr[\sum_{t=1}^r R_t \le (1-\frac{\eps}{4})r\frac{(1+\eps)k}{n} ] 
\le \exp(-\frac{\eps^2r(1+\eps)k}{48n})$.
Since $r=\Thetatilde(\frac{n}{\eps^2 k})$ is sufficiently large,
the complementary event holds with high probability.
This finishes the correctness proof.

Evaluating each $R_t$ consists in drawing a term $r_{i_t,j_t}$ uniformly at random and testing
if $r_{i_t,j_t}=1$, that is, whether $\ED(X_{i_t},Y_{i_t})>j_t$. 
For this, the algorithm makes a call to \cref{lem:simple}.
If this call certifies that $\ED(X_{i_t},Y_{i_t})\le j_t$, then the algorithm sets $R_t=0$.
Otherwise, the call returns the exact distance $\ED(X_{i_t},Y_{i_t})>0$, and the algorithm sets $R_t=1$
if and only if $\ED(X_{i_t},Y_{i_t})>j_t$.
Moreover, the algorithm stores the distance $\ED(X_{i_t},Y_{i_t})$ so that if $i_{t'}=i_t$ for some $t'>t$, then
the algorithm uses the stored distance to evaluate $r_{i_{t'},j_{t'}}$ instead of calling \cref{lem:simple} again.
Whenever the sum of the stored distances exceeds $k$, the algorithm terminates and returns NO.
Similarly, a call to \cref{lem:simple} is terminated preemptively (and NO is returned)
if the call takes too much time, indicating that $\ED(X_{i_t},Y_{i_t})>k$.
Consequently, since the function $x\mapsto x^2$ is convex, the total running time of the calls to \cref{lem:simple} that compute $\ED(X_{i_t},Y_{i_t})$
is $\Ohtilde(qk+k^2)$. The total cost of the remaining calls
can be bounded by $\Ohtilde(\sum_{t=1}^r \frac{1}{j_t+1}|X_{i_t}|)$.
Since $\Exp[\frac1{j_t+1}|X_{i_t}| \mid {i_t = i}]\le \ln(|X_{i}|+|Y_i|)+1
\le \ln n + 1$, the total expected running time of these calls is $\Ohtilde(r)=\Ohtilde(\frac{n}{k})$.
When $\sum_{t=1}^r \frac{1}{j_t+1}|X_{i_t}|$ exceeds twice the expectation, the whole algorithm is restarted;
with high probability, the number of restarts is $\Ohtilde(1)$.
\end{proof}

Finally, we obtain the main result of this section by combining \cref{lem:decompose} with \cref{lem:sum}.

\begin{theorem}\label{thm:ptas}
There exists an algorithm that, given strings $X$ and $Y$, integers $k$ and $\ell$
such that $\per(X[i\dd \allowbreak i+\ell))>2k$ for each $i\in [0\dd |X|-\ell]$,
and a real parameter $0<\eps<1$, returns
  YES if $\ED(X,Y)\le k$, and
   NO if $\ED(X,Y)\ge (1+\eps)k$.
With high probability,the algorithm is correct and its running time is $\Ohtilde(\frac{1}{\eps^2(k+1)}|X|+k^2 \ell)$.
\end{theorem}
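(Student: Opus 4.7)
My plan is to compose \cref{lem:decompose} (which produces a phrase decomposition) with \cref{lem:sum} (which estimates the sum of phrase-wise edit distances). As a preprocessing check, if $||X|-|Y||>\lfloor(1+\eps)k\rfloor$ then return NO immediately, since the edit distance is at least the length difference. Otherwise, invoke \cref{lem:decompose} on $(X,Y,k,\ell)$ with $\delta$ fixed to a small constant (say $\frac14$) to obtain factorizations $X=X_0\cdots X_m$ and $Y=Y_0\cdots Y_m$, and then call \cref{lem:sum} on these factorizations with the given $\eps$; output its answer. To boost to high probability, I would repeat this composed pipeline $\Theta(\log |X|)$ times with independent randomness, returning YES iff at least one iteration does.

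\textbf{Correctness.} If $\ED(X,Y)\le k$, then by \cref{lem:decompose} we have $\sum_{i=0}^m \ED(X_i,Y_i)=\ED(X,Y)\le k$ with constant probability, in which case \cref{lem:sum} reports YES with high probability; over $\Theta(\log |X|)$ independent trials, at least one reports YES with high probability. If $\ED(X,Y)\ge (1+\eps)k$, then $\sum_{i=0}^m\ED(X_i,Y_i)\ge\ED(X,Y)\ge(1+\eps)k$ holds unconditionally for \emph{every} decomposition, so \cref{lem:sum} reports NO in every trial with high probability; a union bound over the $\Theta(\log |X|)$ trials preserves this guarantee.

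\textbf{Running time.} With $\delta$ constant, each phrase has length $q=\Oh((k+1)\ell)$, and the total length is $n:=\sum_{i=0}^m(|X_i|+|Y_i|)=|X|+|Y|=\Oh(|X|+k)$ after the length pre-check. \cref{lem:decompose} runs in $\Oh(\frac{1}{k+1}|X|)$ time, and \cref{lem:sum} runs in $\Ohtilde(qk+k^2+\frac{n}{\eps^2(k+1)})=\Ohtilde(k^2\ell+\frac{1}{\eps^2(k+1)}|X|)$, which dominates the decomposition cost and absorbs the bare $k^2$ term into $k^2\ell$. The $\Theta(\log|X|)$ amplification factor vanishes inside $\Ohtilde$, yielding the claimed total $\Ohtilde(\frac{1}{\eps^2(k+1)}|X|+k^2\ell)$.

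\textbf{Main obstacle.} The delicate ingredient is the decomposition step: reading only $\Oh(\frac{|X|}{k+1})$ characters, one must carve $X$ and $Y$ into matching phrases whose edit distances sum to $\ED(X,Y)$ in the YES case. This is precisely what the periodicity hypothesis buys — each length-$\ell$ prefix of a phrase of $X$ occurs uniquely in the relevant window of $Y$ — and is already encapsulated in \cref{lem:decompose}. Granting that, the remainder is a textbook Chernoff-plus-Landau--Vishkin sampling argument packaged in \cref{lem:sum}, so the only remaining work is careful bookkeeping of $q$, $n$, and the amplification overhead.
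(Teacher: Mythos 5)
Your proof is correct and follows essentially the same route as the paper's: compose \cref{lem:decompose} with \cref{lem:sum} and amplify the YES-side success probability via $\Theta(\log|X|)$ independent repetitions. Your added pre-check on $||X|-|Y||$ (which the paper's write-up leaves implicit) cleanly justifies the bound $|Y|=\Oh(|X|+k)$ needed to state the running time in terms of $|X|$ alone, and the choice $\delta=\tfrac14$ versus the paper's $\tfrac12$ is immaterial.
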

\begin{proof}
The algorithm performs logarithmically many iterations. 
In each iteration, the algorithm calls \cref{lem:decompose} (with $\delta=\frac12$) to obtain decompositions
$X=X_0\cdots X_m$ and $Y=Y_0\cdots Y_m$.
Then, the phrases are processed using \cref{lem:sum}.
If this subroutine returns YES, then the algorithm also returns YES, because
$\ED(X,Y)\le \sum_{i=0}^{m} \ED(X_i,Y_i)<(1+\eps)k$ with high probability.
On the other hand, if each call to \cref{lem:sum} returns NO, then the algorithm returns NO.
If $\ED(X,Y)\le k$, then with high probability, $\ED(X,Y)= \sum_{i=0}^{m} \ED(X_i,Y_i)$ holds in at least one iteration; thus,  $\ED(X,Y)>k$ holds with high probability if the algorithm returns NO.

As for the running time, the calls to \cref{lem:decompose} cost $\Ohtilde(\frac{1}{k+1}|X|)$ time,
and the calls to \cref{lem:sum} cost $\Ohtilde((k+1)^2\ell +\frac{1}{\eps^2(k+1)}(|X|+|Y|))$ time
because $\max |X_i| = \Oh((k+1)\ell)$.
\end{proof}

\section{Random Walk over Samples}\label{sec:randomwalk}
In this section, we describe the sampled random walk process. This is used in \cref{sec:embed} to embed edit distance to Hamming distance in sublinear time.

\SetKwFunction{SampledRandomWalk}{SampledRandomWalk}
\begin{algorithm}
\renewcommand{\gets}{:=}
	\SetKwComment{Comment}{$\triangleright$\ \rm}{}
\caption{\protect\SampledRandomWalk{$X$, $Y$, $k$, $p$}}\label{alg:random-walk}
$x \gets 0$, $y \gets 0$, $c \gets 0$\Comment*{Initialization}
\While{$x < |X|$ \KwSty{and} $y < |Y|$}{\label{ln:while}
		Let $s \sim \textrm{Bin}(1,\frac{2\ln n}{p})$\Comment*{Biased coin}
		\If{$s=1$ \KwSty{and} $X[x]\neq Y[y]$}{\label{ln:if}
			Let $r \sim \textrm{Bin}(1,\frac12)$\Comment*{Unbiased coin}\label{ln:r}
			$x \gets x + r$\;
			$y \gets y + (1-r)$\;
			$c \gets c + 1$\;\label{ln:cost}
		}\Else{
			$x \gets x+1$\;\label{ln:x1}
			$y \gets y+1$\;\label{ln:y1}
		}
}
\Return{$c+\max(|X|-x,|Y|-y)\le 1296k^2$}\;
\end{algorithm}

Given strings $X,Y \in \Sigma^{\le n}$ and integer parameters $k \geq 0$, $p \ge 2\ln n$, 
\cref{alg:random-walk} scans $X$ and $Y$ from left to right.
The currently processed positions are denoted by $x$ and $y$,
respectively. At each iteration, the algorithm tosses a biased coin: with probability $\frac{2\ln n}{p}$,
it chooses to compare $X[x]$ with $Y[y]$ and, in case of a mismatch ($X[x]\ne Y[y]$),
it tosses an unbiased coin to decide whether to increment $x$ or $y$.
In the remaining cases, both $x$ and $y$ are incremented.
Once the algorithm completes scanning $X$ or $Y$,
it returns YES or NO depending on whether 
the number of mismatches encountered is at most $1296k^2-\max(|X|-x,|Y|-y)$.

\begin{theorem}\label{thm:random-walk}
Given strings $X, Y \in \Sigma^{\le n}$ and integers $k \geq 0$ and $2\ln n \le p \le n$, \cref{alg:random-walk} returns YES with probability at least $\frac23$ if $\ED(X,Y) \le k$,
and NO with probability at least $1-\frac{1}{n}$ if $\ED(X,Y) \ge (1296k^2+1)p$.
Moreover, \cref{alg:random-walk} can be implemented in $\Ohtilde(\frac{n}{p})$ time. 
\end{theorem}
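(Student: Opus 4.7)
The naive loop of \cref{alg:random-walk} runs in $\Oh(n)$ time, but the bulk of iterations are \emph{inert} ($s=0$) and merely advance both pointers by one. My plan is to batch these: draw the waiting time to the next $s=1$ event from a geometric distribution with success probability $\frac{2\ln n}{p}$, advance both pointers by that amount in $\Oh(1)$ time, and only then execute the $s=1$ iteration in $\Oh(1)$ time. The expected number of $s=1$ events in a run of at most $n$ iterations is $\Oh(\frac{n\log n}{p})$, yielding the claimed $\Ohtilde(n/p)$ running time.

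\paragraph{Completeness ($\ED(X,Y)\le k$).}
The key structural observation is that if we restrict attention to the subsequence of $s=1$ iterations, the coordinate $d:=y-x$ performs exactly the classical random deletion walk of~\cite{S14,CGK16}: intervening $s=0$ iterations advance both pointers simultaneously and hence leave $d$ unchanged. I would therefore work with the \emph{continuation} of \cref{alg:random-walk} obtained by refusing to halt when one string is exhausted, and instead advancing the surviving pointer (accruing one walk action per step) until both strings are exhausted. The total walk count of this continuation equals $c+\max(|X|-x,|Y|-y)$, while its dynamics coincide with those analyzed in~\cite{S14}. Invoking the $\Oh(k^2)$ walk-length bound of~\cite{S14} with probability at least $\tfrac23$ then delivers the required inequality $c+\max(|X|-x,|Y|-y)\le 1296 k^2$, after choosing the leading constant generously.

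\paragraph{Soundness ($\ED(X,Y)\ge(1296k^2+1)p$).}
Every execution implicitly builds an alignment of $X$ and $Y$: each match action aligns $X[x]$ with $Y[y]$ at cost $[X[x]\ne Y[y]]$, each walk action costs one insertion/deletion, and the residual contributes $\max(|X|-x,|Y|-y)$ further edits. Writing $m$ for the number of match actions that landed on true mismatches, this alignment witnesses $\ED(X,Y)\le m+c+\max(|X|-x,|Y|-y)$. Combining with the hypothetical YES condition and the lower bound on $\ED(X,Y)$ forces $m\ge(1296k^2+1)p-1296k^2$. Conditionally on the total number $M:=m+c$ of mismatch-encounters, $c\sim\mathrm{Bin}(M,\frac{2\ln n}{p})$, since the coin $s$ at each mismatch-encounter is fresh and independent of the past. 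Hence $\Exp[c\mid M]\ge \frac{2m\ln n}{p}=\Omega(k^2\ln n)$, and a multiplicative Chernoff bound drives $c>1296k^2$ with probability at least $1-\frac1n$, contradicting the YES outcome. The edge case $k=0$ is handled separately: a YES outcome would require $c=0$, which given at least $p$ mismatch-encounters happens with probability at most $(1-\frac{2\ln n}{p})^p\le n^{-2}$.

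\paragraph{Main obstacle.}
Soundness reduces cleanly to independent biased coins at the mismatch-encounters. The subtler part is completeness, where the reduction to the walk of~\cite{S14} must carefully justify reinterpreting the residual as additional walk steps of an augmented dynamics, and then transport the constant-probability $\Oh(k^2)$ guarantee of~\cite{S14} to the concrete threshold $1296k^2$ that our decision rule actually uses.
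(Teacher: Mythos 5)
Your implementation argument matches the paper's (batch $s=0$ iterations via a geometric distribution). The other two parts have genuine gaps.

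\textbf{Completeness.} Your appeal to the \cite{S14} walk-length bound is too quick. The quantity that performs a hitting-time-controlled random walk in \cite{S14,CGK16} is not $d=y-x$ but the indel distance $D := \IDD(X[x\dd|X|),Y[y\dd|Y|))$ of the \emph{remaining suffixes}. In the sampled process, $s=0$ iterations do act on $D$ (they advance both pointers even across a mismatch), so it is not a priori clear the dynamics ``coincide'' with \cite{S14}. The paper closes exactly this gap with \cref{lem:d}: it proves $D$ is non-increasing on non-mismatch iterations (including missed mismatches), changes by $\pm 1$ on mismatch iterations with $\geq \tfrac12$ probability of decreasing, and hence can be coupled with a lazy simple random walk started at $2k \geq \IDD(X,Y)$ whose hitting time to $0$ dominates $c+\max(|X|-x,|Y|-y)$. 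You correctly flag this as the subtle step, but you have not supplied the missing monotonicity lemma, and without it the transfer of the $\Oh(k^2)$ guarantee is not justified.

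\textbf{Soundness.} The claim that conditionally on $M := m+c$, $c \sim \mathrm{Bin}\bigl(M, \tfrac{2\ln n}{p}\bigr)$ is \emph{false}, and this breaks your Chernoff step. The number of mismatch-encounters $M$ is determined by the coin outcomes (each coin choice alters the trajectory through $(x,y)$ space and hence which positions are compared), so $M$ is not a stopping time; conditioning on $M$ correlates the coin tosses. Concretely, for $X=\mathtt{ab}$, $Y=\mathtt{ba}$, the event $\{M=1\}$ forces $c=1$ deterministically, while the event $\{M=2\}$ gives $c\sim\mathrm{Bin}(1,\tfrac{2\ln n}{p})$ -- neither is $\mathrm{Bin}(M,\cdot)$. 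The paper avoids this by conditioning only on the past: \cref{lem:miss} shows that from any state, the probability of missing the next $p$ mismatches \emph{before detecting one} is at most $n^{-2}$, because between consecutive detected mismatches the trajectory is deterministic and the mismatch positions are fixed. A union bound over the at most $c+1 \le 2n$ detection-free blocks then gives at most $(p-1)(c+1)$ missed mismatches with probability $\ge 1-\tfrac1n$, yielding directly $\ED(X,Y) < (c+\max(|X|-x,|Y|-y)+1)p$. This one-directional bound, rather than your contradiction scheme conditioned on $M$, is what actually closes the soundness case, including $k=0$ for which your separate argument has the same conditioning issue.
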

\paragraph*{YES Case}
Recall that the \emph{indel distance} $\IDD(\cdot,\cdot)$ is defined so
that $\IDD(X,Y)$ is the minimum number of character insertions and deletions needed to transform $X$ to $Y$
(the cost of a character substitution is 2 in this setting),
and observe that $\ED(X,Y) \le \IDD(X,Y)\le 2\ED(X,Y)$.

We analyze how $D:=\IDD(X[x\dd |X|),Y[y\dd |Y|))$ changes throughout the execution of \cref{alg:random-walk}.
Let $D_0$ be the initial value of $D$ and let $D_i$ be the value of $D$ after the $i$th iteration
of the algorithm, where $i\in [1\dd t]$ and $t$ is the total number of iterations.
We say that iteration $i$ is a \emph{mismatch iteration} if the condition in \cref{ln:if} is satisfied.
The following lemma gathers properties of the values $D_0,D_1,\ldots,D_t$:
\begin{lemma}\label{lem:d}
We have $D_0 = \IDD(X,Y)\le 2\ED(X,Y)$. Moreover, the following holds for each iteration $i\in [1\dd t]$:
\begin{enumerate}[label={\rm(\alph*)}]
	\item\label{it:lb} $D_i$ is a non-negative integer,
	\item\label{it:nmis} If $i$ is not a mismatch iteration, then $D_i \le D_{i-1}$.

	\item\label{it:mis} If $i$ is a mismatch iteration, then $D_i = D_{i-1}-1$ or $D_i = D_{i-1}+1$,
	and $D_i = D_{i-1}-1$ holds for at least one of the two possible outcomes of $r$ in \cref{ln:r}.
	\item\label{it:0} If $D_{i-1}=0$, then $D_{i}=0$.
\end{enumerate}
\end{lemma}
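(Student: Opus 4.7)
My plan is to analyze $D_i$ through the identity $\IDD(A, B) = |A| + |B| - 2\,\mathrm{LCS}(A, B)$, which converts statements about indel distance into statements about the longest common subsequence of the relevant suffixes. Writing $f(x, y) := \IDD(X[x\dd|X|), Y[y\dd|Y|))$, the value $D_i$ is simply $f$ evaluated at the values of $(x, y)$ after iteration $i$. The opening identity $D_0 = \IDD(X, Y)$ comes from the initialization $x = y = 0$, and $\IDD(X, Y) \le 2\ED(X, Y)$ holds because any substitution can be simulated by one deletion plus one insertion.

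Part~\ref{it:lb} is immediate because $D_i$ is always an indel distance. For parts~\ref{it:nmis} and~\ref{it:mis}, the underlying mechanism is the same: when $x$, $y$, or both are advanced by one, the LCS of the corresponding pair of suffixes changes by at most one. A routine calculation with the formula above then gives $f(x{+}1, y{+}1) - f(x, y) \in \{-2, 0\}$ and $f(x{+}1, y) - f(x, y),\, f(x, y{+}1) - f(x, y) \in \{-1, +1\}$. The first relation gives part~\ref{it:nmis} directly, since every non-mismatch iteration advances both pointers; the second gives the $\pm 1$ behavior in part~\ref{it:mis}, since a mismatch iteration moves exactly one of $x, y$ according to the unbiased coin $r$.

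The only step requiring more than routine bookkeeping is the second half of part~\ref{it:mis}: ruling out the case where \emph{both} of the two possible mismatch moves shift $D$ by $+1$. I plan to handle this with an exchange argument on an optimal alignment. Fix an indel-optimal pairing of $X[x\dd|X|)$ with $Y[y\dd|Y|)$; since $X[x] \ne Y[y]$, this alignment cannot pair $X[x]$ with $Y[y]$. Consequently, either $X[x]$ is unpaired, in which case deleting it exhibits an alignment of $X[x{+}1\dd|X|)$ with $Y[y\dd|Y|)$ of cost $D_{i-1} - 1$, so $f(x{+}1, y) \le D_{i-1} - 1$; or $Y[y]$ is unpaired, and symmetrically $f(x, y{+}1) \le D_{i-1} - 1$. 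Combined with the $\pm 1$ bound, at least one of the two outcomes of $r$ achieves exactly $D_i = D_{i-1} - 1$.

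Finally, part~\ref{it:0} is essentially by inspection: $D_{i-1} = 0$ forces $X[x\dd|X|) = Y[y\dd|Y|)$, so if iteration $i$ executes at all then $X[x] = Y[y]$; the inner conditional is therefore skipped, both pointers advance in lockstep, and the remaining suffixes stay equal, yielding $D_i = 0$.
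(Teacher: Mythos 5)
Your proof is correct. The key difference from the paper's proof is that you route the local quantitative bounds through the identity $\IDD(A,B)=|A|+|B|-2\,\mathrm{LCS}(A,B)$: advancing either pointer changes the LCS of the suffixes by $0$ or $1$, which immediately gives $f(x{+}1,y)-f(x,y)\in\{-1,+1\}$ and $f(x{+}1,y{+}1)-f(x,y)\in\{-2,0\}$, yielding part~(b) and the $\pm 1$ behavior in part~(c) from a single clean framework. The paper instead proves part~(b) by a direct surgery on an optimal indel alignment (discard $X[x]$, $Y[y]$, and whatever they were matched to) and asserts the ``changes by exactly one'' fact in part~(c) without the parity explanation your LCS formula makes transparent. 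For the decisive step---showing that at least one of the two mismatch moves decrements $D$---your exchange argument (non-crossing matchings force at least one of $X[x]$, $Y[y]$ to be unpaired in any optimal alignment, since they differ) is the same idea as the paper's observation that every optimal alignment must delete one of the two. Parts~(a) and~(d) are handled identically. Overall, your LCS framing is a bit more systematic and makes the $\{-2,0\}$ versus $\{-1,+1\}$ dichotomy feel less ad hoc, at the modest cost of invoking the $\IDD$--LCS identity; either version is complete and rigorous.
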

\begin{proof}
Property~\ref{it:lb} is clear from the definition of $D$.

As for Property~\ref{it:nmis}, let us consider an optimal indel-distance alignment resulting in $D_{i-1}=\IDD(X[x\dd |X|),Y[y\dd |Y|))$
at the beginning of iteration $i$, and transform it to an alignment 
between $X[x+1\dd |X|)$ and $Y[y+1\dd |Y|)$ by discarding $X[x]$ and $Y[y]$
and deleting characters matched with $X[x]$ or $Y[y]$, if any.
Note that a character $X[x']$ with $x'>x$ is deleted only 
if the original alignment deletes $Y[y]$, and a character $Y[y']$ with $y'>y$ is deleted only if the original alignment deletes $X[x]$.
Hence, the alignment cost does not increase and $D_{i} \le D_{i-1}$.

As for Property~\ref{it:mis}, observe that incrementing $x$ or $y$ changes the value of $D$ by exactly~1.
Since $X[x]\ne Y[y]$ holds at the beginning of iteration $i$, every optimal alignment between $X[x\dd |X|)$ and $Y[y\dd |Y|)$ deletes $X[x]$ or $Y[y]$. Incrementing $x$ or $y$, respectively,
then results in $D_{i}=D_{i-1}-1$.

As for Property~\ref{it:0}, we note that once $X[x\dd |X|)=Y[y\dd |Y|)$,
no subsequent iteration will be a mismatch iteration.
\end{proof}

Now, consider a 1-dimensional random walk $(W_0)_{i\ge 0}$ that starts with $W_0 = 2k$
and moves 1 unit up or down at every step with equal probability $\frac12$. 
Let us couple this random walk
with the execution of \cref{alg:random-walk}.
Let $i_1,\ldots,i_c$ be the mismatch iterations.
For each $j\in [1\dd c]$ such that exactly one choice of $r$ at iteration $i_{j}$ results
in decrementing $D$, we require that $W_{j}=W_{j-1}-1$ if and only if $D_{i_j}=D_{i_j-1}-1$.
Otherwise, we keep $W_j-W_{j-1}$ independent of the execution.
As each coin toss in \cref{alg:random-walk} uses fresh randomness, 
the steps of the random walk remain unbiased and independent from each other.

Now, \cref{lem:d} implies that $D_0 \le W_0$, that  $D_{i_j} \le W_j$ holds for $j\in [1\dd c]$,
and that $D_t\le W_c$ holds upon termination of \cref{alg:random-walk}.
In particular, the hitting time $T = \min\{j : W_j = 0\}$ satisfies $T \ge c+W_c \ge c+D_t$.
However, as proved in~\cite[Theorem 2.17]{Levin2017}, $\Pr[T \le N]\ge 1-\frac{12k}{\sqrt{N}}$.
Thus, $\Pr[c+D_t \le 1296k^2]\ge 1-\frac{12k}{\sqrt{1296k^2}}=\frac23$.
Since $D_t= \max(|X|-x,|Y|-y)$,
this completes the proof of the YES case.

\paragraph*{NO Case}
If $s=0$ and $X[x]\ne Y[y]$ holds at the beginning of some iteration of \cref{alg:random-walk}, we say the algorithm \emph{misses} the mismatch between $X[x]$ and $Y[y]$.
Let us bound probability of missing many mismatches in a row.

\begin{lemma}\label{lem:miss}
Consider the values $x,y$ at the beginning of iteration $i$ of \cref{alg:random-walk}.
Conditioned on any random choices made prior to iteration $i$,
the probability that \cref{alg:random-walk} misses the leftmost $p$ mismatches between
$X[x\dd |X|)$ and $Y[y\dd |Y|)$ is at most $n^{-2}$. 
\end{lemma}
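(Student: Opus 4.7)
The plan is to reduce the event ``all $p$ leftmost mismatches are missed'' to a product of $p$ independent biased-coin events and then apply the standard bound $1-x\le e^{-x}$.

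The key structural observation is that whenever the condition in Line~\ref{ln:if} fails (either because $s=0$ or because $X[x]=Y[y]$), Lines~\ref{ln:x1}--\ref{ln:y1} increment both $x$ and $y$, so the alignment offset $y-x$ is preserved. Consequently, if the first $\ell-1$ mismatches in $X[x\dd |X|)$ versus $Y[y\dd |Y|)$ (as seen from the state at the start of iteration $i$) have all been missed, then the algorithm has never entered a mismatch iteration in between and still carries the same offset. In particular, since the $\ell$-th leftmost mismatch lies strictly inside these suffixes, the algorithm is guaranteed to reach its position with the while-loop in Line~\ref{ln:while} still running.

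The next step is probabilistic. At the iteration when the $\ell$-th mismatch is reached, the current positions $x', y'$ satisfy $X[x']\ne Y[y']$, so by Line~\ref{ln:if} this iteration fails to be a mismatch iteration if and only if the biased coin $s$ comes up $0$, an event of probability $1-\frac{2\ln n}{p}$. Crucially, this coin is drawn with fresh randomness, independent of all random choices made before iteration $i$ and of all coins tossed at the intermediate iterations between $i$ and the current one. Hence, conditioning on the prior choices and on having missed the previous $\ell-1$ mismatches, the conditional probability of missing the $\ell$-th mismatch is exactly $1-\frac{2\ln n}{p}$.

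Iterating this argument for $\ell=1,\ldots,p$, the probability that all $p$ leftmost mismatches are missed factors as a product of $p$ independent conditional probabilities, yielding a bound of $\bigl(1-\tfrac{2\ln n}{p}\bigr)^p \le \exp(-2\ln n) = n^{-2}$, as claimed. I do not expect any serious obstacle here; the only subtlety is formalizing the conditioning so that the successive coin tosses are genuinely independent, together with the offset-preservation argument that ensures each successive mismatch is reached before the algorithm terminates.
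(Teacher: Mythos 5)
Your proposal is correct and follows essentially the same approach as the paper's proof: both observe that, while mismatches are being missed, the offset $y-x$ is preserved and the algorithm scans forward, so the first $p$ mismatches are each encountered and independently missed with probability $1-\frac{2\ln n}{p}$, giving the bound $\bigl(1-\frac{2\ln n}{p}\bigr)^p \le n^{-2}$. Your version simply spells out the offset-preservation and independence details that the paper leaves implicit.
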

\begin{proof}
Prior to detecting any mismatch between $X[x\dd |X|)$ and $Y[y\dd |Y|)$,
the algorithm scans these strings from left to right, comparing the characters
at positions sampled independently with rate $\frac{2\ln n}{p}$.
Hence, the probability of missing the first $p$ mismatches is $(1-\frac{2\ln n}{p})^p
\le n^{-2}$.
\end{proof}

Now, observe that an execution of \cref{alg:random-walk}
yields an edit-distance alignment between $X$ and $Y$:
consider values of $x$ and $y$ at an iteration $i$ of the algorithm.
If $i$ is a mismatch iteration, then $X[x]$ or $Y[y]$ is deleted depending on whether the algorithm increments $x$ or $y$. Otherwise, $X[x]$ is aligned against $Y[y]$ (which might be a substitution).
Finally, all $\max(|X|-x,|Y|-y)$ characters remaining in $X[x\dd |X|)$ or $Y[y\dd |Y|)$ after the last iteration $t$ are deleted.
The total number of deletions is thus $c+\max(|X|-x,|Y|-y)$,
and every substitution corresponds to a missed mismatch.
By \cref{lem:miss},
for every block of subsequent non-mismatch iterations, with probability at least $1-\frac{1}{n^2}$,
there are at most $p-1$ missed mismatches.
Overall, with probability at least $1-\frac{1}{n}$, there are at most $(p-1)\allowbreak(c+1)$ missed mismatches,
and $\ED(X,Y) < {(c+\max(|X|-x,|Y|-y)+1)p}$. Hence, the algorithm returns NO if $\ED(X,Y)\ge (1296k^2+1)p$.

\paragraph*{Efficient Implementation}
Finally, we observe that iterations with $s=0$ do not need to be executed explicitly:
it suffices to repeat the following process: draw (from a geometric distribution $\mathrm{Geo}(0,\frac{2\ln n}{p})$)
the number $\delta$ of subsequent iterations with $s=0$, increase both $x$ and $y$ by $\delta$,
and then execute a single iteration with $s=1$.
The total number of iterations is at most $|X|+|Y|\le 2n$, 
and the number of iterations with $s=1$ is $\Oh(\frac{n\ln n}{p})$ with high probability.

This completes the proof of Theorem~\ref{thm:random-walk}.\qed%
 
 \section{Embedding Edit Distance to Hamming Distance}\label{sec:embed}

A randomized embedding of edit distance to Hamming distance is given by a function $f$ such that
$\HAM(f(X,R), f(Y,R)) \approx \ED(X,Y)$ holds with good probability over the randomness $R$
for any two strings $X$ and $Y$.  
 Chakraborty, Goldenberg, and Koucký~\cite{CGK16} gave one such randomized embedding:
 \begin{theorem*}[{\cite[Theorem 1]{CGK16}}]
 For every integer $n \ge 1$, there is an integer $\ell=\Oh(\log{n})$ and a function $f: \{\mathtt{0},\mathtt{1}\}^n \times \{\mathtt{0},\mathtt{1}\}^{\ell} \rightarrow \{\mathtt{0},\mathtt{1}\}^{3n}$ such that, for every $X, Y \in \{\mathtt{0},\mathtt{1}\}^n$,
 \begin{align*}
 \tfrac{1}{2}\ED(X,Y) \leq \HAM(f(X,R), f(Y,R)) \leq \Oh(\ED^2(X,Y))
 \end{align*}
 holds with probability at least $\frac{2}{3}-e^{-\Omega(n)}$ over a uniformly random choice of $R \in \{\mathtt{0},\mathtt{1}\}^{\ell}$. Moreover, $f$ can be evaluated in linear time.
 \end{theorem*}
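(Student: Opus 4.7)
The plan is to construct $f$ by a one-dimensional random walk driven by hash values, essentially the ancestor of the sampled walk analyzed in \cref{sec:randomwalk}. Interpret the seed $R$ as specifying, for each pair $(i,c) \in [1\dd 3n] \times \{\mathtt{0},\mathtt{1}\}$, a bit $h_i(c)$; to keep $\ell = \Oh(\log n)$, these bits are drawn from an $\Oh(\log n)$-wise independent family rather than being fully independent. The embedding maintains a pointer $x$ initialized at $0$; at step $i = 1,\ldots,3n$, it writes $X[x]$ (or a dedicated symbol $\bot$ when $x \ge n$) to position $i$ of $f(X,R)$ and then advances $x \gets x + h_i(X[x])$. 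When $Y$ is processed against the same $R$, its pointer $y_i$ can be tracked in parallel, and the Hamming distance of the two outputs is exactly the number of indices on which $X[x_i] \ne Y[y_i]$ (treating $\bot$ as a fresh symbol). Linear-time evaluation is immediate from the construction.

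The key structural observation is that the shift $z_i := y_i - x_i$ performs a lazy $\pm 1$ walk on $\mathbb{Z}$ driven by ``mismatch iterations'': when $X[x_i] = Y[y_i]$, the two pointers consume the same hash bit, so $z_{i+1} = z_i$; otherwise $h_i(X[x_i])$ and $h_i(Y[y_i])$ are independent, and $z_{i+1} - z_i \in \{-1,0,+1\}$ with probabilities $\tfrac14,\tfrac12,\tfrac14$. Consequently, the total number of mismatch iterations is what drives $\HAM(f(X,R),f(Y,R))$, up to a correction from the tail positions (those beyond the first step at which one of the pointers reaches the end of its string).

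For the lower bound $\HAM(f(X,R),f(Y,R)) \ge \tfrac12\ED(X,Y)$, I would read off an explicit edit alignment from the parallel execution: iterations in which both pointers advance on a matching character cost nothing, every mismatch iteration contributes at most two indel operations (one for each of $X$ and $Y$), and the tail in which one string is exhausted first accounts for one indel per surplus character. Summing gives $\ED(X,Y) \le 2\,\HAM(f(X,R),f(Y,R))$. This step is deterministic and holds for every $R$.

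For the upper bound $\HAM(f(X,R),f(Y,R)) = \Oh(\ED^2(X,Y))$, I would couple the execution with a fixed optimal alignment of $X$ and $Y$ of cost $k = \ED(X,Y)$, and introduce a potential $\Phi_i$ measuring how far $y_i$ deviates from the position to which $X[x_i]$ is aligned. Across the $k$ edits $\Phi$ can grow by at most $1$ each, and between edits the pointers traverse regions where $X$ and $Y$ agree letter by letter, during which $\Phi$ is driven back to $0$ by the lazy $\pm 1$ walk. Standard hitting-time estimates for the simple random walk on $\mathbb{Z}$ (returning from displacement $d$ takes $\Oh(d^2)$ mismatch iterations with constant probability) yield $\Oh(k^2)$ mismatch iterations in total, and hence $\HAM = \Oh(k^2)$ with probability at least $\tfrac23$. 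The residual $e^{-\Omega(n)}$ term arises from a Chernoff bound ensuring that the $3n$-step budget lets both pointers reach the end, since $x_i + y_i$ advances with probability $\ge \tfrac12$ per step. The main obstacle will be running the potential argument cleanly across consecutive edits --- bounding how the ``residual displacement'' left by one edit interacts with the next --- and verifying that the random-walk hitting-time and Chernoff estimates survive under merely $\Oh(\log n)$-wise independence of the hash bits, which is what allows the seed length $\ell = \Oh(\log n)$.
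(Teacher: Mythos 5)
The paper does not prove this statement; it is cited verbatim as \cite[Theorem~1]{CGK16}, and the closest internal material is the algorithm description at the start of \cref{sec:embed} together with the random-walk analysis of \cref{sec:randomwalk}. Your construction and analysis outline match that material closely: the pointer update (advance $x$ by $h_i(X[x])$ after outputting $X[x]$) is exactly how the paper describes CGK, the lower bound via an alignment read off the joint trace (at most two indels per mismatch iteration) is the standard argument, and the upper bound via coupling mismatch iterations with a one-dimensional $\pm 1$ walk and the hitting-time tail bound $\Pr[T>N]\le\Oh(k/\sqrt{N})$ is precisely the ingredient the paper itself uses in its YES-case analysis (\cref{lem:d} together with~\cite[Theorem~2.17]{Levin2017}). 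Two small quibbles: your lower bound is not ``deterministic for every $R$''; if a walk fails to exhaust its string within $3n$ steps, the leftover suffix contributes to $\ED$ without being reflected in $\HAM$, so even that direction already needs the Chernoff event of probability $1-e^{-\Omega(n)}$. Also, appending a fresh symbol $\bot$ takes the codomain outside $\{\mathtt{0},\mathtt{1}\}^{3n}$; the paper's description (and \cref{thm:embed}) pads with $\mathtt{0}$'s instead.

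The genuine gap is the seed-length reduction. You propose $\Oh(\log n)$-wise independent hash bits and rightly flag doubt about whether the hitting-time estimate survives. It does not, by any argument you have sketched: the event ``the walk started near $\Theta(k)$ hits $0$ within $N=\Theta(k^2)$ steps'' depends jointly on $\Theta(k^2)$ increments, which for $k$ polynomial in $n$ is far beyond the reach of $\Oh(\log n)$-wise independence, and the reflection-principle / optional-stopping proofs behind $\Pr[T>N]\le\Oh(k/\sqrt{N})$ genuinely use full independence of the steps. Your Chernoff step degrades as well: limited independence yields only polynomial, not $e^{-\Omega(n)}$, failure probability. The correct tool --- and the one the paper attributes to CGK --- is Nisan's pseudorandom generator~\cite{Nisan1992}: the embedding, and any comparison statistic of the two output strings, is a one-pass low-space streaming computation, so Nisan's generator fools it, and all the probabilistic estimates can be carried out under truly uniform randomness before invoking the PRG guarantee. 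Replacing $k$-wise independence with Nisan's PRG is the repair; the remainder of your outline is sound.
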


Their algorithm utilizes $3n$ hash functions $h_1,h_2,\ldots,h_{3n}$ mapping $\{\mathtt{0},\mathtt{1}\}$ to $\{0,1\}$. It scans $X$ sequentially, and if it is at $X[x]$ in iteration $i$, it appends $X[x]$ to the embedding and increments $x$ by $h_i(X[x])$. The latter can be viewed as tossing an unbiased coin and, depending on its outcome, either staying at $X[x]$ or moving to $X[{x+1}]$. The algorithm uses $\Oh(n)$ random bits, which can be reduced to $\Oh(\log n)$
using Nisan's pseudorandom number generator~\cite{Nisan1992}.

By utilizing random walk over samples, we provide the first sublinear-time randomized embedding from edit to Hamming distance. 
Given a parameter $p\ge 2\ln n$, we sample $S\sub [1\dd 3n]$, with each index $i\in [1\dd 3n]$ contained in $S$ independently with probability $\frac{2\ln n}{p}$.
We then independently draw $|S|$ uniformly random bijections $h_1, h_2,\ldots,h_{|S|} : \{\mathtt{0},\mathtt{1}\}\to \{0,1\}$.
The shared randomness $R$ consists of $S$ and $h_1, h_2,\ldots, h_{|S|}$. 
We prove the following theorem.

\begin{theorem}\label{thm:embed}
 For every integer $n \ge 1$ and $p \ge 2\ln n$, there is an integer $\ell=\Oh(\log n)$ and a function $f: \{\mathtt{0},\mathtt{1}\}^n \times \{\mathtt{0},\mathtt{1}\}^{\ell} \rightarrow \{\mathtt{0},\mathtt{1}\}^{\Ohtilde(\frac{n}{p})}$ such that, for every $X, Y \in \{\mathtt{0},\mathtt{1}\}^n$,
 \[
 \tfrac{\ED(X,Y)-p+1}{p+1} \leq \HAM(f(X,R), f(Y,R)) \leq \Oh(\ED^2(X,Y))
\]
holds with probability at least $\frac{2}{3}-n^{-\Omega(1)}$ over a uniformly random choice of $R \in \{\mathtt{0},\mathtt{1}\}^{\ell}$. Moreover, $f$ can be evaluated in $\Ohtilde(\frac{n}{p})$ time.
 \end{theorem}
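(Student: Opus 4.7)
The plan is to define $f(X, R)$ by simulating, for $T = 3n$ iterations, a one-sided version of the modified random walk from \cref{sec:randomwalk}: at each iteration $i \in [1 \dd T]$, if $i \in S$, then $X[x]$ (or a special padding symbol $\bot$ when $x \ge n$) is appended to the output and $x$ is updated to $x + h_i(X[x])$; if $i \notin S$, then $x$ is advanced by $1$ without producing output. Running the same procedure on $Y$ with the identical shared randomness $R$ yields $f(Y,R)$ of length $|S| = \Ohtilde(\tfrac{n}{p})$. The key observation is that $R$ couples the two executions exactly to the modified random walk of \cref{sec:randomwalk}: at any $i \in S$, the bijection property of $h_i$ ensures that whenever $X[x] = Y[y]$ both pointers advance or stall in unison, whereas whenever $X[x] \neq Y[y]$ exactly one of the two pointers advances. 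Hence $\HAM(f(X,R),f(Y,R))$ equals the number of $i \in S$ at which $X[x_i] \neq Y[y_i]$, counting also the tail iterations past $t^* := \min(t_X, t_Y)$ (where $t_X, t_Y$ denote the first iterations at which $x$ or $y$, respectively, reach $n$), since $\bot$ differs from every character in $\{\mathtt{0}, \mathtt{1}\}$.

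For the YES direction, I invoke \cref{thm:random-walk}: when $\ED(X,Y) \le k$, the count $c$ of $S$-mismatches before $t^*$ satisfies $c + \max(|X|-x,|Y|-y) \le 1296\,k^2$ with probability at least $\tfrac{2}{3}$. After $t^*$, the lagging pointer must advance at most $\max(|X|-x,|Y|-y) = \Oh(k^2)$ more times; a Chernoff bound shows that the total length of the tail is $\Oh(k^2)$ with high probability, and therefore only $\Oh\bigl(\tfrac{k^2 \log n}{p}\bigr) = \Oh(k^2)$ of its iterations are sampled, using $p \ge 2\ln n$. Summing the two contributions yields $\HAM(f(X,R), f(Y,R)) = \Oh(k^2)$ with probability at least $\tfrac{2}{3} - n^{-\Omega(1)}$.

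For the NO direction, the joint execution naturally induces an alignment of $X$ against $Y$: a sampled mismatch ($i \in S$ with $X[x] \neq Y[y]$) contributes one deletion, an unsampled mismatch ($i \notin S$ with $X[x] \neq Y[y]$) contributes one substitution, matches and mutual stalls are free, and each advance of the sole remaining pointer in the tail contributes one deletion. I adapt \cref{lem:miss} together with a union bound over the at most $n$ natural restart points (the sampled mismatches together with the start of the walk) to show that, with probability $1 - n^{-\Omega(1)}$, no block of $p$ consecutive mismatches between sampled ones is entirely missed. Writing $h = \HAM(f(X,R),f(Y,R)) = c + h_{\text{tail}}$, this bounds the number of substitutions in the main phase by $(c+1)(p-1)$, whereas a Chernoff estimate on the sampling rate of the tail gives tail deletions at most $h_{\text{tail}} \cdot \tfrac{p}{\ln n}$. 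Combining these with the $c$ deletions from sampled mismatches yields $\ED(X,Y) \le c\,p + p - 1 + h_{\text{tail}} \cdot \tfrac{p}{\ln n} \le h\,p + p - 1$ (using $\ln n \ge 1$), which implies the stated inequality $h \ge \tfrac{\ED(X,Y) - p + 1}{p+1}$.

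The embedding is implemented in $\Ohtilde(\tfrac{n}{p})$ time by sampling the gaps between consecutive elements of $S$ from $\mathrm{Geo}(\tfrac{2\ln n}{p})$ and performing work only at the $|S| = \Ohtilde(\tfrac{n}{p})$ sampled positions, since $x$ evolves deterministically (by $+1$ per step) between samples. A naïve choice of $R$ requires $\Oh(n \log n)$ random bits, but since $f$ is computed by a one-pass algorithm with $\Oh(\log n)$ bits of state, its seed length can be reduced to $\ell = \Oh(\log n)$ via Nisan's pseudorandom generator~\cite{Nisan1992}, exactly as in~\cite{CGK16}. The main obstacle I anticipate is the tail accounting in the NO direction: to recover the $p+1$ denominator rather than a larger one, the $(|Y|-y_{t^*})$ tail deletions must be charged against the tail's $S$-mass $h_{\text{tail}}$ via a concentration bound, and this argument must be combined seamlessly with the \cref{lem:miss}-style union bound covering the main phase.
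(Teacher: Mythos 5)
Your high-level plan matches the paper's: simulate a one-sided sampled random walk on $X$ driven by shared randomness $(S,h_1,\ldots)$, couple the two executions to \cref{alg:random-walk}, and compress the seed via Nisan's PRG. However, your choice to pad with a fresh symbol $\bot$ is a genuine bug. You define $h_j:\{\mathtt{0},\mathtt{1}\}\to\{0,1\}$ as a random bijection, so $h_j(\bot)$ is undefined and the embedding cannot decide whether to advance $x$ once $x\ge n$. Any extension of $h_j$ to $\{\mathtt{0},\mathtt{1},\bot\}$ cannot remain a bijection into $\{0,1\}$, which destroys exactly the property you (correctly) identify as essential — that a detected mismatch advances exactly one of the two pointers — and it is this property that ties the embedding to the mismatch step of \cref{alg:random-walk} and hence to \cref{thm:random-walk}. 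The paper sidesteps both problems by padding with $\mathtt{0}$: it sets $X':=X\cdot\mathtt{0}^{3n}$, which stays over $\{\mathtt{0},\mathtt{1}\}$, and once both pointers cross $n$ (which happens within $3n$ iterations with high probability, since each pointer advances with probability at least $\tfrac12$ per step) all subsequent comparisons are $\mathtt{0}$ vs.\ $\mathtt{0}$, so the embedding's Hamming distance equals the walk's mismatch count $c$ on the padded strings.

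Your tail accounting is also both heavier than necessary and, in the NO direction, incorrect. Because you invoke \cref{thm:random-walk} on the \emph{unpadded} $X,Y$, you are forced to stitch in an ad-hoc analysis for the iterations past $t^*$; the paper instead invokes the walk analysis on $X'$ and $Y'$ themselves, so the walk runs until $\max(x,y)=|X'|$ and the tail is absorbed wholesale. Concretely, your plan to bound the $|Y|-y_{t^*}$ tail deletions by $h_{\text{tail}}\cdot\tfrac{p}{\ln n}$ via a Chernoff estimate fails whenever the tail is short: if it has fewer than $p$ iterations, then $h_{\text{tail}}=0$ with constant probability and the bound is vacuous, so you cannot recover the $p+1$ denominator this way. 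The correct observation is much simpler and purely deterministic: the gap $|x-y|$ changes by exactly $1$ at every detected mismatch and is preserved otherwise, so $|x-y|\le c$ throughout; since $|X'|=|Y'|$, the number of characters remaining at walk termination is $\max(|X'|-x,|Y'|-y)=|x-y|\le c$, and plugging this into the NO-case bound of \cref{sec:randomwalk} immediately gives $\ED(X,Y)\le 2c+(p-1)(c+1)=(p+1)c+p-1$, hence $c\ge\tfrac{\ED(X,Y)-p+1}{p+1}$, with no concentration argument at all.
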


 \cref{alg:embedding} provides the pseudocode of the embedding.

 \begin{algorithm}
	\SetKwFunction{SublinearEmbedding}{SublinearEmbedding}
	\caption{$\textsf{SublinearEmbedding}$($X, \langle S, h_1,\ldots,h_{|S|}\rangle$)}\label{alg:embedding}
	$Output :=\eps$, $j:=0$, $x := 0$, $X' := X\cdot \mathtt{0}^{3n}$\;\label{ln:init}
	\For{$i:=1$ \KwSty{to} $3n$}{
		\If{$i\in S$}{
      $Output[j] := X'[x]$\;
      $j := j+1$\;
			$x := x + h_{j}(X'[x])$\;
		}\lElse{$x := x + 1$}
	}
	\Return{$Output$}
	\end{algorithm}

\paragraph{Interpretation through \cref{alg:random-walk}} 
Consider running \cref{alg:random-walk} for $X'=X\mathtt{0}^{3n}$ and $Y'=Y\mathtt{0}^{3n}$,
modified as follows: if $s = 1$ and $X'[x]=Y'[y]$,
then we still toss the unbiased coin $r$ and, depending on the result, either increment both $x$ and $y$,
or none of them. Observe that this has no impact of the outcome of processing $(x,y)$:
ultimately, both $x$ and $y$ are incremented and the cost $c$ remains unchanged.
Now, let us further modify \cref{alg:random-walk} so that its execution uses $R$ as the source of randomness:
we use the event $\{i\in S\}$ for the $i$th toss of the biased coin~$s$
and the value $h_j(X'[x])$ for the $j$th toss of the unbiased coin~$r$
(which is now tossed whenever $s=1$). 
If $S$ was drawn $[1\dd \infty)$, this
would perfectly implement the coins. 
However, $S\sub [1\dd 3n]$, so the transformation is valid only for the first $3n$ iterations. 
Nevertheless, for each iteration, the probability of incrementing 
$x$ is $1-\frac{\ln n}{p}\ge \frac12$, so $x\ge n$ and, symmetrically, $y\ge n$ hold with high probability after iteration $3n$,
and then \cref{alg:random-walk} cannot detect any mismatch.
Hence, \cref{alg:embedding} with high probability simulates the treatment of $X'$ and $Y'$ by \cref{alg:random-walk}.
Moreover, $c$ in \cref{alg:random-walk}
corresponds to the number of iterations with $s=1$ and $X'[x]\ne Y'[y]$,
and this number of iterations is precisely $\HAM(f(X,R),f(Y,R))$.

\paragraph*{YES Case}
\cref{alg:random-walk} with $k=\ED(X,Y) = \ED(X',Y')$ returns YES with probability at least $\frac23$.
Thus, $\HAM(f(X,R),f(Y,R)) = c \le 1296k^2 = \Oh(\ED(X,Y)^2)$ with probability at least $\frac23$.

\paragraph*{NO Case} 
As proved in \cref{sec:randomwalk}, $\ED(X',Y')\le c+\max(|X'|-x,|Y'|-y)+(p-1)(c+1)$
holds with probability at least $1-\frac1n$. Due to $|X'|=|Y'|$ and $|x-y|\le c$, we
deduce $\ED(X,Y)=\ED(X',Y')\le 2c+(p-1)(c+1)=(p-1)+(p+1)c$, i.e., $\HAM(f(X,R),f(Y,R))= c \ge \frac{\ED(X,Y)-p+1}{p+1}$.

\paragraph*{Efficient implementation}
To complete the proof, we note that \cref{alg:embedding} can be implemented in $\Oh(|S|)$
time by batching the iterations $i$ with $i\notin S$ (for each such iteration, it suffices to increment~$i$ and~$x$).
Moreover, $|S|=\Oh(\frac{n\ln n}{p})$ holds with high probability.\qed

\bibliographystyle{plainurl}
\bibliography{references}

\begin{thebibliography}{10}

\bibitem{ADGIR03}
Alexandr Andoni, Michel Deza, Anupam Gupta, Piotr Indyk, and Sofya
  Raskhodnikova.
\newblock Lower bounds for embedding edit distance into normed spaces.
\newblock In {\em 14th Annual {ACM-SIAM} Symposium on Discrete Algorithms, SODA
  2003}, pages 523--526, 2003.
\newblock URL: \url{http://dl.acm.org/citation.cfm?id=644108.644196}.

\bibitem{AKO10}
Alexandr Andoni, Robert Krauthgamer, and Krzysztof Onak.
\newblock Polylogarithmic approximation for edit distance and the asymmetric
  query complexity.
\newblock In {\em 51st Annual {IEEE} Symposium on Foundations of Computer
  Science, {FOCS} 2010}, pages 377--386. {IEEE}, 2010.
\newblock \href {https://doi.org/10.1109/FOCS.2010.43}
  {\path{doi:10.1109/FOCS.2010.43}}.

\bibitem{AN10}
Alexandr Andoni and Huy~L. Nguyen.
\newblock Near-optimal sublinear time algorithms for ulam distance.
\newblock In Moses Charikar, editor, {\em 21st Annual {ACM-SIAM} Symposium on
  Discrete Algorithms, {SODA} 2010}, pages 76--86. {SIAM}, 2010.
\newblock \href {https://doi.org/10.1137/1.9781611973075.8}
  {\path{doi:10.1137/1.9781611973075.8}}.

\bibitem{AN20}
Alexandr Andoni and Negev~Shekel Nosatzki.
\newblock Edit distance in near-linear time: it's a constant factor.
\newblock In Sany Irani, editor, {\em 61st Annual {IEEE} Symposium on
  Foundations of Computer Science, {FOCS} 2020}. IEEE, 2020.
\newblock \href {https://doi.org/10.1109/FOCS46700.2020.00096}
  {\path{doi:10.1109/FOCS46700.2020.00096}}.

\bibitem{AO12}
Alexandr Andoni and Krzysztof Onak.
\newblock Approximating edit distance in near-linear time.
\newblock {\em {SIAM} Journal on Computing}, 41(6):1635--1648, 2012.
\newblock \href {https://doi.org/10.1137/090767182}
  {\path{doi:10.1137/090767182}}.

\bibitem{BI18}
Arturs Backurs and Piotr Indyk.
\newblock Edit distance cannot be computed in strongly subquadratic time
  (unless {SETH} is false).
\newblock {\em {SIAM} Journal on Computing}, 47(3):1087--1097, 2018.
\newblock \href {https://doi.org/10.1137/15M1053128}
  {\path{doi:10.1137/15M1053128}}.

\bibitem{BJKK04}
Ziv Bar{-}Yossef, T.~S. Jayram, Robert Krauthgamer, and Ravi Kumar.
\newblock Approximating edit distance efficiently.
\newblock In {\em 45th Annual {IEEE} Symposium on Foundations of Computer
  Science, {FOCS} 2004}, pages 550--559. {IEEE}, 2004.
\newblock \href {https://doi.org/10.1109/FOCS.2004.14}
  {\path{doi:10.1109/FOCS.2004.14}}.

\bibitem{BEKMRRS03}
Tugkan Batu, Funda Erg{\"{u}}n, Joe Kilian, Avner Magen, Sofya Raskhodnikova,
  Ronitt Rubinfeld, and Rahul Sami.
\newblock A sublinear algorithm for weakly approximating edit distance.
\newblock In Lawrence~L. Larmore and Michel~X. Goemans, editors, {\em 35th
  Annual {ACM} Symposium on Theory of Computing, STOC 2003}, pages 316--324.
  {ACM}, 2003.
\newblock \href {https://doi.org/10.1145/780542.780590}
  {\path{doi:10.1145/780542.780590}}.

\bibitem{BES06}
Tugkan Batu, Funda Erg{\"{u}}n, and S{\"{u}}leyman~Cenk Sahinalp.
\newblock Oblivious string embeddings and edit distance approximations.
\newblock In {\em 17th Annual {ACM-SIAM} Symposium on Discrete Algorithms,
  {SODA} 2006}, pages 792--801. {ACM} Press, 2006.
\newblock URL: \url{http://dl.acm.org/citation.cfm?id=1109557.1109644}.

\bibitem{BZ16}
Djamal Belazzougui and Qin Zhang.
\newblock Edit distance: Sketching, streaming, and document exchange.
\newblock In Irit Dinur, editor, {\em 57th Annual {IEEE} Symposium on
  Foundations of Computer Science, {FOCS} 2016}, pages 51--60. {IEEE}, 2016.
\newblock \href {https://doi.org/10.1109/FOCS.2016.15}
  {\path{doi:10.1109/FOCS.2016.15}}.

\bibitem{BEGHS18}
Mahdi Boroujeni, Soheil Ehsani, Mohammad Ghodsi, Mohammad~Taghi Hajiaghayi, and
  Saeed Seddighin.
\newblock Approximating edit distance in truly subquadratic time: Quantum and
  mapreduce.
\newblock In Artur Czumaj, editor, {\em 29th Annual {ACM-SIAM} Symposium on
  Discrete Algorithms, {SODA} 2018}, pages 1170--1189. {SIAM}, 2018.
\newblock \href {https://doi.org/10.1137/1.9781611975031.76}
  {\path{doi:10.1137/1.9781611975031.76}}.

\bibitem{BCR20}
Joshua Brakensiek, Moses Charikar, and Aviad Rubinstein.
\newblock A simple sublinear algorithm for gap edit distance, 2020.
\newblock \href {http://arxiv.org/abs/2007.14368} {\path{arXiv:2007.14368}}.

\bibitem{BGZ17}
Joshua Brakensiek, Venkatesan Guruswami, and Samuel Zbarsky.
\newblock Efficient low-redundancy codes for correcting multiple deletions.
\newblock {\em {IEEE} Transactions on Information Theory}, 64(5):3403--3410,
  2018.
\newblock \href {https://doi.org/10.1109/TIT.2017.2746566}
  {\path{doi:10.1109/TIT.2017.2746566}}.

\bibitem{BR20}
Joshua Brakensiek and Aviad Rubinstein.
\newblock Constant-factor approximation of near-linear edit distance in
  near-linear time.
\newblock In Konstantin Makarychev, Yury Makarychev, Madhur Tulsiani, Gautam
  Kamath, and Julia Chuzhoy, editors, {\em 52nd Annual {ACM} Symposium on
  Theory of Computing, {STOC} 2020}, pages 685--698. {ACM}, 2020.
\newblock \href {https://doi.org/10.1145/3357713.3384282}
  {\path{doi:10.1145/3357713.3384282}}.

\bibitem{BG95}
Dany Breslauer and Zvi Galil.
\newblock Finding all periods and initial palindromes of a string in parallel.
\newblock {\em Algorithmica}, 14(4):355--366, 1995.
\newblock \href {https://doi.org/10.1007/BF01294132}
  {\path{doi:10.1007/BF01294132}}.

\bibitem{CDGKS18}
Diptarka Chakraborty, Debarati Das, Elazar Goldenberg, Michal Kouck{\'{y}}, and
  Michael~E. Saks.
\newblock Approximating edit distance within constant factor in truly
  sub-quadratic time.
\newblock In Mikkel Thorup, editor, {\em 59th Annual {IEEE} Symposium on
  Foundations of Computer Science, {FOCS} 2018}, pages 979--990. {IEEE}, 2018.
\newblock \href {https://doi.org/10.1109/FOCS.2018.00096}
  {\path{doi:10.1109/FOCS.2018.00096}}.

\bibitem{CGK16}
Diptarka Chakraborty, Elazar Goldenberg, and Michal Kouck{\'{y}}.
\newblock Streaming algorithms for embedding and computing edit distance in the
  low distance regime.
\newblock In Daniel Wichs and Yishay Mansour, editors, {\em 48th Annual {ACM}
  Symposium on Theory of Computing, {STOC} 2016}, pages 712--725. {ACM}, 2016.
\newblock \href {https://doi.org/10.1145/2897518.2897577}
  {\path{doi:10.1145/2897518.2897577}}.

\bibitem{CGKK18}
Moses Charikar, Ofir Geri, Michael~P. Kim, and William Kuszmaul.
\newblock On estimating edit distance: Alignment, dimension reduction, and
  embeddings.
\newblock In Ioannis Chatzigiannakis, Christos Kaklamanis, D{\'{a}}niel Marx,
  and Donald Sannella, editors, {\em 45th International Colloquium on Automata,
  Languages, and Programming, {ICALP} 2018}, volume 107 of {\em LIPIcs}, pages
  34:1--34:14. Schloss Dagstuhl--Leibniz-Zentrum f{\"{u}}r Informatik, 2018.
\newblock \href {https://doi.org/10.4230/LIPIcs.ICALP.2018.34}
  {\path{doi:10.4230/LIPIcs.ICALP.2018.34}}.

\bibitem{Jewels}
Maxime Crochemore and Wojciech Rytter.
\newblock {\em Jewels of Stringology}.
\newblock World Scientific, 2003.
\newblock \href {https://doi.org/10.1142/4838} {\path{doi:10.1142/4838}}.

\bibitem{DKFPOS}
Arthur~L. Delcher, Simon Kasif, Robert~D. Fleischmann, Jeremy Peterson, Owen
  White, and Steven~L. Salzberg.
\newblock Alignment of whole genomes.
\newblock {\em Nucleic Acids Research}, 27(11):2369--2376, 1999.
\newblock \href {https://doi.org/10.1093/nar/27.11.2369}
  {\path{doi:10.1093/nar/27.11.2369}}.

\bibitem{FW65}
Nathan~J. Fine and Herbert~S. Wilf.
\newblock Uniqueness theorems for periodic functions.
\newblock {\em Proceedings of the American Mathematical Society},
  16(1):109--114, 1965.
\newblock \href {https://doi.org/10.1090/S0002-9939-1965-0174934-9}
  {\path{doi:10.1090/S0002-9939-1965-0174934-9}}.

\bibitem{GKS19}
Elazar Goldenberg, Robert Krauthgamer, and Barna Saha.
\newblock Sublinear algorithms for gap edit distance.
\newblock In David Zuckerman, editor, {\em 60th Annual {IEEE} Symposium on
  Foundations of Computer Science, {FOCS} 2019}, pages 1101--1120. {IEEE},
  2019.
\newblock \href {https://doi.org/10.1109/FOCS.2019.00070}
  {\path{doi:10.1109/FOCS.2019.00070}}.

\bibitem{GRS20}
Elazar Goldenberg, Aviad Rubinstein, and Barna Saha.
\newblock Does preprocessing help in fast sequence comparisons?
\newblock In Konstantin Makarychev, Yury Makarychev, Madhur Tulsiani, Gautam
  Kamath, and Julia Chuzhoy, editors, {\em 52nd Annual {ACM} Symposium on
  Theory of Computing, {STOC} 2020}, pages 657--670. {ACM}, 2020.
\newblock \href {https://doi.org/10.1145/3357713.3384300}
  {\path{doi:10.1145/3357713.3384300}}.

\bibitem{H19}
Bernhard Haeupler.
\newblock Optimal document exchange and new codes for insertions and deletions.
\newblock In David Zuckerman, editor, {\em 60th Annual {IEEE} Symposium on
  Foundations of Computer Science, {FOCS} 2019}, pages 334--347. {IEEE}, 2019.
\newblock \href {https://doi.org/10.1109/FOCS.2019.00029}
  {\path{doi:10.1109/FOCS.2019.00029}}.

\bibitem{IP01}
Russell Impagliazzo and Ramamohan Paturi.
\newblock On the complexity of $k$-{SAT}.
\newblock {\em Journal of Computer and System Sciences}, 62(2):367--375, 2001.
\newblock \href {https://doi.org/10.1006/jcss.2000.1727}
  {\path{doi:10.1006/jcss.2000.1727}}.

\bibitem{KS20}
Michal Kouck{\'{y}} and Michael~E. Saks.
\newblock Constant factor approximations to edit distance on far input pairs in
  nearly linear time.
\newblock In Konstantin Makarychev, Yury Makarychev, Madhur Tulsiani, Gautam
  Kamath, and Julia Chuzhoy, editors, {\em 52nd Annual {ACM} Symposium on
  Theory of Computing, {STOC} 2020}, pages 699--712. {ACM}, 2020.
\newblock \href {https://doi.org/10.1145/3357713.3384307}
  {\path{doi:10.1145/3357713.3384307}}.

\bibitem{LMS98}
Gad~M. Landau, Eugene~W. Myers, and Jeanette~P. Schmidt.
\newblock Incremental string comparison.
\newblock {\em {SIAM} Journal on Computing}, 27(2):557--582, 1998.
\newblock \href {https://doi.org/10.1137/S0097539794264810}
  {\path{doi:10.1137/S0097539794264810}}.

\bibitem{LV88}
Gad~M. Landau and Uzi Vishkin.
\newblock Fast string matching with $k$ differences.
\newblock {\em Journal of Computer and System Sciences}, 37(1):63--78, 1988.
\newblock \href {https://doi.org/10.1016/0022-0000(88)90045-1}
  {\path{doi:10.1016/0022-0000(88)90045-1}}.

\bibitem{Lev65}
Vladimir~I. Levenshtein.
\newblock Binary codes capable of correcting deletions, insertions and
  reversals.
\newblock {\em Doklady Akademii Nauk SSSR}, 163(4):845--848, 1965.
\newblock URL: \url{http://mi.mathnet.ru/eng/dan31411}.

\bibitem{Levin2017}
David Levin and Yuval Peres.
\newblock {\em Markov Chains and Mixing Times}.
\newblock American Mathematical Society, 2017.
\newblock \href {https://doi.org/10.1090/mbk/107} {\path{doi:10.1090/mbk/107}}.

\bibitem{MP70}
James~H. Morris, Jr. and Vaughan~R. Pratt.
\newblock A linear pattern-matching algorithm.
\newblock Technical Report~40, Department of Computer Science, University of
  California, Berkeley, 1970.

\bibitem{NSS17}
Timothy Naumovitz, Michael~E. Saks, and C.~Seshadhri.
\newblock Accurate and nearly optimal sublinear approximations to ulam
  distance.
\newblock In Philip~N. Klein, editor, {\em 28th Annual {ACM-SIAM} Symposium on
  Discrete Algorithms, {SODA} 2017}, pages 2012--2031. {SIAM}, 2017.
\newblock \href {https://doi.org/10.1137/1.9781611974782.131}
  {\path{doi:10.1137/1.9781611974782.131}}.

\bibitem{Nisan1992}
Noam Nisan.
\newblock Pseudorandom generators for space-bounded computation.
\newblock {\em Combinatorica}, 12(4):449--461, 1992.
\newblock \href {https://doi.org/10.1007/BF01305237}
  {\path{doi:10.1007/BF01305237}}.

\bibitem{OR07}
Rafail Ostrovsky and Yuval Rabani.
\newblock Low distortion embeddings for edit distance.
\newblock {\em Journal of the {ACM}}, 54(5):23, 2007.
\newblock \href {https://doi.org/10.1145/1284320.1284322}
  {\path{doi:10.1145/1284320.1284322}}.

\bibitem{S14}
Barna Saha.
\newblock The {D}yck language edit distance problem in near-linear time.
\newblock In {\em 55th Annual {IEEE} Symposium on Foundations of Computer
  Science, {FOCS} 2014}, pages 611--620. {IEEE}, 2014.
\newblock \href {https://doi.org/10.1109/FOCS.2014.71}
  {\path{doi:10.1109/FOCS.2014.71}}.

\bibitem{ZZ17}
Haoyu Zhang and Qin Zhang.
\newblock Embedjoin: Efficient edit similarity joins via embeddings.
\newblock In {\em 23rd {ACM} {SIGKDD} International Conference on Knowledge
  Discovery and Data Mining, KDD 2017}, pages 585--594. {ACM}, 2017.
\newblock \href {https://doi.org/10.1145/3097983.3098003}
  {\path{doi:10.1145/3097983.3098003}}.

\end{thebibliography}

\end{document}